\def\ps@pprintTitle{%
	\let\@oddhead\@empty
	\let\@evenhead\@empty
	\def\@oddfoot{}%
	\let\@evenfoot\@oddfoot}
\newtheorem{remark}{Remark}
\journal{...}
\newtheorem{definition}{Definition} 
\newtheorem{theorem}{Theorem}
\newtheorem{example}{Example}
\newtheorem{proposition}{Proposition}
\newcommand{\assign}{:=}
\newcommand{\supp}{\textnormal{supp}\,} 
\newcommand{\defeq}{\doteq} 
\newcommand{\re}{\mathbb{R}} 
\newcommand{\co}{{\mathbb{C}}}  
\newcommand{\ads}{\mathbb{A}\textnormal{d}\mathbb{S}_{d+1}} 
\newcommand{\pads}{\textnormal{P}\mathbb{A}\textnormal{d}\mathbb{S}_{d+1}} 
\newcommand{\npads}{\textnormal{P}\mathbb{A}\textnormal{d}\mathbb{S}} 
\newcommand{\man}{\mathcal{M}} 
\newcommand{\ie}{\textit{i.e.} }  
\newcommand{\iee}{\textit{i.e.}, }  
\newcommand{\cf}{\textit{cf.} }   
\newcommand{\uppmink}{{\mathbb{H}}^{d+1}}    
\newcommand{\nuppmink}{{\mathbb{H}}}    
\newcommand{\mink}{{{\mathbb{M}}^{d+1}}}    
\newcommand{\nmink}{{{\mathbb{M}}}}    
\newcommand{\microca}{{\mathcal{F}_{\mu c}(\man)}}  
\newcommand{\reg}{{\mathcal{F}_{reg}(\man)}}  
\newcommand{\loc}{{\mathcal{F}_{loc}(\man)}}  
\newcommand{\tensor}{\otimes}       
\newcommand{\test}{\mathcal{D}(\man)}   
\newcommand{\pertmicroca}{{\mathcal{F}_{\mu c}}\llbracket \hbar \rrbracket}  
\newcommand{\pertreg}{{\mathcal{F}_{reg}}\llbracket \hbar \rrbracket}  
\newcommand{\pertloc}{{\mathcal{F}_{loc}}\llbracket \hbar \rrbracket}  
\newcommand{\alg}{\mathcal{A}} 
\font\testa=ec-lmr10 at 12pt 
\newcommand{\nord}[1]{\,\vcentcolon\mathrel{#1}\vcentcolon} 
\providecommand{\vcentcolon}{\mathrel{\mathop{\textnormal{\testa{:}}}}}
\newcommand{\moll}{\mathsf{r}}       
\newcommand{\hyp}{{}_2F_1}    
\begin{document}


\begin{frontmatter}
		
		

		\title{On the interplay between boundary conditions\\ and the Lorentzian Wetterich equation}
		\author{Claudio Dappiaggi\footnote{claudio.dappiaggi@unipv.it}$^{\dagger\star}$}
        \author{Filippo Nava\footnote{filippo.nava02@universitadipavia.it}$^{\dagger}$}
		\author{Luca Sinibaldi\footnote{luca.sinibaldi01@universitadipavia.it}$^{\dagger\star}$}
		\address{$^{\dagger}$Dipartimento di Fisica, Universit\`a degli Studi di Pavia, Via Bassi, 6, 27100 Pavia, Italy}
		\address{$^{\star}$Istituto Nazionale di Fisica Nucleare -- Sezione di Pavia, Via Bassi 6, 27100 Pavia, Italy\\ and \\ Istituto Nazionale di Alta Matematica, Sezione di Pavia, via Ferrata 5, 27100 Pavia, Italia }

		\begin{abstract}
		In the framework of the functional renormalization group and of the perturbative, algebraic approach to quantum field theory (pAQFT), in \cite{mother} it has been derived a Lorentian version of a flow equation à la Wetterich, which can be used to study non linear, quantum scalar field theories on a globally hyperbolic spacetime. In this work we show that the realm of validity of this result can be extended to study interacting scalar field theories on globally hyperbolic manifolds with a timelike boundary. By considering the specific examples of half Minkowski spacetime and of the Poincaré patch of Anti-de Sitter, we show that the form of the Lorentzian Wetterich equation is strongly dependent on the boundary conditions assigned to the underlying field theory. In addition, using a numerical approach, we are able to provide strong evidences that there is a qualitative and not only a quantitative difference in the associated flow and we highlight this feature by considering Dirichlet and Neumann boundary conditions on half Minkowski spacetime.
		\end{abstract}
		
		
		\begin{keyword}
			\small{algebraic quantum field theory, Wetterich equation, renormalization group, boundary conditions}
		\end{keyword}
		
\end{frontmatter}
\tableofcontents

\section{Introduction}

In the realm of the mathematical formulation of quantum field theories, the algebraic approach represents a successful and versatile framework which is particularly suited to formulate and to study models living on generic curved backgrounds, see {\it e.g.} \cite{Brunetti:2015vmh}. The successes which can be ascribed to algebraic quantum field theory are many and, among them, most notable has been the formulation of a local and covariant theory of Wick polynomials \cite{hw01,hw02} and of renormalization \cite{Brunetti:1999jn}. This has paved the way in turn to devising a mathematical framework to discuss interacting quantum field theories at a perturbative level, commonly referred to as perturbative algebraic quantum field theory (pAQFT), see \cite{bdf09,Rej16}. Among the many features of pAQFT, noteworthy is the proof in \cite{bdf09} of the existence of a connection between the Stueckelberg-Petermann renormalization group which describes the freedom in the perturbative construction and the approach à la Wilson \cite{wilson} relating theories at different energy scales.

In the spirit of Wilson approach to renormalization noteworthy is the average effective action formalism to the functional renormalization group, which describes the flow of the fluctuations from a microscopic to a macroscopic scale in terms of the Wetterich equation \cite{Ringwald:1989dz,Wetterich:1989xg,wett93}. This has been formulated for Euclidean quantum field theories and, due to its effectiveness in computing explicitly physical quantities, it has been employed especially in the asymptotic safety program, see {\it e.g.}, \cite{disp_erge}. Recently, in \cite{mother}, using the framework of pAQFT, it has been constructed a Lorentzian counterpart of the Wetterich equation, considering interacting scalar field theories on globally hyperbolic spacetimes, although an extension to gauge theories has been discussed in \cite{DAngelo:2023tis} and in \cite{DAngelo:2023ssw} local existence of the solutions has been proven using as key ingredient the Nash-Moser theorem. 

These works rest on the key assumption that the underlying background is globally hyperbolic since this entails that, regardless of the form of the interaction  considered and focusing the attention for simplicity on scalar theories, the normally hyperbolic operator ruling the linear contribution to the dynamics enjoys notable properties. Among these we highlight the existence of unique advanced and retarded fundamental solutions as well the existence of quasi-free Hadamard states, see for example \cite{Khavkine:2014mta}. The Hadamard condition is a constraint on the singular structure of the underlying two-point correlation function and, abiding by it, entails the possibility of devising an explicit, covariant scheme to construct Wick polynomials. These notable properties have led for many years to the implicit belief that only field theories on globally hyperbolic spacetimes were fit to be described in the algebraic language. Earlier works by B. Kay \cite{Kay:1992es} as well as the realization that many relevant physical models, ranging from the Casimir effect to the AdS/CFT correspondence, required a background which is not globally hyperbolic, prompted in the past decade the beginning of several researches aimed at changing this situation. From a geometric viewpoint, this has led to extend the class of backgrounds that one can consider in order to formulate quantum field theories in the algebraic language, introducing the class of globally hyperbolic spacetimes with a timelike boundary \cite{hauflosanch}. Without entering here into the technical side of the definition, we highlight that it encompasses many physically relevant examples, among which noteworthy are the half-Minkowski spacetime $\uppmink$ and, up to a conformal transformation, both the universal cover and the Poincar\'e patch of Anti-de Sitter spacetime, see Example \ref{Exam: backgrounds} in the main body of this paper.  

On top of these backgrounds, it is possible to formulate any free field theory, though the presence of a timelike boundary entails that initial value problems are ill-defined unless one supplements them with a suitable boundary condition. Among the plethora of possibilities, noteworthy for our purposes, are those leading to a well-defined quantization scheme which allows to implement the canonical commutation/anti-commutation relations. Restricting for simplicity the attention once more to scalar field theories, this novel freedom has far reaching consequences, since each admissible boundary condition corresponds a priori to a physically different system. In addition, standard results such as the existence of unique advanced/retarded fundamental solutions and of Hadamard states are no longer guaranteed in these scenarios. In particular, since the singular structure of the underlying two-point function is constrained by causal propagation, the presence of a timelike boundary entails that the standard notion of Hadamard states needs to be amended to account for the possibility that singularities can be reflected at the boundary, much in the same spirit of what occurs in geometric optics with light rays. 

Although our current understanding of these scenarios is not on par with the counterparts on globally hyperbolic spacetimes with an empty boundary, considerable progresses have been made in the past few years, both in classifying the class of admissible boundary conditions \cite{dappia_wave} and in understanding how to extend the notion of Hadamard states and of propagation of singularities \cite{asym_Gannot:2018jkg,dappia_prop}. Therefore an algebraic formulation of free scalar quantum field theories on large classes of globally hyperbolic spacetimes with a timelike boundary is nowadays available, see {\it e.g.} \cite{dappia_ads2}. This suggests therefore the possibility also of considering non linear interactions and, in this paper, we move a first step forward in this direction. In particular we consider the Lorentzian Wetterich equation as constructed in \cite{mother} and we investigate whether it is still applicable when the underlying background is a globally hyperbolic spacetime with a timelike boundary. A close inspection of \cite{mother} unveils that a generalization to all possible scenarios is a daunting task since one needs to take into account the existence both of additional freedoms in the construction and of technical hurdles. On the one hand, among the plethora of admissible boundary conditions, one can consider those of dynamical type. These entail that the restriction to the boundary of the underlying manifold $\man$ of a solution of the equations of motion is constrained to abide by a dynamics which is intrinsically defined on $\partial\man$. While, at the level of free field theories, this leeway can be accounted for, see {\it e.g.} \cite{Dappiaggi:2022dwo,Zahn:2015due}, at the level of interacting field theories, it opens the possibility of considering underlying Lagrangians with non linear contributions which are defined intrinsically at the boundary. In addition to this technical complication, another hurdle is related to the notion of Hadamard states. On a globally hyperbolic spacetime, thanks to the seminal papers of Radzikowski \cite{radz,Radzikowski:1996ei}, it is understood that there is an equivalence between the global form of a Hadamard state, which specifies a constraint on the singular structure of the underlying two-point correlation function and the local counterpart. This is a prescription on the form of the integral kernel of the two-point correlation function, when restricted to convex geodesic neighbourhoods, see also \cite{Khavkine:2014mta}. Without entering into more technical details, it is important to stress that such correspondence plays a prominent r\^{o}le in devising an algorithmic, covariant scheme, to construct Wick polynomials and it is a key ingredient in \cite{mother}. Alas, there is no counterpart of the results in \cite{radz,Radzikowski:1996ei} on a generic globally hyperbolic spacetime with a timelike boundary due to the difficulties in accounting for the reflection of singularities at the level of integral kernel of the underlying two-point correlation function.  

In this paper our main goal is twofold. On the one hand we wish to show that the Lorentzian Wetterich equation can be considered also in presence of timelike boundaries, while, on the other hand, we want to highlight that the resulting flow has a quantitative and qualitative behaviour which is strongly dependent on the choice of the underlying boundary condition. Therefore, in order to fulfill these goals and to bypass the hurdles mentioned above, we restrict our attention to specific examples of manifolds with a timelike boundary, namely half Minkowski spacetime and the Poincar\'e patch of Anti-de Sitter spacetime. In both these cases, not only the classification of admissible boundary conditions is known, but, if one selects those of Dirichlet and Neumann type, it is possible to give an explicit analytic expression of the two-point correlation function associated to the ground state of a real, massive scalar field in terms of the geodesic distance built out of the underlying metric. In addition a local expression for the Hadamard two-point correlation function is known and this feature enables us to translate to this setting the approximation procedures introduced in \cite{mother}.

More in detail the paper is organized as follows. In Section \ref{sec:geometry} we outline the class of backgrounds that we consider, especially in Example \ref{Exam: backgrounds}, while, in Section \ref{sec:funct_approach}, we give a succinct overview of the functional approach to algebraic quantum field theory, focusing on interacting models in Section \ref{sec:inter}. For simplicity of the presentation we focus until this point on globally hyperbolic spacetimes without boundary, while in Section \ref{Sec: QFT with timelike boundaries} we consider a real, massive free scalar field theory on half-Minkowski spacetime and on the Poincar\'e patch of Anti-de Sitter. In both cases we give the form of the two-point correlation function for the ground state both with Dirichlet and with Neumann boundary conditions. 
In Section \ref{sec:wett} we introduce the Wetterich equation following \cite{mother} and we highlight where the presence of a timelike boundary entails differences with respect to the standard scenario. Section \ref{sec:applications} is the core of the paper. In Section \ref{sec:flowhalf} we consider half Minkowski spacetime and we derive the relevant flow equations working for definiteness in four spacetime dimensions. In the process one has to perform the subtraction of the Hadamard parametrix, encoding the singular contribution of the underlying two-point correlation function. We highlight that, due to the presence of the timelike boundary, two options are available and we refer to them as minimal and full Hadamard subtraction. In the first one, we follow the same rationale employed when the underlying spacetime has no boundary and we account only for the singularities present at the coinciding point limit. In the second one, instead, we also account for the reflection of the singularities at the timelike boundary. In Equation \eqref{eq:scaling_couplconst_dimensionless} and \eqref{eq:betafunc_casim}, we prove that in both instances there is a contribution to the flow equation due to the boundary condition. Since an analytic solution is not available, we study numerically both systems using the software Mathematica. While in the full Hadamard subtraction the ensuing flow is qualitatively similar to the one on full Minkowski spacetime, in the minimal case, the numerical solution suggests that approaching the boundary does not affect the ultraviolet behaviour of the theory, as it can be still observed a Gaussian ultraviolet, fixed point, for the non-interacting theory. Yet, the infrared behaviour is significantly different, since the theory, under the renormalization group flow, seems to spontaneously acquire mass although drawing any definitive conclusion is premature. As a matter of fact, in such regime and close to the boundary, the observed feature might be a signal that the contribution due to the reflected singularities is relevant and, therefore, that one should consider the full Hadamard subtraction. In Section \ref{Sec: Neumann flow}, we repeat the analysis using Neumann boundary conditions. Here it becomes clear that this has a drastic effect in the flow equations \eqref{eq:scaling_couplconst_dimensionless_choosemu_neumann} and \eqref{eq:scaling_couplconst_dimensionless_choosemu_casimir_neumann} since the systems display a behaviour proper of asymptotically free theories, namely the coupling constants vanish at large energies. At last in Section \ref{sec:pads_scal} we study the Wetterich equation for a real interacting scalar field theory on the Poincar\'e patch of the four-dimensional Anti-de Sitter spacetime. In this scenario we consider only Dirichlet boundary conditions since those of Neumann types are admissible only for a limited range of masses. Hence, since the Wetterich equation entails a flow of this parameter, one can realize that such constraint would be easily violated in the process. The resulting flow equation is here considered only with reference to the minimal Hadamard subtraction and we find evidence of the existence of an ultraviolet fixed point. The infrared behaviour is instead very different from the one obtained on half-Minkowski spacetime. This is a by-product of the geometric effect of the negative curvature which leads to a confinement of the massless modes and, as a consequence, to an infrared behaviour which is less singular.

\section{Setting}
In this section we introduce the class of backgrounds that we consider in this work, together with their key geometric properties and we review the key definitions and results at the heart of Perturbative Algebraic Quantum Field Theory (pAQFT).

\subsection{Geometry}\label{sec:geometry}

In this paper we denote by $\man\equiv(\mathcal{M},g)$ a $(d+1)$-dimensional manifold, $d\geq 2$, with a non empty boundary $\partial\man$, see \cite{lee}, while $g$ is a smooth Lorentzian metric. The interior of $\man$ is $\mathring{\man}\doteq\man\setminus\partial\man$. In addition, following \cite{hauflosanch}, we are interested on {\em manifolds with a time-like boundary}, namely letting $\iota:\partial\man\to\man$ be the natural inclusion map, $(\partial\man,\iota^*g)$ ought to be a $d$-dimensional Lorentzian manifold. Among the plethora of these backgrounds we are especially concerned with those which are globally hyperbolic, namely they must not contain closed, causal curves and $J^+(p)\,\cap\,J^-(q)$ has to be a compact set for all $p,q\in\man$, where $J^\pm$ denote the causal future and past.

Similarly to the scenario with an empty boundary, in \cite{hauflosanch}, it has been proven that, for any globally hyperbolic manifold $(\man, g)$ with a time-like boundary there exists an isometry $\psi:\mathbb{R}\times\Sigma\to\man$ such that, denoting by $t:\mathbb{R}\times\Sigma\to\mathbb{R}$ the coordinate along the $\mathbb{R}$-direction, the line element associated to the metric $\psi^*g$ reads
\begin{equation}\label{eq: line-elements}
ds^2=-\beta dt^2+h_t,
\end{equation}
where $\beta\in C^\infty(\mathbb{R}\times\Sigma)$ is a positive function while $h_t$ is a family of smooth Riemannian metrics on $\Sigma_t\doteq\{t\}\times\Sigma$ varying smoothly with $t$. Furthermore, each $\Sigma_t$, $t\in\mathbb{R}$, is a smooth Cauchy hypersurface with a non empty boundary, that is every inextensible,
piecewise smooth, timelike curve intersects $\Sigma_t$ only once. Henceforth we shall leave the function $\psi$ and its action implicit, since no confusion can arise.
  
\begin{remark}\label{def:standardstatic}
	In order to avoid unnecessary technical hurdles, which could make some of our results more inaccessible and obscure, we focus the attention on a specific subclass of globally hyperbolic spacetimes with a timelike boundary, which we refer to as {\bf (standard) static}. With reference to Equation \eqref{eq: line-elements} these are characterized by the constraints
	$$\beta\neq\beta(t),\quad\textrm{and}\quad h_t=h_0\equiv h,$$
	that is both $\beta$ and the metric on $\Sigma_t$ are time-independent.
\end{remark}

\begin{example}\label{Exam: backgrounds}
	We introduce the spacetimes which will play a leading r\^{o}le in our analysis:
\begin{enumerate}
    \item The $(d+1)$-dimensional \textbf{\emph{Minkowski half-space}} $(\mathbb{H}^{d+1},\eta)$. This is the standard upper half space $\mathbb{H}^{d+1}$, such that, considering the standard Cartesian coordinates $(t,z,x_1,\ldots,x_{d-1})$, the associated line element reads
    $$ds^2_{\mathbb{H}^{d+1}}=-dt^2+dz^2+\sum\limits_{i=1}^{d-1}x^2_i,\;z\geq 0,\;\textrm{and}\; t,x_i\in\mathbb{R}.$$ 
    Observe that the boundary $\partial\mathbb{H}^d$ is the locus $z=0$, which is isometric to $d$-dimensional Minkowski spacetime $(\mathbb{R}^d,\eta)$. This is manifestly a standard static, globally hyperbolic spacetime with a non-empty timelike boundary.
   \item The {\bf Poincar\'e patch} $\pads$ of the $d+1$-dimensional \textbf{\emph{Anti-de Sitter spacetime}} $\ads$, which is in turn the maximally symmetric solution of Einstein's field equations with a cosmological constant $\Lambda<0$. As a manifold this is diffeomorphic to $\mathring{\mathbb{H}}^{d+1}$ and, endowing it with the standard Cartesian coordinates $t,x_i\in\mathbb{R}$, $i=1,\dots,d-1$ and $z>0$, the associated line element reads
\begin{equation}
    \label{eq:padsmetric}
    ds^2_{\npads}=\frac{l^2}{z^2}(-dt^2+dz^2+\sum\limits_{i=1}^{d-1}x^2_i),
\end{equation}
where $l^2=-\frac{d(d-1)}{2\Lambda}$. Observe that $\pads$ is not directly a globally hyperbolic spacetime with a timelike boundary on account of the singularity in the metric as $z\to 0^+$, but, by means of a conformal transformation, one can switch back to the previous example. Hence $\pads$ admits a conformal, timelike boundary. We shall exploit this feature in our investigation.
\end{enumerate}

\end{example}

\noindent On top of the backgrounds introduced in Example \ref{Exam: backgrounds} we consider a free real scalar field $\phi:\man \to \mathbb{R}$ whose dynamics is ruled by
\begin{equation}\label{EQ:KG}
    P\phi=(\Box_g - m^2-\xi R)\phi=0,
\end{equation}
where $\Box_g$ is the d'Alembert wave operator built out of the underlying metric $g$, while $m^2\geq 0$ is the mass parameter whereas $\xi \in \mathbb{R}$ denotes an arbitrary coupling to the scalar curvature $R$. On the one hand we observe that, if we consider $\pads$, by means of a conformal transformation, we can transform Equation \eqref{EQ:KG} into an equivalent wave-like equation in $\mathbb{H}^{d+1}$, though with the linear term $m^2+\xi R$ replaced by a suitable singular potential which diverges quadratically as $z\to 0^+$, see \cite{dappia_ads}. On the other hand, the presence of a timelike boundary entails that the assignment of initial data on a Cauchy surface do not suffice to construct a unique solution of Equation \eqref{EQ:KG} and they need to be supplemented with suitable boundary conditions assigned at $\partial\man$. The analysis of the class of admissible boundary conditions which entail in turn the existence of well-defined advanced and retarded fundamental solutions is a rather involved topic which has received lately a lot of attention. For an interested reader we refer to \cite{dappia_wave} for the analysis of this specific problem for the Klein-Gordon operator $P$ on a globally hyperbolic, static background with a timelike boundary, whereas the specific scenario of $\pads$ has been considered in \cite{dappia_ads, dappia_ads2, asym_Gannot:2018jkg}.

In this work, for technical reasons which will be elucidated in the following, we shall consider only very specific boundary conditions, namely those of Dirichlet type on both $\uppmink$ and $\pads$ and those of Neumann type on $\uppmink$. Therefore we shall focus only on these specific scenarios for which it is unnecessary to outline the underlying theory in its full generality, as it will become clear especially in Section \ref{Sec: QFT with timelike boundaries}.

\subsection{Functional approach to AQFT}
\label{sec:funct_approach}
In this section we overview the functional approach developed in the context of Algebraic Quantum Field Theory (AQFT). For simplicity of the presentation, we assume at the beginning that $(\man,g)$ is a globally hyperbolic spacetime with an empty boundary and we follow mainly \cite{bdf09, Rej16}. At the very end we comment on the necessary modifications when $\partial\man\neq\emptyset$ is a timelike manifold and when we consider the specific classes of boundary conditions, mentioned at the end of the previous section.

\begin{definition}[Off-shell field configurations]
    Let $(\man,g)$ be a globally hyperbolic spacetime. We call $\mathcal{E}(\man)\defeq\,C^\infty(\man)$ the space of \emph{off-shell configurations}. 
\end{definition}

\begin{remark}
	For clarity of the presentation, in the following, we will stick to the convention that a generic off-shell configuration is denoted by the symbol $\chi$ when there is no need to make a reference to a specific field theory. On the contrary, we shall adopt the symbol $\phi$ whenever we refer to a configuration which is related to an underlying real scalar field, which might be required to abide by the Klein-Gordon equation \eqref{EQ:KG}.
\end{remark}

\noindent Observe that the space of off-shell field configurations is a topological space, endowed with the Fréchet topology, \cf \cite[Section 3.1]{Rej16}. Following the functional approach, \cf \cite{bdf09}, we denote by $\mathcal{F}(\man)$ the space of complex-valued functionals $F:\mathcal{E}(\man)\rightarrow \mathbb{C}$. Among the plethora of elements lying in $\mathcal{F}(\man)$, we individuate two distinguished subclasses.

\begin{definition}[Regular and local functionals]
    \label{def:reg_funct}
    An element $F\in\mathcal{F}(\man)$ is called a {\bf regular functional} if $\exists f\in\,C^\infty_0(\man)$ such that
    $$F(\chi)\equiv X_f(\chi)\defeq\int_\man d\mu_g(x)\,f(x)\chi(x),\quad\forall\chi\in\mathcal{E}(\man)$$
    where $d\mu_g$ is the metric induced volume form. The space of regular functionals is denoted by $\mathcal{F}_{reg}(\man)$.
    Similarly, we say that $F$ is a \emph{local functional} if $\exists f\in\,C^\infty_0(\man)$ such that
    $$F(\chi)=\int_\man d\mu_g(x)\,f(x)j_x(\chi),\;\;f\in\,C^\infty_0(\man),$$
    where $j_x(\chi)$ is the \emph{jet prolongation} of $\chi$, \cf \cite{Rej16}. The space of local functionals is denoted by $\mathcal{F}_{loc}(\man)$.
    The \emph{support} of $F\in\mathcal{F}(\man)$ is
    $$\supp(F)\defeq\{x\in\man\;|\;\forall\,U\;\textnormal{neighborhood of}\;x\;\exists\,\chi,\psi\in\,\mathcal{E}(\man),\,\supp\chi\in U\;|\;F(\psi+\chi)\neq F(\psi)\}.$$
\end{definition}

\noindent Subsequently we introduce the notion of \emph{\textbf{functional derivative}} of an element lying in $\mathcal{F}(\man)$.

\begin{definition}[Functional derivative]
    \label{def:functdev}
    Let $F\in\mathcal{F}(\man)$ and let $\chi\in \mathcal{E}(\man)$, $\psi_1,...,\psi_n\in C^\infty_0(\man)$. We say that, if existent,
    $$\langle F^{(n)}(\chi),\psi_1\otimes...\otimes \psi_n\rangle\defeq \frac{d^n}{ds_1...ds_n}F(\chi+s_1\psi_1+...+s_n\psi_n)\Bigr|_{s_1 = ... = s_n = 0}$$
    is the n-th functional derivative of $F$ at $\chi$ along the directions $\psi_1,...,\psi_n$.
    A functional is $n$-times differentiable at $\chi$ if its functional derivative exists for all $\psi_1,\ldots,\psi_n\in C^\infty_0(\man)$.
\end{definition}

Products between functionals and their derivatives, or composition with specific distributions such as the propagators associated to a partial differential operator on $\man$, are \textit{a priori} ill-defined operations. In order to bypass this hurdle we make use of tools and techniques proper of microlocal analysis, which suggest  to impose suitable constraints on the singular structure of the derivatives of the functionals that we consider. An interested reader can refer to \cite{horror1} for basic definitions and results on microlocal analysis. 



\begin{definition}[Micro-causal functional]
\label{def:microca}
    Let $F\in\mathcal{F}(\man)$ be a functional on $\mathcal{E}(\man)$. We say that $F$ is a microcausal functional, {\it i.e.}, $F\in\mathcal{F}_{\mu c}(\man)$ if, denoting by $p\equiv(p_1,\dots,p_n)\in\man^n$, $n\geq 1$
    $$F^{(n)} \in \mathcal{E}'
    (\man^n), \quad\textrm{and}\quad \mathrm{WF} (F^{(n)}) \cap \Bigl[ \bigcup_{p\in\man^n} (\bar{V}_p^+ \cup
    \bar{V}_p^-)\Bigr]= \emptyset, \quad n \in \mathbb{N}$$    
    where $F^{(n)}$ is the $n$-th functional derivative of $F$ as per Definition \ref{def:functdev}, while $\mathrm{WF}$ denotes the wavefront set. In addition
    $\bar{V}_p^\pm\doteq\{p_1\}\times\bar{V}_{p_1}^\pm\times\dots\times\{p_n\}\times\bar{V}_{p_n}^\pm $ where $\bar{V}_{p_i}^+,\bar{V}_{p_i}^-$ are, respectively, the sets of future-pointing
    and past-pointing lightike and timelike covectors lying in $T^{\ast}_{p_i}\man$, $i=1,\dots,n$. The collection of all microcausal functionals is a $*$-algebra with respect to the pointwise product $\cdot$ and to the $*$-operation induced by complex conjugation:
    $$(F\cdot G)(\chi)=F(\chi)G(\chi)\quad\textrm{and}\quad F^*(\chi)=\overline{F(\chi)}.\quad\forall F,G\in\microca$$
    This algebra is denoted by $(\microca,\cdot,*)$.
\end{definition}
    
\noindent Observe that, by construction, $\mathcal{F}_{reg}(\man)\subset\mathcal{F}_{loc}(\man)\subset\mathcal{F}_{\mu c}(\man)$. 

Since in this section we consider the Klein-Gordon operator $P$ as per Equation \eqref{EQ:KG} on a globally hyperbolic Lorentzian manifold with an empty boundary, standard results in the theory of PDEs, see, {\it e.g.}, \cite[Theorems 3.3.1, 3.4.3]{baer08}, entail that $P$ admits unique advanced and retarded fundamental solutions $\Delta_{A/R}:C^\infty_0(\man)\to C^\infty(\man)$. Denoting by $\Delta:=\Delta_R-\Delta_A$ the associated causal propagator, we can follow the approach introduced in \cite{bdf09} quantizing the classical theory through a procedure known as \emph{deformation quantization}. This scheme amounts introducing a formal parameter $\hbar$ to \emph{deform} the point-wise product of the underlying algebra of observables, starting from the regular functionals.


\begin{definition}[Quantum *-algebra structure on $\reg$]
    \label{def:deformreg}
    We call $(\pertreg,\star,*)$ the \emph{quantum *-algebra of regular observables} where,  given $F,G\in \reg$ two regular functionals as per Definition \ref{def:reg_funct},
    \begin{equation}\label{eq:deformreg}
    F\star G\defeq F\cdot G+\sum_{n=1}^\infty \frac{\hbar^n}{n!}\langle F^{(n)},(\frac{i}{2}\Delta)^{\tensor n} G^{(n)}\rangle,
    \end{equation}
    where $\Delta$ stands for the causal propagator while $F^{(n)}$ and $G^{(n)}$ denote the $n$-th functional derivatives. In addition $\pertreg$ stands for a power series in $\hbar$ with coefficients in $\reg$, and
    $$(F\star G)^*\defeq G^*\star F^*.$$
\end{definition}

The singular structure of $\Delta$, see \cite{Radzikowski:1996ei}, entails that this deformation cannot be slavishly translated to the space of microcausal functionals $\microca$. To bypass this hurdle we need to introduce the renown {\bf \emph{Hadamard two-point distributions}}, see \cite{radz}, namely $\Delta_+\in\mathcal{D}^\prime(\man\times\man)$ such that 
\begin{equation}
	\label{eq:microloc_spec_cond}
	\mathrm{WF}(\Delta_+)=\{(x,k_x;y,k_y)\in\,T^*(\man\times\man)\setminus\{0\}\,|\,(x,k_x)\sim (y,-k_y),\,k_x\,\vartriangleright 0\},
\end{equation}
where $(x,k_x)\sim (y,-k_y)$ entails that there exists a lightlike geodesic $\gamma$ connecting $x$ and $y$ such that $g^{-1}k_x$ is a lightlike vector, tangent to $\gamma$ at $x$ while $-k_y$ is the parallel transport of $k_x$ along $\gamma$. The symbol $k_x\,\vartriangleright 0$ means that $k_x$ is future directed. In addition to controlling the singular structure of $\Delta_+$, in the following we also need it to identify a quantum state for the free field theory associated to the Klein-Gordon equation \eqref{EQ:KG}. The necessary constraints are specified in the following definition, see \cite{Khavkine:2014mta} for additional details:

\begin{definition}\label{Def: 2-pt quantum state}
	Let $\Delta_+\in\mathcal{D}^\prime(\man\times\man)$ be a Hadamard two-point distribution and let $P$ be the Klein-Gordon operator as in Equation \eqref{EQ:KG}. We say that $\Delta_+$ is an on-shell, positive, Hadamard two-point distribution if for all $f,f^\prime\in C^\infty_0(\man)$
	$$\Delta_+(f,f)\geq 0,\quad \Delta_+(Pf,f^\prime)=\Delta_+(f,Pf^\prime)=0.$$
	In addition we also require that
	$$\Delta_+(f,f^\prime)-\Delta_+(f^\prime,f)=i\Delta(f,f^\prime)\quad\textrm{and}\quad |\Delta(f,f^\prime)|^2\leq 4\Delta_+(f,f)\Delta_+(f^\prime,f^\prime),$$
	where $\Delta=\Delta_R-\Delta_A$ is the causal propagator built out of the advanced and retarded fundamental solutions associated to $P$.
\end{definition}

\begin{definition}[Quantum *-algebra structure on $\microca$]
    \label{def:deform_microca}
    Given a Hadamard two-point distribution abiding by the constraints of Definition \ref{Def: 2-pt quantum state} we call $\mathcal{A}_{\Delta_+}(\man)\equiv(\pertmicroca,\star_{\Delta_+},*)$ the \emph{quantum *-algebra of observables} induced by $\Delta_+$ where, for all $F,G\in\microca$,
    \begin{equation}\label{eq:deformmicroca}
    F\star_{\Delta_+} G\defeq F\cdot G+\sum_{n=1}^\infty \frac{\hbar^n}{n!}\langle F^{(n)},\Delta_+^{\tensor n} G^{(n)}\rangle.
    \end{equation}
    Here $\pertmicroca$ stands for a formal power series in $\hbar$ with coefficients in $\microca$, and
    $$(F\star_{\Delta_+} G)^*\defeq G^*\star_{\Delta_+} F^*.$$
\end{definition}

Observe that one can infer that Equation \eqref{eq:deformmicroca} is well-defined combining Definition \ref{def:microca} with Equation \eqref{eq:microloc_spec_cond} by means of \cite[Thm 8.2.12]{horror1}. To conclude this succinct overview of the key ingredients at the heart of pAQFT we need to discuss how to define in this framework normal ordering which is the building block of perturbation theory for non-linear quantum fields.
\begin{definition}[Normal ordering]
    \label{def:normal_ord}
    Let $X_f$ be a regular, linear observable as per Definition \ref{def:reg_funct}, and let $\Delta_+\in\mathcal{D}^\prime(\man\times\man)$ be as per Definition \ref{Def: 2-pt quantum state}. We call {\bf normal ordered products} or {\bf Wick products} the recursively defined functionals
    \begin{equation}
        \nonumber
        \begin{gathered}
            \begin{aligned}
                &\nord{X_f}_{\Delta_+}\defeq X_f,\\
                &\nord{X_{f_1}\star\ldots \star X_{f_{n+1}}}_{\Delta_+}\defeq \nord{X_{f_1}\star\ldots \star X_{f_{n}}}_{\Delta_+}\star X_{f_{n+1}}+\\
                &-\sum_{i=1}^n \nord{X_{f_1}\star\ldots \star \widetilde{X_{f_i}}\star\ldots \star X_{f_{n}}}_{\Delta_+} \hbar \Delta_+(f_i,f_{n+1}),\quad\quad f_1,\ldots,f_{n+1}\in C^\infty_0(\man),
            \end{aligned}
        \end{gathered}
    \end{equation}
    where $\star$ is the product associated to $\Delta_+$ while $\widetilde{\quad}$ means that the tilded element is left out from the sum. In particular we define the \emph{Wick power of order $n$} as the smearing of $\nord{X_{f_1}\star\ldots \star X_{f_{n}}}_{\Delta_+}$ against the distribution $f(x_1)\delta(x_1,\ldots,x_n)$. 
\end{definition}

\begin{remark}\label{Rem: Covariant Wick Polynomials} The construction as per Definition \ref{def:deform_microca}, being subordinated to the choice of an arbitrary two-point function, does not lead to \emph{local} and \emph{covariant} normal ordered quantum fields. This shortcoming has been already fixed in \cite{hw01} replacing $\Delta_+$ in Definition \ref{def:normal_ord} and also in Equation \eqref{eq:deformmicroca} with the Hadamard parametrix $H$, see \cite{radz,Radzikowski:1996ei}. More precisely $H\in\mathcal{D}^\prime(\man\times\man)$ is such that $\mathrm{WF}(\Delta_+-H)=\emptyset$ and, for any geodesic, convex neighbourhood $\mathcal{O}\subseteq\man$ and for every $x,y\in\mathcal{O}$ the integral kernel of $H$ reads
\begin{equation}
	\label{eq:had_parametrix}
	H(x,y)=\lim_{\epsilon\rightarrow 0^+}u(x,y) \,\sigma_\epsilon^{\frac{1-d}{2}}(x,y)+v_d(x,y)\log\left(\frac{\sigma_\epsilon(x,y)}{\mu^2}\right).
\end{equation}
Here $\mu\in\mathbb{R}$ is a reference, arbitrary scale length while $u,v_d\in C^\infty(\mathcal{O}\times\mathcal{O})$ are completely determined in terms of recursion relations by the geometry of the underlying background and by the Klein-Gordon operator $P$ as per Equation \eqref{EQ:KG}, see \cite[Appendix B]{Moretti:2001qh}. Observe that we have made explicit the dependence of $v_d$ on the spacetime dimension since $v_d=0$ whenever $d$ is even. The last ingredient is 
$$\sigma_\epsilon(x,y)=\sigma(x,y)+i\epsilon(\tau(x)-\tau(y)),$$
where $\sigma$ is the Synge world function, namely the halved squared geodesic distance on $(\man,g)$, whereas $\tau:\man\to\mathbb{R}$ is a time function.
\end{remark}

\subsection{Interacting Theory}
\label{sec:inter}
We switch to describing a quantum, \emph{interacting} field theory. We do not work in full generality and we focus instead on a specific example ruled by the action $I\in\mathcal{F}_{loc}(\man)$ 
\begin{equation}
    \label{eq:interact_action}
    I(\phi)\defeq I_0(\phi)+\lambda V(\phi)=-\int_\man \left(\frac{1}{2}g^{ab}\partial_a\phi\partial_b\phi +\frac{\xi}{2}R\phi^2+\frac{m^2}{2}\phi^2+\lambda\frac{\phi^n}{n!}\right)f\;d\mu_g,
\end{equation}
where $n\geq 2$, $f\in C^\infty_0(\man)$, $V(\phi)$ is the interaction term, here chosen to be of polynomial form, while $\lambda\in\re$ is the coupling constant. Observe that, in the applications, we shall fix $\dim\man=4$ and $n=4$. This action leads to the following equation of motion
\begin{equation}
    \label{eq:int_KG}
    (P+\lambda V^{(1)})\phi=0\;\Rightarrow \;\left(\Box_g-\xi R-m^2\right)\phi=\frac{\lambda}{(n-1)!}\phi^{n-1},
\end{equation}
which we recognize as the Klein-Gordon equation coupled with a non linear contribution due to $\lambda V^{(1)}(\phi)$. In the perturbative approach to AQFT the \emph{interacting *-algebra of observables} is constructed with respect to the interaction term $V$. In other words we wish to implement \emph{\textbf{Dyson's formula}} which defines the \emph{\textbf{formal S-matrix}}
\begin{equation}
    \label{eq:formal_Smatrix_introduction}
    S(V)\overset{formal}{=} \mathbb{I}+\sum_{n=1}^\infty \frac{i^n\lambda^n}{n!}\underset{n\; \textnormal{times}}{[\underbrace{\nord{V}\,\cdot_T\ldots \cdot_T\nord{V}}]},
\end{equation}
where $\cdot_T$ is the \emph{time-ordered product}. To construct the time-ordering it is customary to define it for regular functionals, successively extending it to local observables.

Following \cite[Section 6.2.3]{Rej16}, we define the time-ordered product on $\reg$ as
\begin{equation}
    \label{eq:time_ord_regular}
    F\cdot_T G \defeq F\cdot G+\sum_{n=1}^\infty \frac{\hbar^n}{n!}\langle F^{(n)},\frac{i}{2}\Delta_F^{\tensor n} G^{(n)}\rangle,
\end{equation}
where $\Delta_F\doteq\Delta_++i\Delta_A$, where $\Delta_+$ is a Hadamard two-point distribution abiding by the constraints of Definition \ref{Def: 2-pt quantum state} while $\Delta_A$ is the advanced fundamental solution associated to the Klein-Gordon operator as per Equation \eqref{EQ:KG}. 

\begin{remark}\label{Rem: local and covariant time ordering}
	Observe that, similarly to Definition \ref{def:normal_ord} and to Remark \ref{Rem: Covariant Wick Polynomials}, since $\Delta_F$ is built out of $\Delta_+$, as it stands Equation \eqref{eq:time_ord_regular} is not a local and covariant prescription. In order to by-pass this hurdle, it suffices to observe that $\Delta_F=\Delta_{+,S}+\frac{i}{2}(\Delta_R+\Delta_A)$, where $\Delta_R$ is the retarded fundamental solution associated to the Klein-Gordon operator as per Equation \eqref{EQ:KG}. Hence, it is possible to consistently use in place of $\Delta_F$, the bi-distribution $i\Delta_{\mathcal{D}}$ where 
	$\Delta_{\mathcal{D}}\defeq \frac{1}{2}\left(\Delta_A+\Delta_R\right)$ is the \emph{Dirac propagator}. This implements time-ordering in a local and covariant form.
\end{remark}

\noindent Equation \eqref{eq:time_ord_regular} abides by the \emph{causal factorization axiom}, {\it i.e.},
\begin{equation}
\label{eq:time_ord_regular_causfact}
F\cdot_T G=
    \begin{cases}
        F\star G\quad\textnormal{if}\;\supp G \cap J^+(\supp F)=\emptyset, \\
        G\star F\quad\textnormal{if}\;\supp F \cap J^+(\supp G)=\emptyset.
    \end{cases}
\end{equation}

To extend this construction to local functionals we need to further elaborate on Equation \eqref{eq:time_ord_regular}. Yet, in order to keep the length of this paper at bay, we shall not discuss the whole framework in detail, leaving an interested reader to \cite{Rej16} for a thorough analysis. We shall content ourselves with a succinct overview. 

The starting point is to introduce a {\em time ordering map} $\mathcal{T}$ acting on the space of multi-local functionals, which are tensor products of elements lying in $\mathcal{F}_{loc}(\man)$. More precisely $\mathcal{T}$ is constructed out of a family of multi-linear maps
\[ \mathcal{T}_n : \mathcal{F}_{loc}^{\otimes n}(\man) \rightarrow
\mathcal{F}_{\mu c}(\man), \]
such that $\mathcal{T}_0 = 1$ and $\mathcal{T}_1 = id$. The link between $\mathcal{T}$ and $\mathcal{T}_n$ is codified by the identity
\begin{equation}\label{eq: time-ordered-product}
	F_1\cdot_T\dots\cdot_T F_n\equiv\mathcal{T} \left(\prod_{i=1}^{n} F_i\right)\doteq \mathcal{T}_n \left( \bigotimes_{j = 1}^n F_j
	\right).
\end{equation} 
In addition, one requires the maps $\mathcal{T}_n$ to be such that $\mathcal{T}$ is symmetric and it abides by the natural extension of Equation \eqref{eq:time_ord_regular_causfact}, namely, given $\{F_i \}_{i=1,\ldots, n}, \{G_j \}_{j=1,\ldots, m} \subset\loc$ two arbitrary families of local functionals such that $F_i \gtrsim G_j$ for any $i, j$
\begin{equation}\label{Eq:factoriz-time-ordering}
	\mathcal{T} \left( \bigotimes_i F_i  \bigotimes_j G_j \right) =\mathcal{T}
	\left( \bigotimes_i F_i \right) \star \mathcal{T} \left( \bigotimes_j G_j
	\right), 
\end{equation}
where $\star$ is defined in Equation \eqref{def:deformreg}. Here the symbol $\gtrsim$ entails that $\mathrm{supp} (F_i) \cap J^- (\mathrm{supp} (G_j)) = \emptyset$, where the support of a functional is as per Definition \ref{def:reg_funct}. It is worth noticing that the conditions at the heart of Equation \eqref{Eq:factoriz-time-ordering} are not always met, {\it e.g.}, when we consider functionals with overlapping supports. In this case the definition of $\mathcal{T}$ needs to be extended and this can be done following the Epstein-Glaser inductive procedure \cite{hw02}, to which we abide by. 
As a consequence, for these classes of local functionals the causal factorization property suffices to fully determine the map $\mathcal{T}$. As discussed in \cite{Rej16}, if we work with the algebra $(\mathcal{F}_{\mu c}\llbracket\hbar\rrbracket,\star_{\Delta_+},\ast)$, an explicit realization of the time-ordering map can be obtained as a natural generalization of Equation \eqref{eq:time_ord_regular}, namely
\begin{equation}\label{Eq: time-ordered product}
	\mathcal{T}^{\Delta_F}  (F_1 \otimes \ldots \otimes F_n)\assign
	F_1 \star_{\Delta_F} \ldots \star_{\Delta_F} F_n \assign
	\mathbbmss{M} \circ e^{\sum_{\ell < j}^n D^{\ell j}_{\Delta_F}}  (F_1
	\otimes \ldots \otimes F_n), 
\end{equation}
where $F_i\in\mathcal{F}_{\mu c}\llbracket\hbar\rrbracket$, for all $i=1,\dots,n$ while 
\[ D^{\ell j}_{\Delta_F}\assign \left\langle \Delta_F, \frac{\delta^2}{\delta \chi_{\ell}
	\delta \chi_j} \right\rangle=\int\limits_{\man\times\man}d\mu_g(x)d\mu_g(y)\Delta_F(x,y)\frac{\delta}{\delta\chi(x)}\otimes\frac{\delta}{\delta\chi(y)},\]
which is manifestly symmetric under exchange of $j$ and $\ell$. The map $\mathbbmss{M}$ implements the pointwise product between two microcausal functionals as the pullback on $\microca\otimes\microca$ via the diagonal map $\iota:\mathcal{E}(M)\to\mathcal{E}(M)\times\mathcal{E}(M)$, $\chi\mapsto\iota(\chi)=(\chi,\chi)$, {\it i.e.}, for all $F,G\in\microca$ and for all $\chi\in\mathcal{E}(M)$
$$\mathbbmss{M}(F\otimes G)(\chi)\equiv (F\cdot G)(\chi)\doteq (F\otimes G)(\iota(\chi)).$$
As mentioned in Remark \ref{Rem: local and covariant time ordering}, one can replace $\Delta_F$ with the Dirac propagator $\Delta_{\mathcal{D}}$ in order to obtain a local and covariant prescription.

\subsubsection{S-Matrix}
\label{sec:smatrix}

On account of the identification of a time-ordered product in the previous section, particularly Equation \eqref{eq: time-ordered-product}, we can promote Equation \eqref{eq:formal_Smatrix_introduction} to a definition of the \emph{\textbf{local S-matrix}}
\begin{equation}
    \label{eq:local_Smatrix}
    S(\lambda V)\defeq \sum_{n=0}^\infty \frac{i^n\lambda^n}{\hbar^n n!}\underset{n\; \textnormal{times}}{[\underbrace{\nord{V}\,\cdot_T\ldots \cdot_T\nord{V}}]}=e_{\cdot_T}^{i/\hbar\,\lambda TV},
\end{equation}
which has to be interpreted as an asymptotic series and therefore it should be read as an element of $\alg_{loc}\llbracket \hbar,\lambda\rrbracket$.

Observe that this definition of local S-matrix differs from the notion of \emph{global} S-matrix of standard Quantum Field Theory on a flat spacetime. As a matter of fact the cutoff function $f\in\test$ in Equation \eqref{eq:interact_action} acts by restricting the support of $S(V)$ to a compact region of $\man$. This implements locality and it guarantees that $S(V)$ is not affected by any \emph{infrared divergence}.

Only when the adiabatic limit $f\rightarrow 1$ is taken, if $S(V)$ converges to a well-defined functional, we could make sense of \emph{incoming} and \emph{outgoing} fields. At a more profound level, on curved spacetimes it is not clear how to coherently define an asymptotic region and therefore we continue our analysis restricting the attention to the construction of \emph{local} quantities, that appear more natural in the context of our interest.
  
For an interaction term $V\in\loc$ such as the one in Equation \eqref{eq:interact_action}, we define the \textbf{\emph{relative S-matrix}} $S_V$ as follows: for all $F\in\loc$, 
\begin{equation}
    \label{eq:relativeSmatrix}
    S_V(F)\defeq S(\lambda V)^{-1}\star S(\lambda V+F),
\end{equation}
where the inverse $S(\lambda V)^{-1}$ is with respect to the product $\star$ introduced in Equation \ref{def:deformreg}.  Equation \eqref{eq:relativeSmatrix} works at the level of the abstract algebra of observables $\alg_{loc}(\man)$. Hence, if one is interested in a realization of such equation for a specific *-representation $(\pertloc,\star_{\Delta_+},*)$ a choice of $\Delta_+$ as per Definition \ref{Def: 2-pt quantum state} has to be made.

The relative S-matrix is the generator of the \emph{interacting fields}, in the sense that we can define the \emph{\textbf{Bogoliubov map}}
\begin{equation}
    \label{eq:bogoliub_map}
    R_V(F)\defeq -i\hbar \frac{d}{ds}\Bigr|_{s=0} S_V(s \mathcal{T}^{-1}[F])=S(\lambda V)^{-1}\star [S(\lambda V)\cdot_T F],
\end{equation}
where $\mathcal{T}^{-1}$ is the inverse of the time-ordering operator. In the following we will consider the interaction constant $\lambda$ as being absorbed in $V$.

\begin{definition}[*-algebra of interacting observables]
    \label{def:alg_interacting}
    Let $F,G\in\loc$ be local observables, as per Definition \ref{def:reg_funct}. We call $(\mathcal{F}_{loc}\llbracket \hbar,\lambda\rrbracket,\star_V,*)$ the \emph{interacting *-algebra of observables}, where
    $$F\star_V G\defeq R_V^{-1}(R_V(F)\star R_V(G)).$$
    We call $F_I\defeq R_V(F),\,G_I\defeq R_V(G)$ the interacting observables associated respectively to $F$ and to $G$.
\end{definition}

At last we may discuss the classical limit of Equation \eqref{eq:bogoliub_map}. We expect for $\hbar\rightarrow 0$ to recover the classical field, solution of the classical interacting equations of motion \eqref{eq:int_KG} but, as it stands, it seems that this limit cannot be performed as the local S-matrix in Equation \eqref{eq:local_Smatrix} is a Laurent series in $\hbar$. Nevertheless it has been proven in \cite[Proposition 2]{hbarlimit} that $R_V(F)$ contains no negative powers in $\hbar$ for all $V,F\in\loc$. 
\begin{definition}[Classical (retarded) M{\o}ller map]
    \label{def:mollermap}
    Let $I_1,\,I_2\in \loc$ be two local action functionals and let us denote by $\mathcal{E}_{I_1}(\man),\,\mathcal{E}_{I_2}(\man)\subset\mathcal{E}(\man)$ the collection of smooth solutions of the associated dynamics. Let $I_0\in\loc$ be a third local action functionals whose associated Euler-Lagrange equations are ruled by a second-order normally hyperbolic differential operator $P_0$. We call \emph{classical M{\o}ller map} $\mathsf{r}_{1,2}:\mathcal{E}_1(\man)\rightarrow \mathcal{E}_2(\man)$ such that
    \begin{enumerate}
        \item $\mathsf{r}_{1,2}\circ\mathsf{r}_{2,3}=\mathsf{r}_{1,3}$ for $I_1,\,I_2,\,I_3\in\loc$;
        \item $\displaystyle \frac{d}{d\lambda}\Bigr|_{\lambda=0}\mathsf{r}_{I_0+\lambda V,I_0}(\chi)=-\Delta_{0,R}\left(V^{(1)}(\chi)\right)$ for $V\in\loc$ and $\chi\in\mathcal{E}(\man)$,
    \end{enumerate}
    where $\Delta_{0,R}$ is the retarded fundamental solution associated to $P_0$, see \cite{baer08}.
\end{definition}
\noindent The same definition could be expressed in terms of the advanced propagator $\Delta_{0,A}$, obtaining the classical advanced M{\o}ller map. As suggested in item 2. in Definition \ref{def:mollermap}, the classical M{\o}ller map allows to compare the dynamics induced by the free action $I_0$ and the one perturbatively induced by the interaction term $V$. We define the action of the M{\o}ller map on a functional $F\in\loc$ via its pull-back, \textit{i.e.}, for $\chi\in\mathcal{E}(\man)$
$$(\moll^*_{I_0+\lambda V,I_0}F)[\chi]\defeq F\circ \moll_{I_0+\lambda V,I_0} (\chi).$$

\subsection{Quantum field theory in presence of time-like boundaries}\label{Sec: QFT with timelike boundaries}
In this section, we address the problem of extending the framework highlighted in Section \ref{sec:funct_approach} and \ref{sec:inter} to an interacting real scalar field living on a globally hyperbolic spacetime with a timelike boundary, though restricting the attention to Example \ref{Exam: backgrounds}. A direct inspection of all the definitions and results highlights that the presence of a non empty boundary is not relevant unless one is using analytic structures which are sensible to the necessity of assigning a boundary condition at $\partial\man$. Consequently there are three specific key points of attention: the identification of advanced and retarded fundamental solutions, the construction of a Hadamard two-point correlation function and the definition of a time-ordered product. A full fledged, covariant extension of these structures is at the moment beyond our grasp due to specific technical hurdles which we shall highlight in the following. Yet, if we restrict the attention to the specific backgrounds in Example \ref{Exam: backgrounds}, a detailed analysis is possible thanks to the presence of a large isometry group which allows to derive explicit analytic formulae. 

We discuss both cases separately, addressing first of all the identification of advanced and retarded fundamental solutions and the construction of a Hadamard two-point correlation function abiding by Definition \ref{Def: 2-pt quantum state}.

\paragraph{Scalar quantum field theory on $\pads$ --} On top of $\pads$ we consider a real scalar field $\phi_0:\pads\to\mathbb{R}$ solution of the Klein-Gordon equation 
\begin{equation}\label{eq:KG-AdS}
P\phi_0=(\Box_g-m^2-\xi R)\phi_0=0.
\end{equation}
The associated classical and quantum theory have been thoroughly studied in the literature since the seminal work in \cite{Avis:1977yn} and recently a particularly emphasis has been given to the analysis of the boundary conditions which can be assigned on $\partial\pads$ in order to allow for a consistent implementation of the canonical commutation relation, see for example \cite{dappia_ads,dappia_ads2}. In view of the extensive literature available, we limit ourselves at reporting the key results which are necessary for our analysis, pointing a reader to these references for additional details.

Especially noteworthy are the boundary conditions of Robin type which have been considered in \cite{dappia_ads}, among which we highlight the Dirichlet and Neumann ones since one can associate to them a two-point correlation function $\Delta_+\in\mathcal{D}^\prime(\pads\times\pads)$ which is a weak bi-solution of the Klein-Gordon equation \eqref{eq:KG-AdS} and whose associated integral kernel has an explicit analytic expression in terms of special functions. More precisely, exploiting that Anti-de Sitter spacetimes are a maximally symmetric solution of Einstein's equations with a negative cosmological constant, in \cite{allenjac} it has been shown that
\begin{equation}
	\label{eq:twopointfunct_pads_2sol}
	\begin{gathered}
		\Delta^{\pads}_{+,D}(x,y)=\lim_{\epsilon\rightarrow 0^+} u_\epsilon^{-\frac{d}{2}-\nu}(x,y) \frac{\hyp(\frac{d}{2}+\nu,\frac{1}{2}+\nu,1+2\nu,u_\epsilon^{-1}(x,y))}{\Gamma(1+2\nu)}, \\
		\Delta^{\pads}_{+,N}(x,y)=\lim_{\epsilon\rightarrow 0^+} u_\epsilon^{-\frac{d}{2}+\nu}(x,y) \frac{\hyp(\frac{d}{2}-\nu,\frac{1}{2}-\nu,1-2\nu,u_\epsilon^{-1}(x,y))}{\Gamma(1-2\nu)},
	\end{gathered}
\end{equation}
where the superscripts $D$ and $N$ refer to the Dirichlet and to the Neumann case. In addition $\hyp$ is the \emph{hypergeometric function} with branch cut for the fourth entry along $[1,+\infty)$ on the real axis, \cf \cite[\S 15]{nist_bessel} while $\Gamma$ is the \emph{Euler gamma function} and $\nu=\frac{1}{2}\sqrt{1+4l^2\overline{m}^2}$ with $\overline{m}^2=m^2+(\xi-\frac{d-1}{4d})R$. Observe that the value of $\nu$ ought to be positive so to abide by the Breitenlohner-Freedman bound \cite{Breitenlohner:1982jf} and that the Neumann boundary condition is admissible only if $\nu<1$, see \cite{dappia_ads}. The last ingredient in Equation \eqref{eq:twopointfunct_pads_2sol} is
\begin{equation}
	\label{eq:defu}
	u(x,y)\defeq \cosh^2\left(\frac{\sqrt{2\sigma(x,y)}}{2l}\right),
\end{equation}
where $\sigma(x,y)$ is the Synge's world function. Following \cite{allenjac}, it holds that
$$u(x,y)=1+\frac{\sigma_e(x,y)}{2l^2},$$
where $\sigma_e(x,y)$ is the chordal distance that on $(\pads,g)$, in Cartesian coordinates as per Example \ref{Exam: backgrounds}, can be expressed as
\begin{equation}
    \label{eq:cordaldist_pads}
    \sigma_e(x,y)=\frac{l^2}{2zz'}[-(t-t')^2+(z-z')^2+(x_i-y_i)(x_j-y_j)\delta^{ij}],\quad i,j=1,\ldots,d-1,
\end{equation}
for $x,y\in\pads$, where the primed components are associated to $y$. Consequently 
$$u_\epsilon(x,y)\defeq u(\sigma(x,y))+i\epsilon (t-t'),$$
is the regularization of $u$ such that, for $u(\sigma)\leq 1$, the fourth entry of $\hyp$ lies above/below the branch cut depending on the sign of $t-t^\prime$.  

In sharp contrast to globally hyperbolic spacetimes with empty boundary, the notion of a physically acceptable two-point correlation function needs to be modified since Equation \eqref{eq:microloc_spec_cond} does not account for the existence of a timelike boundary, which entails, in turn, that the propagation of singularities at the hearth of a Hadamard two-point distribution is modified by the fact that lightlike geodesics are reflected at $\partial\man$. This issue has been thoroughly investigated in the past few years, see \cite{dappia_ads2,asym_hadstatesdappia,asym_Gannot:2018jkg} and for the case in hand we report the following result whose proof can be found in \cite[Theorem 4.19]{dappia_ads2}.

\begin{theorem}[Wave-front set of $\Delta_{+,D/N}^{\pads}$]
	\label{teo:wfset_2point_pads}
	Let $\Delta_{+,D/N}^{\pads}\in\mathcal{D}'(\pads\times\pads)$ be as per Equation \eqref{eq:twopointfunct_pads_2sol}. Then its wave-front set reads
	\begin{equation*}
		\mathrm{WF}(\Delta_{+,D/N}^{\pads})=\{(x,k_x;y,k_y)\in T^*(\pads\times\pads)\setminus \{0\}\,|\,(x,k_x)\sim_\pm (y,-k_y),\;k_x\vartriangleright 0\},
	\end{equation*}
	where $k_x\,\vartriangleright 0$ if $k_x$ is future directed, while $(x,k_x)\sim_\pm (y,-k_y)$ if there exist $\gamma,\gamma^{(-)}:[0,1]\rightarrow \ads$ two light-like geodesics such that, with reference to the Cartesian coordinates in $\pads$, one of the following conditions holds true:
	\begin{enumerate}
		\item $\gamma(0)=x=(t,z,\mathbf{x})$, $\gamma(1)=y=(t',z',\mathbf{y})$, $k_x=(\omega_x,k_{x,z},\mathbf{k}_x)$ is coparallel to $\gamma$ at $0$ and $-k_y$ is the parallel transport of $k_x$ along $\gamma$;
		\item $\gamma^{(-)}(0)=\iota_z x=(t,-z,\mathbf{x})$, $\gamma^{(-)}(1)=y=(t',z',\mathbf{y})$, $\iota_z k_x=(\omega_x,-k_{x,z},\mathbf{k}_x)$ is coparallel to $\gamma^{(-)}$ at $0$ and $-k_y$ is the parallel transport of $\iota_z k_x$ along $\gamma$.
	\end{enumerate}
\end{theorem}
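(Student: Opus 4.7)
My plan is to prove Theorem \ref{teo:wfset_2point_pads} by a method-of-images decomposition combined with the pullback theorem for wave-front sets. The reflection $\iota_z:(t,z,\mathbf{x})\mapsto(t,-z,\mathbf{x})$ is an isometry of the ambient AdS geometry after extending the metric \eqref{eq:padsmetric} through $z=0$, and inspection of \eqref{eq:cordaldist_pads} shows that $\sigma_e(\iota_zx,y)$ differs from $\sigma_e(x,y)$ only by the replacement $(z-z')^2\mapsto(z+z')^2$. Using the connection formulas for $\hyp$ near argument $1$, I would show that each of the two boundary conditions admits an image decomposition of the form
$$\Delta^{\pads}_{+,D/N}(x,y)=W(x,y)\mp W(\iota_zx,y),$$
where $W$ is the appropriate AdS-invariant bulk bisolution of \eqref{eq:KG-AdS} with the relevant scaling dimension, the $\mp$ sign enforcing the Dirichlet, respectively Neumann, condition at $z=0$. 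This reduces the wave-front set computation to the analysis of $\mathrm{WF}(W)$ and of its pullback under $\iota_z\times\mathrm{id}$.

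For the direct contribution I would match the short-distance expansion of $W$ against the Hadamard parametrix \eqref{eq:had_parametrix}: the leading $\sigma_\epsilon^{(1-d)/2}$ term arises from the prefactor $u_\epsilon^{-d/2\mp\nu}$ via the expansion $u-1\sim\sigma/l^2$, while the logarithmic subleading contribution, where present, comes from the digamma-type singularity of $\hyp$ at argument one. This identifies $W$ as a locally Hadamard bisolution of \eqref{eq:KG-AdS}, and the $i\epsilon(t-t')$-prescription selects positive frequencies. Radzikowski's theorem then gives that $\mathrm{WF}(W)$ consists exactly of pairs $(x,k_x;y,k_y)$ related by a direct lightlike geodesic with $k_x\vartriangleright 0$, which is the first alternative in the statement.

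For the reflected contribution I would apply H\"ormander's pullback theorem \cite[Thm.\ 8.2.4]{horror1} to the diffeomorphism $\iota_z\times\mathrm{id}$. A direct computation shows that a covector $k_x=(\omega_x,k_{x,z},\mathbf{k}_x)$ at $x$ corresponds under pullback to the covector $(\omega_x,-k_{x,z},\mathbf{k}_x)=\iota_zk_x$ at $\iota_zx$, so that $(x,k_x;y,k_y)\in\mathrm{WF}(W\circ(\iota_z\times\mathrm{id}))$ if and only if $(\iota_zx,\iota_zk_x;y,k_y)\in\mathrm{WF}(W)$. Combined with the previous paragraph, this gives precisely the second alternative: $\iota_zx$ is connected to $y$ by a direct null geodesic $\gamma^{(-)}$, $\iota_zk_x$ is coparallel to $\gamma^{(-)}$ at $0$, and $-k_y$ is the parallel transport of $\iota_zk_x$ along $\gamma^{(-)}$. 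Since $\iota_z$ preserves $t$ and hence $\omega_x$, the future-directedness condition $k_x\vartriangleright 0$ is preserved, and the two singular loci do not overlap generically due to the sign flip in the $z$-component of the covector, so no cancellation occurs.

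The main obstacle is that AdS itself is not globally hyperbolic, so neither Radzikowski's characterization of Hadamard states nor the pullback theorem can be invoked on $W$ as a distribution on the full AdS without further care. I would circumvent this either by pulling the whole problem back to the conformally related half-Minkowski spacetime $\uppmink$, on which the rescaled Klein-Gordon operator carries a singular boundary potential but admits Melrose-style propagation of singularities accommodating reflected bicharacteristics, or by performing the microlocal analysis directly on the local expression \eqref{eq:twopointfunct_pads_2sol} via an oscillatory integral representation derived from the Mellin--Barnes form of $\hyp$. Tracking the $i\epsilon$-prescription through these manipulations, in particular verifying its compatibility with $\iota_z$ (which preserves $t$), is the subtle step that fixes the sign conventions in both alternatives of the statement.
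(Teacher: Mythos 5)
The first thing to note is that the paper does not prove Theorem \ref{teo:wfset_2point_pads} at all: the statement is imported verbatim from \cite[Theorem 4.19]{dappia_ads2}, so the only internal material your attempt can be checked against is Equation \eqref{eq:hadpar_nonrefl_pads_D} and the surrounding discussion. Your structural picture is consistent with that material. One small correction: in \eqref{eq:cordaldist_pads} the prefactor $l^2/(2zz')$ also changes sign under $z\mapsto -z$, not only the numerator, so the clean statement is $u(\iota_z x,y)=1-u(x,y)$; the two branch points $u=1$ and $u=0$ of the hypergeometric expressions in \eqref{eq:twopointfunct_pads_2sol} are exchanged by the reflection, the singular support splits into a direct and a reflected null cone exactly as in the two alternatives of the statement, and the covector bookkeeping under the pullback by $\iota_z\times\mathrm{id}$ is correct. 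Your worry about cancellation is actually moot for a cleaner reason than the one you give: the two cones are disjoint in $\pads\times\pads$, since requiring both $\sigma_e(x,y)=0$ and $\sigma_e(\iota_z x,y)=0$ forces $zz'=0$, which is excluded in the open Poincar\'e patch.

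The genuine gap is that the central computation is never carried out. In your second paragraph you determine $\mathrm{WF}(W)$ by invoking Radzikowski's theorem, but that theorem presupposes a globally hyperbolic background and a globally defined quasi-free Hadamard state, neither of which is available for the chordal bisolution $W$ on the $z\to-z$ doubling of $\pads$; your final paragraph concedes exactly this and then defers to two alternative repairs without executing either. Worse, both repairs as described deliver only an upper bound: propagation-of-singularities results for asymptotically AdS spacetimes, as in \cite{asym_Gannot:2018jkg}, give inclusions for the wave-front set along (generalized, reflected) bicharacteristics, not the equality asserted in the theorem. To obtain the equality one must actually (i) extract the frequency constraint $k_x\vartriangleright 0$ from the fact that \eqref{eq:twopointfunct_pads_2sol} is the boundary value, in the sense of \cite[Theorem 8.1.6]{horror1}, of a function holomorphic in a tube in $t-t'$ determined by the $i\epsilon$-prescription on $u_\epsilon$, (ii) read off the singular support from the explicit branch points of $\hyp$ at $u=0,1$, and (iii) establish the lower bound by checking that the leading Hadamard coefficients of the direct and of the reflected piece are non-vanishing, so that every covector allowed by (i) and (ii) genuinely occurs. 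None of (i)--(iii) appears in the proposal, so as written it is a plausible and well-oriented strategy rather than a proof.
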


\noindent A further advantage of considering Equation \eqref{eq:twopointfunct_pads_2sol} is that, as shown in \cite{dappia_ads}, these are two-point correlation functions of a ground state for the quantum theory associated to Equation \eqref{eq:KG-AdS} and, therefore, it is possible to reconstruct directly from them the advanced and retarded fundamental solutions associated to the Dirichlet and Neumann boundary conditions as well as the associated causal propagator, namely, working at the level of integral kernels:
\begin{gather}\label{Eq: propagators in PAdS}
	\Delta_{D/N}^{\pads}(x,y)=\Delta_{+,D/N}^{\pads}(x,y)-\Delta_{+,D/N}^{\pads}(y,x)\\
	\Delta_{R,D/N}^{\pads}(x,y)=\Theta(t-t^\prime)\Delta_{+,D/N}^{\pads}(x,y)\quad\textrm{and}\quad\Delta_{A,D/N}^{\pads}(x,y)=-\Theta(t^\prime-t)\Delta_{+,D/N}^{\pads}(x,y)
\end{gather} 

\begin{remark}
In view of these data, on $\pads$ and considering Dirichlet or Neumann boundary conditions, it is possible to extend slavishly the notion of regular, local and micro-causal functionals as per Definition \ref{def:reg_funct} and \ref{def:microca}, whereas the algebraic structure codified by the product as per Equation \eqref{eq:deformmicroca} has the same formal expression with $\Delta_+$ replaced either by $\Delta_{+,D}^{\pads}$ or by $\Delta_{+,N}^{\pads}$ as per Equation \eqref{eq:twopointfunct_pads_2sol}. As we shall see in the following section, the constraint to have $\nu$ smaller than $1$ to consider Neumann boundary conditions will prevent us from considering this case when working with the Lorentzian Wetterich equation. Hence, in the following we shall not consider this case further. 
\end{remark}

\begin{remark}
More subtle is the generalization of the notion of Wick polynomials in the case of a globally hyperbolic spacetime with a timelike boundary. On the one hand Definition \ref{def:normal_ord} can be translated to this setting mutatis mutandis, while, on the other hand, it is not obvious that it is possible to give a local and covariant definition of Wick polynomials since the results of Radzikowski \cite{radz,Radzikowski:1996ei} do not apply to this setting. Therefore it is unclear whether a counterpart of the Hadamard parametrix exists for the two-point functions individuated in Theorem \ref{teo:wfset_2point_pads}. While, on a generic globally hyperbolic spacetime with a timelike boundary, this question is still open, in the case of $\pads$, we can exploit the explicit analytic expressions in Equation \eqref{eq:twopointfunct_pads_2sol}. As shown in \cite{dappia_ads}, it holds that, focusing on $\Delta_{+,D}^{\npads_4}$ in view of the applications in Section \ref{sec:applications}, if we consider a geodesic, convex neighbourhood $\mathcal{O}\subset\npads_4$, we can write for every $x,y\in\mathcal{O}$
$$\Delta_{+,D}^{\npads_4}=H^{\npads_4}_D(x,y)+W(x,y),$$
where $W$ is a smooth function while, 
\begin{equation}
	\label{eq:hadpar_nonrefl_pads_D}
H^{\npads_4}_D(x,y)=H^{\npads_4}(x,y)-H^{\npads_4}(\iota_z(x),y).
\end{equation}
Working with Cartesian coordinates on $\pads$ such that $x\equiv(t,z,x_1,\dots x_{d-1})$, $\iota_z(x)=(t,-z,x_1,\dots,x_{d-1})$ and
\begin{equation}
    \label{eq:hadpar_nonrefl_pads}
    \begin{aligned}
         H^{\npads_4}(x,y)=\mathcal{N}\lim_{\epsilon\rightarrow 0^+} &\frac{1}{\Gamma(3/2+\nu)\Gamma(1/2+\nu)}\frac{2l^2}{\sigma_\epsilon(x,y)}+\frac{1}{\,\Gamma(1/2+\nu)\Gamma(-1/2+\nu)}\log \left(\frac{\sigma_\epsilon(x,y)}{2l^2}\right)+\\+&\left[\frac{2\gamma-1+\psi(3/2+\nu)+\psi(1/2+\nu)}{\Gamma(1/2+\nu)\Gamma(-1/2+\nu)}-\frac{1/2+\nu}{\Gamma(3/2+\nu)\Gamma(1/2+\nu)}\right],
    \end{aligned}
\end{equation}
where $\sigma_\epsilon(x,y)=\sigma(x,y)+i\epsilon (t-t')$, $\gamma$ is the \emph{Euler gamma constant}, while $\psi(x)\defeq \Gamma'(x)/\Gamma(x)$ is the \emph{digamma function}. In addition 
$$\mathcal{N}=\frac{\Gamma(3/2+\nu)\Gamma(1/2+\nu)}{2l^2}.$$
Observe that $H(\iota_z(x),y)$ is a by-product of the Dirichlet boundary condition and it is a smooth contribution whenever we consider the convex geodesic neighbourhood such that $\mathcal{O}\cap\partial\man=\emptyset$. This is a translation at the level of the local form of a Hadamard two-point distribution that on $\pads$, due to Theorem \ref{teo:wfset_2point_pads}, singularities are reflected at $\partial\man$.
\end{remark}

\paragraph{Scalar quantum field theory on $(\uppmink,\eta)$ --} On top of $\uppmink$ we consider a real scalar field $\widetilde{\phi}_0:\uppmink\to\mathbb{R}$ solution of the Klein-Gordon equation $(m\geq 0)$
\begin{equation}\label{eq:KG-Mink}
	P\widetilde{\phi}_0=(\Box_g-m^2)\widetilde{\phi}_0=0.
\end{equation}
The associated classical and quantum theory have been thoroughly studied in the literature and a recent investigation of the interplay between boundary conditions and the existence of on-shell Hadamard two-point correlation functions, especially in connection with the Casimir effect, can be found in \cite{dappia_casimir}. For our purposes we are only interested in considering Dirichlet and Neumann boundary conditions. In these two cases the underlying two-point correlation function can be constructed directly from the Poincar\'e invariant ground state on the whole Minkowski spacetime using the method of images, \textit{i.e.}, at the level of integral kernels
    \begin{equation}
    \label{eq:2pointfunct_uppmink_methodimage}
    \Delta^{\uppmink}_{+,D/N}(x,y)\defeq \Delta^{\mink}_+(x,y)\mp \Delta^{\mink}_+(\iota_z(x), y),
    \end{equation}
where, as in Equation \eqref{eq:hadpar_nonrefl_pads}, $\iota_z$ is the map reversing the sign of the $z$-component of $x\equiv(t,z,x_1,\dots,x_{d-1})$, while $\Delta_+^\mathbb{M}\in\mathcal{D}'(\mink\times\mink)$ is the 2-point correlation function of the ground state associated to the operator $P$ on the whole Minkowski spacetime $(\mink,\eta)$. Furthermore Equation \eqref{eq:2pointfunct_uppmink_methodimage} entails that 
\begin{equation}\label{Eq: WF-half-Minks}
	\mathrm{WF}(\Delta^{\uppmink}_{+,D/N})=\{(x,k_x;y,k_y)\in T^*(\uppmink\times\uppmink)\setminus \{0\}\,|\,(x,k_x)\sim_\pm (y,-k_y),\;k_x\vartriangleright 0\},
\end{equation}
where the symbols are defined as in Theorem \ref{teo:wfset_2point_pads}.

\begin{remark}
	In view of these data, similarly to the case of $\pads$ and considering Dirichlet or Neumann boundary conditions, it is possible to extend slavishly the notion of regular, local and micro-causal functionals as per Definition \ref{def:reg_funct} and \ref{def:microca}, whereas the algebraic structure codified by the product as per Equation \eqref{eq:deformmicroca} has the same formal expression with $\Delta_+$ replaced by $\Delta^{\uppmink}_{+,D/N}$ as per Equation \eqref{eq:2pointfunct_uppmink_methodimage}. In comparison to $\pads$ it is easier to translate Definition \ref{def:normal_ord} to this setting since, either by direct inspection of Equation \eqref{eq:2pointfunct_uppmink_methodimage} or by applying the method of images to $H^\nmink$, the Hadamard parametrix associated to $\Delta_+^\mathbb{M}$, the counterpart for $\Delta^{\uppmink}_{+,D/N}$ reads
		\begin{equation}
		\label{eq:hadparametrix_uppmink_methodimage}
		H^{\uppmink}_{D/N}(x,y)\defeq H^\mink(x,y)\mp H^\mink(x,\iota_z y).
	\end{equation}
\end{remark}

\begin{remark}
	In view of the preceding analysis, it seems as if the framework of pAQFT can be adapted to the scenario of a globally hyperbolic spacetime with a timelike boundary without major changes. Yet, this would be a hasty conclusion since we foresee potential hurdles to work in a general scenario. We comment on them succinctly:
	\begin{enumerate}
		\item The construction of the Lorentzian Wetterich equation shall rely chiefly on the Hadamard parametrix associated to a Hadamard two-point distribution as per Definition \ref{Def: 2-pt quantum state} and on our knowledge of its explicit form. While, in absence of a timelike boundary, this is by-product of the seminal papers of Radzikowski \cite{radz,Radzikowski:1996ei}, on a generic globally hyperbolic spacetime with a timelike boundary, this correspondence is still under investigation and the results of this paper are strongly reliant on the existence of a discrete isometry, $z\to -z$ in both $\pads$ and $\uppmink$ which allows to use the method of images.
		\item The class of boundary conditions which can be imposed at $\partial\man$ is rather vast as shown in \cite{dappia_wave}. Among the possible choices, notable are those of dynamical type such as the Wentzell boundary conditions, for which the existence of a Hadamard two-point correlation function has been proven in \cite{Dappiaggi:2022dwo}.	In these scenarios, where the restriction of the underlying field to $\partial\man$ is constrained to obey a different dynamics, there exists a consistent quantization scheme at the level of free field theories. Yet, in order to extend the whole framework of pAQFT to these scenarios, on the one hand, it would be desirable to generalize the class of action functionals as in Equation \eqref{eq:interact_action} so to allow also for additional non-linear terms intrinsically defined on $\partial\man$. On the other hand, this leeway would force us to introduce on $\partial\man$ an intrinsic time-ordered product, compatible with the one defined in $\mathring{\man}$. All these hurdles appear to require a lengthy analysis and we choose to focus on them in future work, restricting in this paper the attention to simpler scenarios which, nonetheless, allow us to make manifest the effect of $\partial\man$ and of the choice of specific boundary conditions in the solution theory of the Lorentzian Wetterich equation.
	\end{enumerate}
\end{remark}

\section{Lorentzian Wetterich Equation in the pAQFT formalism}

In this chapter we introduce the Lorentzian Wetterich equation using the algebraic approach to quantum field theory. In this endeavor we shall first follow the analysis presented in \cite{mother}, generalizing it to the case of an interacting real scalar field theory on $\uppmink$ and on $\pads$. 

\subsection{Generating Functionals}
\label{sec:genfunc}
Following \cite{mother} we present the building blocks of the functional renormalization scheme employing the formalism introduced in the past sections. In this section we consider $(\man,g)$ to be a globally hyperbolic Lorentzian manifold with empty boundary, removing only at a later stage this condition. As a starting point we define the \emph{\textbf{classical current}} $J\in\reg$, such that, for all $\chi\in\mathcal{E}(\man)$ 
\begin{equation}
    \label{eq:class_current}
    J(\chi)\defeq X_j(\chi)=\int_{\man} d\mu_g(x)\, j(x)\chi(x),
\end{equation}
where $d\mu_g$ is the volume element constructed out of the metric $g$ while $j\in C^\infty_0(\man)$. Letting $\omega$ be a quasi-free Hadamard state associated to the Klein-Gordon operator $P=\Box_g-m^2-\xi R$, see \cite{Khavkine:2014mta}, we call \textbf{\emph{generating functional}}
\begin{equation}
    \label{eq:genZ}
    Z(j)\defeq \omega(R_V[S(J)]),
\end{equation}
where $R_V$ is the Bogoliubov map defined in Equation \eqref{eq:bogoliub_map} while $S$ is the local S-matrix, see Equation \ref{eq:local_Smatrix}. In addition we call \emph{\textbf{connected generating functional}} $W(j)$, the solution to the equation
\begin{equation}
    \label{eq:genW}
    e^{W(j)}\equiv\sum_{n=0}^\infty \frac{1}{n!} W^{(n)}(j) = Z(j),
\end{equation}
where all these identities hold true at the level of integral kernels. Hence for $f_1,\ldots,f_n\in C^\infty_0(\man)$
\begin{equation*}
    \begin{gathered}
        \frac{1}{i^n}\langle W^{(n)}(j),f_1\tensor\ldots\tensor f_n\rangle|_{j=0}= -i\,\omega^C\circ R_V(X_{f_1}\cdot_T\ldots\cdot_T X_{f_n}),
    \end{gathered}
\end{equation*}
where $\omega^C$ denotes the connected component of the state $\omega$. This is defined by
$$\omega_C(\chi(x_1)\star\dots\star\chi(x_n))=(-i)^n\frac{\delta^n}{\delta f(x_1)\dots\delta f(x_n)}\left.\log\omega[\exp_\star(iX_f)]\right|_{f=0},$$
where $\exp_\star$ entails that the exponential is defined as a series with respect to the $\star$-product as in Equation \eqref{eq:deformreg} while $X_f$ is a regular functional as per Definition \ref{def:reg_funct}.

We observe that, when we consider either the Minkowski upper half-space $(\uppmink,\eta)$ or the Poincaré patch of Anti-de Sitter $(\pads,g)$, the above definitions, in particular Equations \eqref{eq:genZ} and \eqref{eq:genW} can be left unaltered, the only exception being that one needs to consider states which are of Hadamard form with respect to the definition tailored to this class of backgrounds. These are completely determined by the two-point correlation functions introduced in Section \ref{Sec: QFT with timelike boundaries}, see \cite{dappia_ads2,dappia_casimir}.

\subsection{Functional Renormalization}
\label{sec:reggenfun}
In this section we address the problem of the computation of the generating functionals introduced in Section \ref{sec:genfunc}. Following the approach customary in the study of the \emph{renormalization group flow}, we introduce an energy scale $k$ to \emph{regularize} the generating functionals. This is done according to the prescriptions of Wilsonian renormalization, see \cite{wilson}. 

The idea is to use the energy scale $k$ to obtain $Z_k(j)$ and $W_k(j)$, a regularized version of the generating functionals in Equation \eqref{eq:genZ} and \eqref{eq:genW}. The regularization should be devised in such a way that $Z_k(j)$ and $W_k(j)$ contain only the degrees of freedom with energy $E>k$ by suppressing the low energy modes. The requirements that $Z_k(j)$ must meet are
\begin{itemize}
    \item  $Z_k(j)\xrightarrow{k\rightarrow 0} Z(j)$, that is, in the limit $k\rightarrow 0$ all energy modes contribute;
    \item In the limit $k\rightarrow +\infty$ all energy modes are suppressed, hence there are no quantum contributions and the system is ruled by the classical action. This condition will become clear in the following, when we will introduce the \emph{average effective action} functional.
\end{itemize}

This construction requires the introduction of a \emph{\textbf{regulator}}, which is a local functional, dependent on the energy scale $k$. Following \cite{mother} we choose $Q_k\in\loc$ with the form
\begin{equation}
    \label{eq:regulator}
    Q_k(\chi)\defeq -\frac{1}{2}\int_\man d\mu_g(x) k^2f(x)\chi(x)^2,
\end{equation}
where $k>0$.

As a first step we introduce the regulator in the classical free action $I_0$ identified in Equation \eqref{eq:interact_action} to obtain the \emph{regularized action} functional
\begin{equation}
    \label{eq:reg_classaction}
    I_{0,k}\defeq I_0+Q_k\in\loc,
\end{equation}
which yields the regularized equation of motion
$$P_{0,k}\chi=(\Box_g-m^2-k^2-\xi R)\chi=0,$$
where $k$ is the regularizing energy scale introduced before, acting as a supplementary mass term. Although for $k=0$ the operator reduces to the one in Equation \eqref{EQ:KG}, we feel appropriate to include a subscript $0$ to recall that it is associated to the action functional $I_0$. Subsequently we define the \emph{\textbf{regularized generating functionals}}
\begin{equation}
    \label{eq:reg_genZW}
    \begin{gathered}
    Z_k(j)\defeq \omega(R_V[S(J+Q_k)]), \\
    W_k(j) \defeq -i\log Z_k(j).
    \end{gathered}
\end{equation}
The first functional derivative of the connected generating functional $W_k$ has the interpretation of a \emph{\textbf{classical mean field configuration}}, namely, at the level of integral kernel we define, for $j\in C^\infty_0(\man)$
\begin{equation}
    \label{eq:class_field}
    \phi_{j,k}(x)\defeq \frac{\delta\,W_k(j)}{\delta j(x)}=\frac{1}{Z_k(j)} \omega(R_V[S(J+Q_k)\cdot_T \chi(x)]).
\end{equation}
In the following proposition it is shown that the relation connecting $\phi$ to $j$ may be inverted. The proof is given in \cite[Prop. 3.5]{mother} in the case where $(\man,g)$ has an empty boundary, but the generalization to a globally hyperbolic spacetime with a timelike boundary is straightforward and, therefore, we omit it.
\begin{proposition}
    \label{prop:invert_j}
    Let $(\man,g)$ be a Lorentzian manifold, possibly with a non-empty, time-like boundary, and let $\omega$ be a quasi-free Hadamard state on $(\man,g)$ associated to the Klein-Gordon operator $P=\Box_g-m^2-\xi R$. Let $\phi\in \mathcal{E}(\man)$ be such that $\phi=i\Delta_F \widetilde{j}_0$ with $\widetilde{j}_0\in C^\infty_0(\man)$, while $\Delta_F=\Delta_+ + i\Delta_A\in\mathcal{D}'(\man\times\man)$ is the Feynman propagator, where $\Delta_+$ is the two-point correlation function of $\omega$ and $\Delta_A$ the advanced fundamental solution associated to $P$. Then there exists and it is unique $j_{\phi,k}\in C^\infty_0(\man)\llbracket V \rrbracket$, solution to Equation \eqref{eq:class_field}, constructed as a perturbative series with respect to the interaction term $V$. More explicitly  $j_{\phi,k}$ can be obtained by solving
    \begin{equation}
        \label{eq:prop_invertj}
        j=-P_0\phi-Q_k^{(1)}(\phi)-\frac{1}{Z_k(j)}\omega(S(V)^{-1}\star[S(V+Q_k+J)\cdot_T \mathcal{T}V^{(1)}]),
    \end{equation}
    by induction with respect to the perturbative order in $V$.
\end{proposition}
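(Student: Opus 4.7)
The plan is to derive Equation \eqref{eq:prop_invertj} directly from the definition \eqref{eq:class_field} of the classical mean field $\phi_{j,k}$, and then to solve the resulting fixed-point relation perturbatively in the interaction $V$. First I would rewrite Equation \eqref{eq:class_field} by expanding $R_V[F]=S(V)^{-1}\star[S(V)\cdot_T F]$ and using the time-ordered exponential property $S(V)\cdot_T S(J+Q_k)=S(V+J+Q_k)$, obtaining
$$\phi_{j,k}(x)=\frac{1}{Z_k(j)}\,\omega\bigl(S(V)^{-1}\star\bigl[S(V+J+Q_k)\cdot_T\chi(x)\bigr]\bigr).$$
Applying $P_0$ to both sides and invoking the Schwinger--Dyson-type identity internal to the time-ordered expectation, which encodes that the Euler--Lagrange equation associated with the total action $I_0+V+Q_k+J$ reads $P_0\chi=-j-V^{(1)}(\chi)-Q_k^{(1)}(\chi)$, one finds
$$P_0\phi=-j-Q_k^{(1)}(\phi)-\frac{1}{Z_k(j)}\omega\bigl(S(V)^{-1}\star\bigl[S(V+Q_k+J)\cdot_T\mathcal{T}V^{(1)}\bigr]\bigr),$$
where the $Q_k^{(1)}$ contribution factors out of the expectation because it is linear in the field. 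Rearranging yields Equation \eqref{eq:prop_invertj}.

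The second step is to solve this equation as a formal power series in $V$, writing $j_{\phi,k}=\sum_{n\geq 0}j_{\phi,k}^{(n)}$, where $j_{\phi,k}^{(n)}$ denotes the order-$n$ contribution. Since the last term in Equation \eqref{eq:prop_invertj} contains $\mathcal{T}V^{(1)}$, it is of order at least one in $V$. Consequently, at zeroth order one has
$$j_{\phi,k}^{(0)}=-P_0\phi-Q_k^{(1)}(\phi)=-P_{0,k}\phi.$$
By hypothesis $\phi=i\Delta_F\widetilde{j}_0$, and $P_0\circ\Delta_F$ acts as the identity on $C^\infty_0(\man)$ up to a conventional sign, so $P_0\phi\in C^\infty_0(\man)$. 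Combined with $Q_k^{(1)}(\phi)=-k^2 f\phi\in C^\infty_0(\man)$, which holds because $f\in C^\infty_0(\man)$, this establishes $j_{\phi,k}^{(0)}\in C^\infty_0(\man)$. The inductive step proceeds by assuming $j_{\phi,k}^{(0)},\dots,j_{\phi,k}^{(n-1)}$ determined and lying in $C^\infty_0(\man)$; then the $n$-th order coefficient of the right-hand side depends only on these lower-order data, via the formal expansion of $S(V+Q_k+J)$ and $S(V)^{-1}$, and so it uniquely fixes $j_{\phi,k}^{(n)}$.

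The main technical hurdle is to check that each inductively defined coefficient $j_{\phi,k}^{(n)}$ lies indeed in $C^\infty_0(\man)$ and that the time-ordered products and $\star$-contractions appearing on the right-hand side are genuinely well-defined. This is controlled by the microlocal spectrum condition \eqref{eq:microloc_spec_cond} on $\Delta_+$ and by the H\"{o}rmander product criterion \cite[Thm. 8.2.12]{horror1}, combined with the Epstein--Glaser inductive construction of $\mathcal{T}$ on local functionals recalled in Section \ref{sec:inter}. Compact support of $j_{\phi,k}^{(n)}$ ultimately follows from the compact supports of $\widetilde{j}_0$ and of the cutoff $f$ entering $V$ and $Q_k$, propagated through the causal structure of the retarded and advanced fundamental solutions. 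In the scenarios of $\uppmink$ and $\pads$ with Dirichlet or Neumann boundary conditions the wavefront set of the two-point function is given explicitly by Theorem \ref{teo:wfset_2point_pads} and Equation \eqref{Eq: WF-half-Minks}, so the same micro-causal calculus applies and the proof of \cite[Prop. 3.5]{mother} extends mutatis mutandis.
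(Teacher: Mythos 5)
The paper does not actually spell out a proof of this proposition: it defers to \cite[Prop. 3.5]{mother} and asserts that the generalization to a timelike boundary is straightforward, so the comparison is necessarily with that reference. Your argument correctly reconstructs its two-step structure --- deriving the fixed-point relation \eqref{eq:prop_invertj} from \eqref{eq:class_field} via the Schwinger--Dyson/field-equation identity for time-ordered products (the same manipulation used at first order in Remark \ref{rmk:lozioremark} via \cite[Lemma B.1]{mother}), and then inverting order by order in $V$ using that the overall $\mathcal{T}V^{(1)}$ factor raises the perturbative order so that the $n$-th coefficient is fixed by $j_{\phi,k}^{(0)},\dots,j_{\phi,k}^{(n-1)}$ --- and your closing paragraph supplies precisely the boundary-case input (the explicit wavefront sets of $\Delta^{\uppmink}_{+,D/N}$ and $\Delta^{\pads}_{+,D}$ from Equation \eqref{Eq: WF-half-Minks} and Theorem \ref{teo:wfset_2point_pads}) on which the paper's ``straightforward generalization'' claim implicitly rests.
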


Since the equation for the classical field is invertible on account of Proposition \ref{prop:invert_j}, for $\phi\in C^\infty(\man)$ a fixed, classical field configuration, we can apply the \emph{Legendre transform} to the functional $W_k$ to obtain the \emph{\textbf{regularized effective action}}
$$\widetilde{\Gamma}_k(\phi)\defeq W_k(j_\phi)-J_\phi(\phi),$$
where 
$$J_\phi(\chi)\defeq \int_\man d\mu_g(x)\,j_\phi(x)\phi(x).$$
This functional may be thought as the quantum counterpart of the classical action, as it satisfies the \emph{quantum equation of motion}
\begin{equation}
    \label{eq:quant_eq_motion}
    \frac{\delta\, \widetilde{\Gamma}_k(\phi)}{\delta \phi(x)}=-j_\phi(x),
\end{equation}
by definition of Legendre transform. This entails that, for $j_\phi=0$, the regularized effective action is minimized by the classical configuration $\phi(x)=\omega(R_V[S(Q_k)\cdot_T \chi(x)])/{Z_k(0)}$.
This leads to the definition of the \emph{\textbf{average (regularized) effective action}}
\begin{equation}
    \label{eq:eff_act}
    \Gamma_k(\phi)\defeq \widetilde{\Gamma}_k(\phi)-Q_k(\phi).
\end{equation}
\begin{remark}[$\Gamma_{k=0}$ is the classical action at leading order]
    \label{rmk:lozioremark}
    Bearing in mind Equation \eqref{eq:interact_action}, it is possible to prove that $\Gamma_0^{(1)}=\nord{I^{(1)}}$ at leading order in the perturbation $V$, \iee that ${\Gamma_0}$ is the normal ordering of the classical, interacting action, modulo constant terms. This can be shown by observing that, at first order in $V$ and in the limit $k\rightarrow 0$, Equation \eqref{eq:prop_invertj} in Proposition \ref{prop:invert_j} reads
    \begin{equation}
    \label{eq:gattobianco}
    \begin{gathered}
         j_\phi=-P_0\phi-\frac{1}{Z(j_\phi)}\omega(S(V)^{-1}\star[S(V+J_\phi)\cdot_T \mathcal{T}V^{(1)}])=\\=-P_0\phi-\frac{\omega((\mathbb{I}-iV)\star[(\mathbb{I}+iV)\cdot_T S(J_\phi)\cdot_T \mathcal{T}V^{(1)}])}{\omega((\mathbb{I}-iV)\star[(\mathbb{I}+iV)\cdot_T S(J_\phi)])}+\mathcal{O}(V^2)=-P_0\phi-\frac{\omega( S(J_\phi)\cdot_T \mathcal{T}V^{(1)})}{\omega(S(J_\phi))}.
    \end{gathered}
    \end{equation}
    Using \cite[Lemma B.1]{mother} Equation \eqref{eq:gattobianco} can be rewritten in the form 
    $$j_\phi=-P_0\phi-\mathcal{T}V^{(1)}(\phi)=-I_0^{(1)}\phi-\mathcal{T}V^{(1)}(\phi),$$
    that is the time-ordering of $I^{(1)}$. This entails that
    $$\Gamma_0^{(1)}=\nord{I^{(1)}},$$
    as the time-ordering prescription on local functionals coincides with normal ordering.
\end{remark}

The conclusion that can be drawn from Remark \ref{rmk:lozioremark} is that the effective action, in the limit $k\rightarrow 0$ and at leading order in $V$, yields the same equations of motion as the classical action, modulo the normal ordering prescription necessary to define the local interaction term $V$. Then the natural interpretation is that all quantum contributions are encoded in $\Gamma_{k=0}-I$, $I$ being as per Equation \eqref{eq:interact_action}.

It is worth investigating the limit $k\rightarrow +\infty$. Following the heuristic properties required at the beginning of this section, we aim for a scenario where, in such a limit, all energy modes are suppressed. This means that all quantum contributions to the action are negligible, and it should be expected that $\Gamma_{k=+\infty}-I=0$. This is proven in \cite[Theorem 3.8]{mother} under the hypothesis that the underlying spacetime is a Lorentzian manifold without boundary, ultra-static and of bounded geometry -- for a definition, refer to \cite{bounded_geo}. 

Observe that Minkowski upper half-space $(\uppmink,\eta)$ as per Example \ref{Exam: backgrounds} abides by these requirements while $(\pads,g)$, the Poincaré patch of Anti-de Sitter spacetime, fails to be ultrastatic as one can infer from Equation \eqref{eq:padsmetric}. Yet this hurdle can be removed by means of a conformal transformation, and hence it is possible to reformulate \cite[Theorem 3.8]{mother} in the case where the underlying manifold possesses a time-like, possibly conformal boundary. Prior to this we outline in the following remark the techniques of perturbative agreement employed in the proof.
\begin{remark}[Principle of perturbative agreement]
    \label{rmk:pert_agree}
    As discussed at the end of Section \ref{sec:inter} it is proven in \cite{pert_agr} that, accounting for a quadratic interaction exactly or by means of perturbation theory yields $*$-isomorphic algebras of observables. This can be applied to the quadratic regulator $Q_k$ in Equation \eqref{eq:reg_classaction} which could be regarded as a mass term, yielding the free action $I_{0,k}\defeq I_0+Q_k$, or as an interaction term in $V_k\defeq V+Q_k$. We denote by $\alg(\man)$ the *-algebra of observables induced by the free action $I_0$, while $\alg_k(\man)$ is the one constructed with respect to the free action $I_{0,k}$. We denote by $\star,\cdot_T$ the star and time-ordered products defined on $\alg(\man)$ and $\star_k,\cdot_{T,k}$ the ones on $\alg_k(\man)$.

    Then the perturbative description of the interaction outlined in Section \ref{sec:inter} yields two interacting *-algebras, $\alg\llbracket V_k \rrbracket$ and $\alg_k\llbracket V \rrbracket$, \iee the formal power series in $V_k,\,V$ with coefficients lying respectively in $\alg(\man),\,\alg_k(\man)$. As shown in \cite{pert_agr} and \cite[Appendix A]{mother}, these algebras are related by the classical M{\o}ller isomorphism $r_{Q_k}:\alg_k(\man)\rightarrow \alg(\man)$ such that, for $F\in\alg_k(\man)$ and $\chi\in \mathcal{E}(\man)$
    $$(r_{Q_k}F)(\chi)\defeq F(\moll_{Q_k}\chi),$$
    where $\moll_{Q_k}$ is the classical M{\o}ller map, see Definition \ref{def:mollermap}. Moreover, given $\Delta_F,\Delta_{F,k}$ the Feynman propagators constructed with respect to $I_0,I_{0,k}$, the map $\gamma_k:\loc\rightarrow \loc$ defined as
    $$\gamma_k\defeq \exp\left(\int_{\man\times\man} d\mu_g(x)d\mu_g(y)\,(\Delta_{F,k}-\Delta_F)(x,y)\frac{\delta^2}{\delta\varphi(x)\varphi(y)}\right),$$
    intertwines between the time-ordered products $\cdot_T,\cdot_{T,k}$, namely for $F,G\in\loc$,
    $$\gamma_k(F\cdot_T G)=\gamma_k \mathcal{T}(\mathcal{T}^{-1}F\cdot \mathcal{T}^{-1}G)=\mathcal{T}_k(\mathcal{T}_k^{-1}\gamma_k F\cdot \mathcal{T}_k^{-1}\gamma_k G)=\gamma_k F\cdot_{T,k}\gamma_k G.$$
\end{remark}
We have the tools to restate \cite[Theorem 3.8]{mother} in the case where the underlying manifold possesses a time-like, possibly conformal, boundary.
\begin{theorem}[Classical limit of the effective action] 
\label{teo:classlim_effact}
Let $(\man,g)$ be either the Minkowski upper-half space or the Poincaré patch of Anti-de Sitter, let $\alg_{\Delta_+}(\man)$ be the *-algebra of observables as per Definition \ref{def:deformreg} and let $\omega:\alg_{\Delta_+}(\man)\rightarrow \co$ be the state constructed out of the Hadamard two-point correlation functions as per Equation \eqref{eq:2pointfunct_uppmink_methodimage} or as per Equation \eqref{eq:twopointfunct_pads_2sol}. In the limit where $\supp Q_k\rightarrow \man$, it holds that, 
$$\Gamma^{(1)}_k\xrightarrow{k\rightarrow \infty} I^{(1)},$$
where the limit is taken in the topology of $\mathcal{E}^\prime(\man)$. 
\end{theorem}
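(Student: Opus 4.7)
The strategy is to mimic the proof of \cite[Theorem 3.8]{mother}, which handles the boundaryless ultrastatic case, and to verify that the new ingredients introduced by a timelike boundary, namely the reflected contributions contained in $\Delta_{+,D/N}^{\uppmink}$ and in $\Delta_{+,D}^{\pads}$, do not spoil the $k\to\infty$ asymptotics.

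Combining Equation \eqref{eq:quant_eq_motion} with Proposition \ref{prop:invert_j}, I would first write
\begin{equation*}
\Gamma_k^{(1)}(\phi) = P_0\phi + \frac{1}{Z_k(j_\phi)}\,\omega\bigl(S(V)^{-1}\star[S(V+Q_k+J_\phi)\cdot_T \mathcal{T}V^{(1)}]\bigr),
\end{equation*}
so that the desired convergence $\Gamma_k^{(1)}\to I^{(1)}$ reduces to showing that the above remainder tends to $\mathcal{T}V^{(1)}(\phi)$ in the topology of $\mathcal{E}'(\man)$. As in the proof of the boundaryless theorem, the natural way to interpret this object is through the principle of perturbative agreement described in Remark \ref{rmk:pert_agree}: applying the classical M{\o}ller map $\moll_{Q_k}$ and the intertwiner $\gamma_k$ reinterprets $Q_k$ as a quadratic modification of the free action, so that the star- and time-ordered products get replaced by $\star_k$ and $\cdot_{T,k}$, built from the regularized two-point distribution $\Delta_{+,k}$ and from the regularized Feynman propagator $\Delta_{F,k}$ associated with $P_{0,k} = \Box_g - m^2 - \xi R - k^2$, supplemented with the same Dirichlet or Neumann boundary conditions.

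After this reinterpretation, the quantum corrections of order $n\geq 2$ in $V$ appear as convolutions against tensor powers of $\Delta_{F,k}$, and the statement reduces to proving that each such contribution vanishes as $k\to\infty$ with $\supp Q_k\to \man$. On $\uppmink$, the method-of-images identity \eqref{eq:2pointfunct_uppmink_methodimage} presents $\Delta_{+,k}^{\uppmink}$ as the difference of a direct and a reflected full-Minkowski Wightman distribution, both with effective mass $m^2+k^2$: the direct piece decays by the bounded-geometry estimates already exploited in \cite{mother}, while the reflected one is dominated by the same bound since the map $\iota_z$ is an isometry. On $\pads$, the explicit expression \eqref{eq:twopointfunct_pads_2sol} depends on $k$ only through the shifted index $\nu_k=\tfrac{1}{2}\sqrt{1+4l^2(\overline{m}^2+k^2)}\to\infty$; asymptotic estimates for $\hyp(a,b,c,z)$ in the regime where the parameters grow, together with the Stirling decay of $\Gamma(1+2\nu_k)^{-1}$, deliver the required vanishing on any compact subset of $\pads\times\pads$. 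An alternative route is to pull the $\pads$ analysis back to $\uppmink$ via the conformal rescaling $g\mapsto \Omega^{-2}g$ with $\Omega=l/z$ that turns \eqref{eq:padsmetric} into the half-Minkowski line element, thereby reducing to the ultrastatic setting above.

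The main obstacle I anticipate is the uniform-in-$k$ control of the reflected piece of the Hadamard parametrix near $\partial\man$. While the direct contribution is handled verbatim as in \cite{mother}, the reflected term, although smooth in the interior, concentrates its singularity as the point approaches the boundary, and the adiabatic limit $\supp Q_k\to\man$ must be taken compatibly with the test-function supports implicit in evaluating $\omega$ on $R_V(\cdot)$, so that no spurious boundary contribution survives. Once this uniformity is secured, the remaining algebraic manipulations of \cite[Theorem 3.8]{mother} go through mutatis mutandis, and combining the resulting vanishing of the higher-order terms with the leading-order identity of Remark \ref{rmk:lozioremark} yields $\Gamma_k^{(1)}\to I^{(1)}$ in the weak-$\ast$ topology of $\mathcal{E}'(\man)$.
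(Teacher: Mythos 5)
Your proposal follows the same skeleton as the paper's proof: express $\Gamma_k^{(1)}$ via Proposition \ref{prop:invert_j}, invoke the principle of perturbative agreement of Remark \ref{rmk:pert_agree} to trade $\star,\cdot_T$ for $\star_k,\cdot_{T,k}$ built from $\Delta_{+,k}$ and $\Delta_{F,k}$, and then reduce everything to the vanishing of the pairings in Equation \eqref{eq:nonloso2} as $k\to\infty$, handling $\uppmink$ by the method of images exactly as the paper does. The one point where you genuinely diverge is the decay estimate on $\pads$: you propose to work directly with the closed form \eqref{eq:twopointfunct_pads_2sol}, using large-parameter asymptotics of $\hyp$ together with the Stirling decay of $\Gamma(1+2\nu_k)^{-1}$ (or, alternatively, a conformal reduction to $\uppmink$), whereas the paper first exploits staticity to write $\widetilde{\Delta}_{F,k}$ in terms of $\widetilde{\Delta}_{+,k}$ and then uses the Bessel-mode expansion of $\Delta_{+,k}$, concluding from $\lim_{\nu\to\infty}J_\nu(x)=0$ at fixed $x$. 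Your position-space route is viable but technically heavier, since uniform asymptotics of $\hyp$ with all three parameters growing must be controlled also near $u\to 1$ (lightlike separation), where the prefactor $u^{-\frac{d}{2}-\nu}$ no longer supplies exponential decay; the paper's mode-space argument sidesteps this by putting the entire $k$-dependence into the order of the Bessel functions. Note also that you flag a worry about uniform control of the reflected parametrix near $\partial\man$ that the paper does not address at all (it disposes of the half-Minkowski case ``by direct inspection''), so on that point you are, if anything, more cautious than the published argument. Overall the proof is correct and essentially reproduces the paper's strategy, with an acceptable alternative at the $\pads$ step.
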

\begin{proof}
    Observing that for all $\phi\in\mathcal{E}(\man)$
    \begin{equation*}
    \Gamma_k^{(1)}(\phi)=P_0\phi+\frac{1}{Z_k(j_\phi)}\omega(R_V[S(J_\phi+Q_k)\cdot_T \mathcal{T}V^{(1)}]),
    \end{equation*}
    it is shown in \cite[Lemma 3.7]{mother} by means of the principle of perturbative agreement, see Remark \ref{rmk:pert_agree}, that 
    \begin{equation}
        \label{eq:nonloso}
        \Gamma_k^{(1)}(\phi)=P_0\phi+\frac{1}{Z_k(j_\phi)}\omega_k(S_k(\gamma_kV-\gamma_k Q_k)^{-1}\star_k[S_k(\gamma_k V+J_\phi)\cdot_{T,k}\gamma_k \mathcal{T}V^{(1)}]).
    \end{equation}
    Here $\gamma_k,\star_k,\cdot_{T,k}$ are as per Remark \ref{rmk:pert_agree}, $S_k$ is the local S-matrix constructed using the time-ordering $\mathcal{T}_k$ while $\omega_k=\omega\circ r_{Q_k}$, see Remark \ref{rmk:pert_agree}.
    Using \cite[Lemma B.2]{mother}, Equation \eqref{eq:nonloso} may be written as
    \begin{equation}
    \label{eq:nonloso3}
    \Gamma_k^{(1)}(\phi)=P_0\phi+\frac{\omega_k(S_k(\gamma_k V_{\phi_0}-\gamma_k Q_{k,\phi_0})^{-1}\star_k(S_k(\gamma_k V_{\phi_0})\cdot_{T,k}\gamma_k \mathcal{T}V^{(1)}_{\phi_0})}{\omega_k(S_k(\gamma_k V_{\phi_0}))},
    \end{equation}
    where $F_{\phi_0}(\chi)\defeq F(\chi+\phi_0)$.
    
    As shown in the proof of \cite[Theorem 3.8]{mother}, on Minkowski spacetime the products $\star_k,\cdot_{T,k}$ reduce to the pointwise counterpart in the case where $k\rightarrow +\infty$. This is shown as a consequence of the fact that, for $F,G\in\loc$
    \begin{equation}
        \label{eq:nonloso2}
        \begin{gathered}
            |\langle F^{(n)},\Delta_{+,k}^{\tensor n} G^{(n)}\rangle|\xrightarrow{k\rightarrow \infty} 0,\\
            |\langle F^{(n)},\Delta_{F,k}^{\tensor n} G^{(n)}\rangle|\xrightarrow{k\rightarrow \infty} 0.
        \end{gathered}
    \end{equation}
    If we consider $(\uppmink,\eta)$ and Dirichlet or Neumann boundary conditions, the two-point correlation function of the underlying ground state is built out of Equation \eqref{eq:2pointfunct_uppmink_methodimage} and therefore Equation \eqref{eq:nonloso2} still holds true by direct inspection. 

    If we consider instead $\pads$ and the Dirichlet boundary condition, the starting point for drawing the same conclusion lies in observing that, being the background static, the integral kernel of the Feynman propagator can be expressed with respect to the 2-point correlation function as
    $$\widetilde{\Delta}_{F,k}(t)=\Theta(t)\widetilde{\Delta}_{+,k}(t)+\Theta(-t)\widetilde{\Delta}_{+,k}(-t),$$
    where $\widetilde{\Delta}_{F,k},\,\widetilde{\Delta}_{+,k}$ are as per \cite[Theorem 3.8]{mother}.
    It is then sufficient to note that, for large $k$, the 2-point correlation function $\Delta_{+,k}\in\mathcal{D}'(\pads\times\pads)$ relative to the Klein-Gordon operator $P_{0,k}$ is
    $$\Delta_{+,k}(x,y)=\mathcal{N}(zz')^{\frac{d}{2}}\int_0^\infty dp\,p\left(\frac{p}{r}\right)^{\frac{d-3}{2}}J_{\frac{d-3}{2}}(pr)\int_0^\infty dq\,q\frac{e^{-i\sqrt{p^2+q^2}(t-t'-i\epsilon)}}{\sqrt{2\pi(p^2+q^2)}}J_\nu(qz)J_\nu (qz'),$$
    \cf \cite{dappia_ads}, where $J_\nu$ is the Bessel function of the first kind and where
    $$\nu=\frac{1}{2}\sqrt{1+4l^2\left(m^2+k^2+\left(\xi-\frac{d-1}{4d}\right)R\right)}.$$
    To draw the sought conclusion it is sufficient to observe that, for fixed $x\in\re$, it holds that $\lim_{\nu\rightarrow \infty} J_\nu(x)=0$.
    Hence the 2-point correlation function and the Feynman propagator on $(\pads,g)$ abide by Equation \eqref{eq:nonloso2}. 
    Eventually this allows us to take the limit $k\rightarrow+\infty$ of Equation \eqref{eq:nonloso3}, where $\star_k,\cdot_{T,k}$ reduces to the pointwise product, hence
    $$\Gamma_\infty^{(1)}(\phi)=\lim_{k\rightarrow +\infty} P_0\phi+\frac{\omega_k(S_k(\gamma_k V_{\phi_0}-\gamma_k Q_{k,\phi_0})^{-1}\star_k[S_k(\gamma_k V_{\phi_0})\cdot_{T,k}\gamma_k \mathcal{T}V^{(1)}_{\phi_0})}{\omega_k(S_k(\gamma_k V_{\phi_0}))}= P_0\phi + V^{(1)}(\phi).$$
\end{proof}

\subsection{Lorentzian Wetterich Equation}
\label{sec:wett}
In this section we derive the Wetterich equation on a Lorentzian manifold following once more \cite{mother}. The rationale consists of computing how the generating functionals in Equation \eqref{eq:reg_genZW} scale under a change of $k$ by consistently suppressing energy modes in a continuous manner. 

As a starting point we evaluate the scaling of the regularized generating functional $W_k(j)$ for the connected, time-ordered $n$-point correlation functions. Considering $W_k(j)$ as per Equation \eqref{eq:reg_genZW}, with $j\in\mathcal{E}(\man)$ fixed, it holds that 
\begin{equation}
    \label{eq:polchinski_prev}
    \partial_k W_k(j)=-i\frac{1}{Z_k(j)}\omega(S(V)^{-1}\star[S(V)\cdot_T \partial_k S(J+Q_k)]),
\end{equation}
where any dependence on $k$ is carried by the regulator term and where $J$ is defined in Equation \eqref{eq:class_current}. The contribution $\partial_k S(J+Q_k)$ takes the form
$$\partial_k S(J+Q_k)=i\mathcal{T}[\partial_k Q_k]\cdot_T S(J+Q_k),\quad \mathcal{T}[\partial_k Q_k]=-k\nord{X_f^2} ,$$
where $X_f\in\reg$ is as per Definition \ref{def:reg_funct} while the normal ordering prescription $\nord{\;}$ is subordinated to the Hadamard parametrix, see Definition \ref{def:normal_ord} and Remark \ref{Rem: Covariant Wick Polynomials}. Consequently Equation \eqref{eq:polchinski_prev} can be rewritten in the form
\begin{equation}
    \label{eq:polchinski}
    \partial_k W_k(j)=-\frac{k}{Z_k(j)}\omega(S(V)^{-1}\star[S(V)\cdot_T S(J+Q_k)\cdot_T \nord{X_f^2}]),
\end{equation}
which, at the level of integral kernels, reads
$$\partial_k W_k(j)=-\int_\man d\mu_g(x) \;kf(x)\frac{1}{Z_k(j)}\omega(S(V)^{-1}\star[S(V)\cdot_T S(J+Q_k)\cdot_T \nord{\chi^2(x)}]).$$
Considering the Legendre transform of Equation \eqref{eq:polchinski} we may obtain the analogous equation for the average effective action $\Gamma_k(\phi)$, namely, from Equation \eqref{eq:eff_act}, it descends that 
$$\Gamma_k(\phi)=W_k(j_\phi)-J_\phi(\phi)-Q_k(\phi),$$
where $\phi\in\mathcal{E}(\man)$ is a fixed, classical field configuration. By deriving with respect to the energy scale $k$, we obtain that
\begin{equation}
    \label{eq:wetterich-prev}
    \partial_k \Gamma_k(\phi)=[\partial_k W_k](j_\phi)+\cancel{\langle W_k^{(1)}(j_\phi),\partial_k j_\phi\rangle} -\cancel{\langle \phi,\partial_k j_\phi\rangle}- [\partial_k Q_k](\phi),
\end{equation}
where we used that $\phi=W_k^{(1)}(j_\phi)$ does not depend on $k$ as it is a fixed configuration, while $j_\phi$ carries a dependency on $k$ from Equation \eqref{eq:prop_invertj}. Eventually this leads to the following equation for the effective action functional
\begin{equation}
    \label{eq:wetterich1}
    \partial_k \Gamma_k(\phi)=-\frac{k}{Z_k(j_\phi)}\omega(S(V)^{-1}\star[S(V)\cdot_T S(J_\phi+Q_k)\cdot_T \nord{X_f^2}])+kX_f^2(\phi).
\end{equation}
Inserting the Bogoliubov map $R_V$, defined in Equation \eqref{eq:bogoliub_map}, we obtain the \emph{\textbf{Lorentzian Wetterich equation}}
\begin{equation}
    \label{eq:wetterich}
    \partial_k \Gamma_k(\phi)=-\frac{k}{Z_k(j_\phi)}\omega(R_V[S(J_\phi+Q_k)\cdot_T \nord{X_f^2}])+kX_f^2(\phi),
\end{equation}
This equation has been derived in \cite{mother} on globally hyperbolic Lorentzian manifolds with empty boundary and it is a Lorentzian counterpart of the celebrated \emph{\textbf{Wetterich equation}}, devised in \cite{wett93} on Euclidean spacetimes. Differently from the original version of the Wetterich equation, Equation \eqref{eq:wetterich} is derived by means of a local regulator $Q_k$, that abides by the requirements of a local and covariant theory. 

\begin{remark}[Physical interpretation of the Wetterich equation]
    \emph{The physical picture emerges by considering the analysis performed in the previous section. In particular we have proven in Theorem \ref{teo:classlim_effact} that the regularized average effective action in the limit $k\rightarrow \infty$ coincides, up to a constant, with the classical action $I$. Subsequently we have shown that at $k=0$ the average effective action $\Gamma$ is the quantum corrected counterpart of the classical action $I$, and that it obeys to the quantum equation of motion, see Equation \eqref{eq:quant_eq_motion}.} 
    
   \emph{Let the theory space represents the abstract space of all possible theories, described at the classical level by the action $I$. A useful choice of coordinates of such space is given by the coupling constants, in our case beings the mass $m$, the interaction coupling constant $\lambda$ and the coupling to the scalar curvature, $\xi$. Given an initial condition at $k=+\infty$, \iee $\Gamma_{k=\infty}=I$, the Wetterich equation can be read as an evolution law for the regularized effective action. In this sense the Wetterich equation \eqref{eq:wetterich} describes a trajectory in the \textit{space of theories} under the change of the infrared cutoff parameter $k$.}
   
   \emph{Note that Equation \eqref{eq:wetterich} depends on the choice of the regulator $Q_k$. As a consequence, it should be expected that the trajectory in the space of theories is not unique, as it depends on the choice of $Q_k$. Eventually for $k\rightarrow 0$ the regulator, as discussed in Section \ref{sec:reggenfun}, must vanish, guaranteeing that for every choice of $Q_k$ the Wetterich equation flows at the same effective action $\Gamma_{k=0}$.}
\end{remark}

As it stands, the Lorentzian Wetterich equation \eqref{eq:wetterich} is not directly solvable. In order to do so and make contact with the physical picture, an approximation scheme is needed and here we follow the approach advocated in \cite{mother}. We adopt the \emph{\textbf{Local Potential Approximation (LPA)}}, see also \cite{disp_erge}. The starting point consists of the formulation of an \emph{ansatz} on the form of the average effective action, namely
\begin{equation}
    \label{eq:ansatz_gamma}
    \Gamma_k(\phi)=-\int_\man d\mu_g\;\left(\frac{1}{2}\nabla_a\phi\nabla^a\phi+U_k(\phi)\right),
\end{equation}
where only the \emph{local potential} $U_k$ carries the $k$-dependency, and in addition $U_k$ does not carry derivative operators. Under this assumption the left-hand side of the Wetterich equation \eqref{eq:wetterich} takes the form
\begin{equation}
    \label{eq:wett_sempl_1}
    \partial_k \Gamma_k(\phi)=-\int_\man d\mu_g\,\partial_kU_k(\phi).
\end{equation}
Subsequently the LPA calls for considering a solution $\phi_{cl}$ of the regularized quantum equation of motion such that $j_{\phi_{cl}}=0$, \ie a classical field configuration that minimizes $\Gamma_k$. By performing a Taylor expansion centered at $\phi_{cl}\in \mathcal{E}(\man)$, it holds that
\begin{equation*}
    \Gamma_k(\phi)=\Gamma_k(\phi_{cl})+\cancel{\langle\Gamma_k^{(1)}(\phi_{cl}),\varphi\rangle}+\frac{1}{2}\langle\Gamma_k^{(2)}(\phi_{cl}),\varphi\tensor \varphi\rangle +\mathcal{O}(\varphi^3),
\end{equation*}
where $\varphi\defeq \phi-\phi_{cl}$ is the \emph{fluctuation field}. We define the \emph{\textbf{truncated effective action}}
\begin{equation*}
    \Gamma_k^t(\phi)\defeq \Gamma_k(\phi_{cl})+\frac{1}{2}\langle\Gamma_k^{(2)}(\phi_{cl}),\varphi\tensor \varphi\rangle,
\end{equation*}
which originates from the \emph{\textbf{truncated action}} $I_0^t\in\loc$, that from \cite[Proposition 5.1]{mother} must take the form
\begin{equation}
\label{eq:truncated_action}
    I_0^t(\phi)=-\int_\man d\mu_g\;\left(\frac{1}{2}\nabla_a\phi\nabla^a\phi+\frac{1}{2}U^{(2)}_k(\phi_{cl})\phi^2\right).
\end{equation}
We can rewrite the truncated action as $I_0^t=I_0+M\in\loc$, where
\begin{equation*}
    M(\phi)\defeq -\frac{1}{2}\int_\man d\mu_g\;\left(U^{(2)}_k(\phi_{cl})-m^2-\xi R\right)\phi^2,
\end{equation*}
acts as a quadratic mass term dependent on the classical field configuration $\phi_{cl}$.

By means of perturbative agreement methods, see Remark \ref{rmk:pert_agree}, it has been shown in \cite[Section 5.2]{mother} that, under this approximation scheme, the Wetterich equation takes the form
\begin{equation}
    \label{eq:wett_lpa}
    \partial_k \Gamma_k(\phi)=-\int_\man d\mu_g\,\partial_kU_k(\phi)=-\int_\man d\mu_g(y) \lim_{x\rightarrow y} k\left(\Delta_{S,M,k}(x,y)-H_{M,k}(x,y)\right),
\end{equation}
where $H_{M,k}$ is the Hadamard parametrix associated to the operator ruling the dynamics of the action $I_0^t+Q_k$, while $\Delta_{S,M,k}$ is the symmetric part of the 2-point correlation function
\begin{equation}
    \label{eq:2point_moller}
    \Delta_{+,M,k}\defeq \moll_{M+Q_k} \Delta_+ \moll^*_{M+Q_k}.
\end{equation}
Here $\moll_{M+Q_k}$ is the classical M{\o}ller map associated to $M+Q_k$, see Definition \ref{def:mollermap} while $\Delta_+\in\mathcal{D}'(\man\times\man)$ is the 2-point correlation function associated to the choice of the state $\omega$ in Equation \eqref{eq:wetterich}. 

\section{Applications}
\label{sec:applications}
\subsection{Flow on half Minkowski with Dirichlet boundary conditions}
\label{sec:flowhalf}
In this section we apply the construction implemented in the past sections to the case where the underlying Lorentzian manifold is the Minkowski upper half-space $(\nuppmink^4,\eta)$. Although we could repeat the analysis for a generic value of $d>1$, we shall restrict the attention to four-dimensional manifolds to make contact with the physically interesting scenarios. Let us consider an action of the form
\begin{equation}\label{eq:kgaction_zdep}
I(\chi)=-\int_{\mathring{\nuppmink^4}}d\mu_\eta(x)\,\left(\frac{1}{2}\nabla_a\phi(x)\nabla^a\phi(x) +\frac{m^2}{2}\phi(x)^2+\lambda\frac{\phi(x)^4}{4!}\right)f(x),
\end{equation}
supplemented with Dirichlet boundary conditions at $\partial \uppmink$.

Since $\nuppmink^4$ is not invariant under translations in the direction perpendicular to $\partial\nuppmink^4$, it descends that, in analyzing the Wetterich equation in this framework, one should consider the coupling constants $m,\lambda$ in Equation \eqref{eq:kgaction_zdep} as functions of the $z$-coordinate, \iee $m=m(z),\,\lambda=\lambda(z)$. 

Under the Local Potential Approximation (LPA), outlined in Section \ref{sec:wett}, we assume the following \textit{ansatz} for the effective action:
\begin{equation}
    \label{lpa_uppmink}
    \Gamma_k(\phi)=-\int_{\mathring{\nuppmink^4}}d\mu_\eta(x)\,\frac{1}{2}\nabla_a\phi(x)\nabla^a\phi(x)+U_k(\phi),
\end{equation}
where $U_k(\phi)\defeq U_{0,k}(z)+m^2_k(z)\frac{\phi^2}{2}+\lambda_k(z) \frac{\phi^4}{4!}$ is the local potential. 
The $k$ dependence is carried by the term $U_k$, and the left-hand side of the Wetterich equation \eqref{eq:wetterich} takes the form
\begin{equation}
    \label{scalingwrtU}
    \partial_k \Gamma_k(\phi)=-\int_{\mathring{\nuppmink^4}} d\mu_\eta\,\partial_k U_k(\phi).
\end{equation}
Subsequently we expand Equation \eqref{lpa_uppmink} near a solution $\phi_{cl}$ of the quantum equation of motion
$$\tilde{\Gamma}_k^{(1)}(\phi_{cl})=0,$$
and we truncate the average effective action to the quadratic order in $\varphi\defeq \phi-\phi_{cl}$, obtaining
$$\Gamma_k^t(\phi)=\Gamma_k(\phi_{cl})+\langle \Gamma_k^{(2)}(\phi_{cl}),\varphi\tensor\varphi\rangle+\mathcal{O}(\varphi^3).$$
Eventually we obtain the following form for the Wetterich equation:
\begin{equation}
    \label{eq:lpawett_halfmink}
    \begin{gathered}
        \partial_k \Gamma_k(\phi)=-\int_{\nuppmink^4} d\mu_g(x)\,\partial_k \left(U_{0,k}(z)+m^2_k(z)\frac{\phi^2(x)}{2}+\lambda_k(z) \frac{\phi^4(x)}{4!}\right)=\\=-\int_{\nuppmink^4} d\mu_g(x) \lim_{y\rightarrow x} k\left(\Delta_{S,M,k}(y,x)-H_{M,k}(y,x)\right),
    \end{gathered}
\end{equation}
where $\Delta_{S,M,k},\,H_{M,k}(y,x)$ are as per Equation \eqref{eq:wett_lpa}.

To evaluate Equation \eqref{eq:lpawett_halfmink} we have to compute the symmetric 2-point correlation function $\Delta_{S,M,k}$, and we have to choose an Hadamard parametrix $H_{M,k}$. Starting from the former, this amounts to computing the 2-point correlation function $\Delta_{+,M,k}\in\mathcal{D}'(\nuppmink^4\times\nuppmink^4)$. From Equation \eqref{eq:wett_lpa} this is
\begin{equation} 
\label{eq:mollon_uppmink}
\Delta_{+,M,k}=\moll_{Q_k+M}\Delta_+ \moll^*_{Q_k+M},
\end{equation}
where $\moll_{Q_k+M}$ is the classical M{\o}ller map, see Definition \ref{def:mollermap}, $\moll^*_{Q_k+M}$ is its formal adjoint while
\begin{equation}
    \begin{gathered}
        Q_k(\phi)\defeq -\frac{1}{2}\int_{\mathring{\nuppmink^4}}d\mu_\eta(x)\,k^2\phi^2(x)f(x),\\
        M(\phi)\defeq  -\frac{1}{2}\int_{\mathring{\nuppmink^4}}d\mu_\eta(x)\,\left(m^2_k(z)+\lambda_k(z)\frac{\phi^2_{cl}(x)}{2} \right)\phi^2(x).
    \end{gathered}
\end{equation}
Having imposed Dirichlet boundary conditions, the r\^{o}le of $\Delta_+$ can be played by $\Delta^{\mathbb{H}}_{+,D}$ as in Equation \eqref{eq:2pointfunct_uppmink_methodimage}, namely the integral kernel of Equation \eqref{eq:mollon_uppmink} takes the form
\begin{equation}
    \label{eq:mollon_uppmink-2}
    \Delta_{+,M,k}(x,y)=\moll_{Q_k+M}[\Delta^{\nmink^4}_+(x,y)-\Delta^{\nmink^4}_+(x,\iota_z y)]\moll^*_{Q_k+M},
\end{equation}
where $\Delta_+^{\nmink^4}$ is the two-point correlation function of the Poincaré invariant ground state on the whole four-dimensional Minkowski spacetime. Using principle of perturbative agreement \cite[Lemma D1]{pert_agr}, in the adiabatic limit it holds that
$$\moll_{Q_k+M}\Delta^{\nmink^4}_+\moll^*_{Q_k+M}\xrightarrow{f\rightarrow 1}\Delta^{\nmink^4}_{+,k},$$
where $\Delta^{\nmink^4}_{+,k}$ is the 2-point correlation function of a theory with mass $\displaystyle m^2_k+\lambda_k\frac{\phi_{cl}^2}{2}+k^2$. In the adiabatic limit Equation \eqref{eq:mollon_uppmink} becomes
\begin{equation}
    \label{eq:mollupp}
    \moll_{Q_k+M}\Delta_+(x,y)\moll^*_{Q_k+M}\xrightarrow{f\rightarrow 1}\Delta^{\nmink^4}_{+,k}(x,y)-\Delta^{\nmink^4}_{+,k}(x,\iota_z y),
\end{equation}
where, as discussed before, we allow for $m^2_k=m^2_k(z),\,\lambda_k=\lambda_k(z)$.

Having chosen a 2-point correlation function we have to select an Hadamard parametrix  $H_{M,k}$, related to the theory whose action is $I_0^t+Q_k$. This amounts to choosing an Hadamard parametrix for a theory whose mass is  $M^2=m^2_k+\lambda_k {\phi^2_{cl}}/{2}+k^2$. Following Equation \eqref{eq:hadparametrix_uppmink_methodimage}, as well as \cite[Appendix A]{bdf09} the Hadamard parametrix $H^{\nmink^4}_M$ for a theory with mass $M^2$ on $\nmink^4$ reads
\begin{equation}
    \label{eq:hadparam_mink}
    H^{\nmink^4}_M(x,y)=\Delta^{\nmink^4}_{S,M}(x,y)+\frac{M}{8\pi^2}\log\left(\frac{\mu^2}{M^2}\right) [\sqrt{-\sigma(x,y)}]^{-1} I_1[\sqrt{-M^2\sigma(x,y)}],
\end{equation}
where $I_1$ is the modified Bessel function of the first kind, see \cite[\S 10]{nist_bessel}, $\sigma$ is the geodesic distance on $\nmink^4$, $\mu^2>0$ is an arbitrary energy parameter and $\Delta^{\nmink^4}_S$ is the symmetric part of the 2-point correlation function $\Delta_+^{\nmink^4}$. 

With reference to the choice of the Hadamard parametrix on $(\nuppmink^4,\eta)$ we have two possibilities:
\begin{enumerate}
    \item consider in Equation \eqref{eq:lpawett_halfmink} the \emph{full Hadamard parametrix} as per Equation \eqref{eq:hadparametrix_uppmink_methodimage}
        \begin{equation}
            \label{hadpar_uppmink}
             H^{\mathbb{H}^4}_D(x,y)= H^{\nmink^4}(x,y)- H^{\nmink^4}(x,\iota_z y).
        \end{equation}
        We call this choice \emph{\textbf{full Hadamard subtraction}}.
    \item Since the subtraction in Equation \eqref{eq:lpawett_halfmink} is performed at the coinciding point limit, to get rid of the singular part of the 2-point function the minimal choice consists of subtracting only the \emph{non-reflecting} part of the Hadamard parametrix, namely
    \begin{equation}
            \label{hadpar_uppmink_nonrelf}
             H^{\mathbb{H}^4}(x,y)= H^{\nmink^4}(x,y)
        \end{equation}
        We call this choice \emph{\textbf{minimal Hadamard subtraction}}.
\end{enumerate}
Note that the second option is the one usually employed in the study of the \emph{Casimir effect}, see \cite{dappia_casimir} and hence it appears to be at first glance preferable from a physical viewpoint. Yet, from a mathematical perspective, the first choice seems quite natural since it is the one necessary to construct a global and covariant algebra of Wick polynomials via a subtraction scheme. In the following we shall see that these two possibilities lead to notable differences also when studying the flow associated to the Wetterich equation.

\subsubsection{Full Hadamard subtraction}
\label{sec:full_sub}
Considering the full Hadamard parametrix as per Equation \eqref{hadpar_uppmink}, to compute the solutions to Equation \eqref{eq:lpawett_halfmink} we have to evaluate the following coincidence limit:
\begin{equation}
\label{scalingU_reg}
\begin{gathered}
    \partial_k U_k(\phi)=k\lim_{x\rightarrow y}(\Delta_{S,M}(x,y)-H^{\mathbb{H}^4}_{D}(x,y))=\\=k\lim_{x\rightarrow y}[(\Delta^{\nmink^4}_{S,M}(x,y)-H^{\nmink^4}_M(x,y))-(\Delta^\nmink_{S,M}(x,\iota_z y)-H^{\nmink^4}_M(x,\iota_z y))]=\\=k\lim_{x\rightarrow y}\frac{M}{8\pi^2}\log\left(\frac{M^2}{\mu^2}\right) \left[(\sqrt{-\sigma(x,y)})^{-1} I_1(\sqrt{-M^2\sigma(x,y)})+\right.\\\left.-(\sqrt{-\sigma(x,\iota_z y)})^{-1} I_1(\sqrt{-M^2\sigma(x,\iota_z y)})\right].
\end{gathered}
\end{equation}
Taking into account that
$$\lim_{x\rightarrow y}\sigma(x,\iota_z y)=2z^2,$$
we can compute the first term between square brackets in Equation \eqref{scalingU_reg}, that is
$$\lim_{x\rightarrow y} \frac{I_1(iM\sqrt{\sigma(x,y)})}{(i\sqrt{\sigma(x,y)})}=\frac{M}{2},$$
while the second one reads
$$\lim_{x\rightarrow y} \frac{I_1(iM\sqrt{\sigma(x,\iota_z y)})}{(i\sqrt{\sigma(x,\iota_z y)})}=\frac{I_{1}(\sqrt{2}iMz)}{\sqrt{2}iz}$$

\noindent From Equation \eqref{scalingU_reg} we obtain 
\begin{equation}
\label{scaling_ind}
\partial_k U_k(\phi)=\frac{k}{16 \pi^2}M^2\log\left(\frac{M^2}{\mu^2}\right)\left(1-\mathfrak{B}(z,M)\right),
\end{equation}
where $\mathfrak{B}(z,M)$ is the \textbf{\emph{boundary contribution}}
\begin{equation}
\label{Eq:factorB}    
\mathfrak{B}(z,M)\defeq \sqrt{2}\frac{I_1(\sqrt{2}iMz)}{iMz}.
\end{equation}

\noindent This result can be compared with the one obtained in \cite[Section 6.1.1]{mother} on $\nmink^4$:
\begin{equation}
\label{scaling_ind_fullmink}
\partial_k U_k(\phi)=\frac{k}{16 \pi^2}M^2\log\left(\frac{M^2}{\mu^2}\right),
\end{equation}
from which we can infer that the scaling in Equation \eqref{scaling_ind}, far from $z=0$, approaches exponentially fast that of Equation \eqref{scaling_ind_fullmink},  see Figure \ref{fig:bterm}, namely
$$\lim_{z\rightarrow \infty} \mathfrak{B}(z,M)=0.$$

\noindent We can compute the \emph{\textbf{$\bm{\beta}$-functions}} for the coupling constants $m^2,\lambda$, that allow to extract physical predictions from the model that we are studying. In this scenario the Wetterich equation takes the form
\begin{equation}
    \nonumber
    \begin{gathered}
    \partial_k \Gamma_k(\phi)=-\partial_k\int_{\nuppmink^4} dx\,\frac{k}{16\pi^2}M^2\log \left(\frac{M^2}{\mu^2}\right)\Big(1-\mathfrak{B}(z,M)\Big)=\\=-\int_{\nuppmink^4} dx\, \partial_k U_{0,k}+\partial_k m^2_k\frac{\phi^2}{2}+\partial_k\lambda_k\frac{\phi^4}{4!}.    
    \end{gathered}
\end{equation}
By considering higher order functional derivatives of $\partial_k U_k(\phi)$ we can isolate the $\beta$-functions:
\begin{equation}
   \begin{cases}
    \beta_{U_0}\defeq \partial_k U_k(\phi=0)=\partial_k U_{0,k},\\
    \beta_{m^2}\defeq\partial_k U_k^{(2)}(\phi=0)=\partial_k m_k^2,\\
    \beta_\lambda\defeq\partial_k U_k^{(4)}(\phi=0)=\partial_k \lambda_k.\\
\end{cases} 
\end{equation}
As a consequence we obtain the following form for the scaling of the coupling constants $m^2_k,\lambda_k$:
\begin{equation}
    \label{eq:scaling_couplconst_dimensionful}
     \left\{
     \begin{gathered}
     \begin{aligned}
       &\partial_k m^2_k=\frac{k\lambda_k}{16\pi^2}\left[(1-\mathfrak{B}(z,\sqrt{k^2+m_k^2}))\left(\log\left(\frac{k^2+m_k^2}{\mu^2}\right)+1\right)\right.+\\&\quad\quad\quad\quad\quad\quad+\left.(k^2+m_k^2)\log\left(\frac{k^2+m_k^2}{\mu^2}\right)\mathfrak{B}_2(z,\sqrt{k^2+m_k^2})\right]\\
       \quad \\
       &\partial_k \lambda_k=\frac{k\lambda_k^2}{16\pi^2}\left[\frac{3}{k^2+m_k^2}(1-\mathfrak{B}(z,\sqrt{k^2+m_k^2}))+\right.\\ &\quad\quad\quad\quad\quad\quad -6\left(\log\left(\frac{k^2+m_k^2}{\mu^2}\right)+1\right)\mathfrak{B}_2(z,\sqrt{k^2+m_k^2})+ \\ &\quad\quad\quad\quad\quad\quad+\left.(k^2+m_k^2)\log\left(\frac{k^2+m_k^2}{\mu^2}\right)\mathfrak{B}_4(z,\sqrt{k^2+m_k^2})\right]
       \end{aligned} 
        \end{gathered} 
        \right.
\end{equation}
where $\mathfrak{B}_2,\mathfrak{B}_4$ are 
\begin{equation}\label{Eq: B2}
	\mathfrak{B}_2(z,\sqrt{k^2+m_k^2})=-\frac{I_2\left(i \sqrt{2} \sqrt{k^2+m^2} z\right)}{k^2+m^2}\;\mathrm{and}\;\mathfrak{B}_4(z,\sqrt{k^2+m_k^2})=-\frac{3izI_3\left(i \sqrt{2} \sqrt{k^2+m^2} z\right)}{\sqrt{2}(k^2+m^2)^{\frac{3}{2}}}
\end{equation}

As customary in the study of the flow of coupling constants, we can write the system in an autonomous form by introducing the following \emph{dimensionless coupling constants} when $d=3$:
\begin{equation}
    \label{eq:rescaling}
   \left\{
   \begin{array}{l}
    m_k\;\longmapsto\;\widetilde{m}^2_{k}=k^{-2}m^2_k,\\
    \lambda_k\;\longmapsto\;\widetilde{\lambda}_k = \lambda_k.
\end{array} 
\right.
\end{equation}
Rescaling also the $z$-direction so to obtain the \emph{dimensionless distance} $z\mapsto \widetilde{z}=kz$ and observing that $\mathfrak{B}(z,M)$ in Equation \eqref{Eq:factorB} is manifestly dimensionless, as it depends on $zM=\widetilde{z}\widetilde{M}$, we obtain
\begin{equation}
    \label{eq:scaling_couplconst_dimensionless}
     \left\{
     \begin{gathered}
     \begin{aligned}
       &k\partial_k \widetilde{m}^2_k=-2\widetilde{m}_k^2+\frac{\widetilde{\lambda}_k}{16\pi^2}\left[\left(1-\mathfrak{B}\left(\widetilde{z},\sqrt{1+\widetilde{m}_k^2}\right)\right)\left(\log\left(\frac{k^2}{\mu^2}\right)+\log(\widetilde{m}_k^2+1)+1\right)\right.+\\&\quad\quad\quad\quad\quad\quad\quad\quad\quad\quad+\left.(1+\widetilde{m}_k^2)\left(\log\left(\frac{k^2}{\mu^2}\right)+\log(1+\widetilde{m}_k^2)\right)\mathfrak{B}_2\left(\widetilde{z},\sqrt{1+\widetilde{m}_k^2} \right)\right]\\
       \quad \\
       &k\partial_k \widetilde{\lambda}_k=\frac{\widetilde{\lambda}_k^2}{16\pi^2}\left[\frac{3}{1+\widetilde{m}_k^2}\left(1-\mathfrak{B}\left(\widetilde{z},\sqrt{1+\widetilde{m}_k^2}\right)\right)+\right. \\&\quad\quad\quad\quad\quad\quad- 6\left(\log\left(\frac{k^2}{\mu^2}\right)+\log(\widetilde{m}_k^2+1)+1\right)\mathfrak{B}_2\left(\widetilde{z},\sqrt{1+\widetilde{m}_k^2}\right)+ \\ &\quad\quad\quad\quad\quad\quad +\left.(1+\widetilde{m}_k^2)\left(\log\left(\frac{k^2}{\mu^2}\right)+\log(\widetilde{m}_k^2+1)\right)\mathfrak{B}_4\left(\widetilde{z},\sqrt{1+\widetilde{m}_k^2}\right)\right]
       \end{aligned} 
        \end{gathered} 
        \right.
\end{equation}

The residual dependence on the energy scale $k$ lies in the logarithmic term. As pointed out in \cite{mother} this dependence can be removed by tuning the arbitrary renormalization scale to $\mu=k$, which yields
\begin{equation}
    \label{eq:scaling_couplconst_dimensionless_choosemu}
     \left\{
     \begin{gathered}
     \begin{aligned}
       &k\partial_k \widetilde{m}^2_k=-2\widetilde{m}_k^2+\frac{\widetilde{\lambda}_k}{16\pi^2}\left[\left(1-\mathfrak{B}\left(\widetilde{z},\sqrt{1+\widetilde{m}_k^2}\right)\right)\left(\log(\widetilde{m}_k^2+1)+1\right)\right.+\\&\quad\quad\quad\quad+\left.(1+\widetilde{m}_k^2)\log(1+\widetilde{m}_k^2)\mathfrak{B}_2\left(\widetilde{z},\sqrt{1+\widetilde{m}_k^2} \right)\right]\\
       \quad \\
       &k\partial_k \widetilde{\lambda}_k=\frac{\widetilde{\lambda}_k^2}{16\pi^2}\left[\frac{3}{1+\widetilde{m}_k^2}\left(1-\mathfrak{B}\left(\widetilde{z},\sqrt{1+\widetilde{m}_k^2}\right)\right)\right.-6\left(\log(\widetilde{m}_k^2+1)+1\right)\mathfrak{B}_2\left(\widetilde{z},\sqrt{1+\widetilde{m}_k^2}\right)+ \\ &\quad \quad\quad \quad +\left.(1+\widetilde{m}_k^2)\log(\widetilde{m}_k^2+1)\mathfrak{B}_4\left(\widetilde{z},\sqrt{1+\widetilde{m}_k^2}\right)\right]
       \end{aligned} .
        \end{gathered} 
        \right.
\end{equation}

For large values of $z$ the function $\mathfrak{B}$ as well as its derivatives $\mathfrak{B}_2,\mathfrak{B}_4$ are exponentially decreasing to $0$, which entails, that for $z\gg 1$, the flow equations reduce to the counterpart on the whole Minkowski spacetime $(\nmink^4,\eta)$, in agreement up to an irrelevant multiplicative factor $2$ with those obtained in \cite{mother}:
\begin{equation}
    \label{eq:scaling_mink}
     \left\{
     \begin{gathered}
     \begin{aligned}
       &k\partial_k \widetilde{m}^2_k=-2\widetilde{m}_k^2+\frac{\widetilde{\lambda}_k}{16\pi^2}\left[\log(\widetilde{m}_k^2+1)+1\right]\\
       &k\partial_k \widetilde{\lambda}_k=\frac{3\widetilde{\lambda}_k^2}{16\pi^2}\frac{1}{1+\widetilde{m}_k^2}
       \end{aligned} 
        \end{gathered} 
        \right. .
\end{equation}

Equation \eqref{eq:scaling_couplconst_dimensionless_choosemu} can be rewritten in an explicitly autonomous form choosing the parametrization with respect to the \emph{renormalization time} $t\defeq \log(\Lambda/k)$, where $\Lambda$ is an energy scale. In this way it is possible to write, for $f=\widetilde{m}^2,\widetilde{\lambda}$,
$$k\partial_k f(k)=k\frac{\partial t}{\partial k}\partial_t f(t)=-k\frac{k}{\Lambda}\frac{1}{k^2}\partial_t f(t)=-\frac{1}{\Lambda}\partial_t f(t).$$

To analyse the behaviour of a system, as the one in Equation \eqref{eq:scaling_mink}, under the action of the renormalization group flow, it is useful to introduce the following nomenclature, here listed for convenience for a general dimension.
\begin{definition}[Relevant, marginal and irrelevant couplings]
    \label{def:relmarg}
    Let us consider the generalized Klein-Gordon action on the $d+1$-dimensional Minkowski spacetime
    \begin{equation}
        \label{eq:general_kgaction}
        I_{gen}(\phi)\defeq \int_{\mink} d\mu_\eta(x)\,\left(\frac{1}{2}\nabla_a\phi(x)\nabla^a\phi(x) +\alpha_2 \phi^2(x)+\alpha_4 \phi^4(x)+\ldots+\alpha_{2N}\phi^{2N}(x)\right)f(x),
    \end{equation}
    where $N\in\mathbb{N}$ and $\alpha_{2n}\in\re$ for every $n=1,\ldots,N$. Considering only even powers of $\phi$ to preserve the $O(1)$ symmetry $\phi(x)\rightarrow -\phi(x)$, we say that
    \begin{itemize}
        \item  $\phi^{2n}$ is a \emph{\textbf{relevant operator}} if $d(1-n)+1+n>0$;
        \item $\phi^{2n}$ is a \emph{\textbf{marginal operator}} if $d(1-n)+1+n=0$;
        \item $\phi^{2n}$ is a \emph{\textbf{irrelevant operator}} if $d(1-n)+1+n<0$.
    \end{itemize}
    The same terminology applies to the associated coupling constants $\alpha_{2n}$.
\end{definition}

\begin{remark}
Under this classification, it turns out that, for the systems in Equations \eqref{eq:scaling_couplconst_dimensionless_choosemu}, \eqref{eq:scaling_mink} where $d+1=4$, the mass is a relevant coupling constant while $\lambda$ is marginal. 

The relevance of this classification become manifest if we consider the resulting $\beta$-function for a generic coupling constant $\alpha_{2n}$. Indeed, when we apply a rescaling procedure as the one in Equation \eqref{eq:rescaling} to obtain the dimensionless coupling constant $\widetilde{\alpha}_{2n}$, we get
$$\alpha_{2n} \longmapsto \widetilde{\alpha}_{2n}\defeq k^{d(1-n)+1+n}\alpha_{2n},$$
where the dimension of $\alpha_{2n}$ is determined again by dimensional analysis. It is clear that, as we lower the energy cutoff $k$, three different behaviors should be expected:
\begin{itemize}
    \item $\widetilde{\alpha}_{2n}$ grows if it is a relevant coupling constant;
    \item $\widetilde{\alpha}_{2n}$ is invariant if it is a marginal one;
    \item $\widetilde{\alpha}_{2n}$ vanishes if it is an irrelevant one.
\end{itemize}
\end{remark}

In order for Equation \eqref{eq:scaling_couplconst_dimensionless_choosemu} to yield a unique solution, one should assign initial conditions at a given scale $\Lambda$, so that $\widetilde{m}^2_{k=\Lambda}=\widetilde{m}^2_\Lambda$ and $\widetilde{\lambda}_{k=\Lambda}=\widetilde{\lambda}_\Lambda$. The physical interpretation is that, at such energy, the values of $m^2$ and of $\lambda$ ought to be measured. Subsequently the flow of the theory, from $k=\Lambda$ up to $k=0$, is implemented by the system in Equation \eqref{eq:scaling_couplconst_dimensionless_choosemu}. As a consequence the energy scale $\Lambda$ acquires the r\^ole of a physical energy scale at which the theory is defined.

\begin{figure}[H]
	\begin{minipage}[b]{0.45\linewidth}
		\centering
		\includegraphics[width=\textwidth]{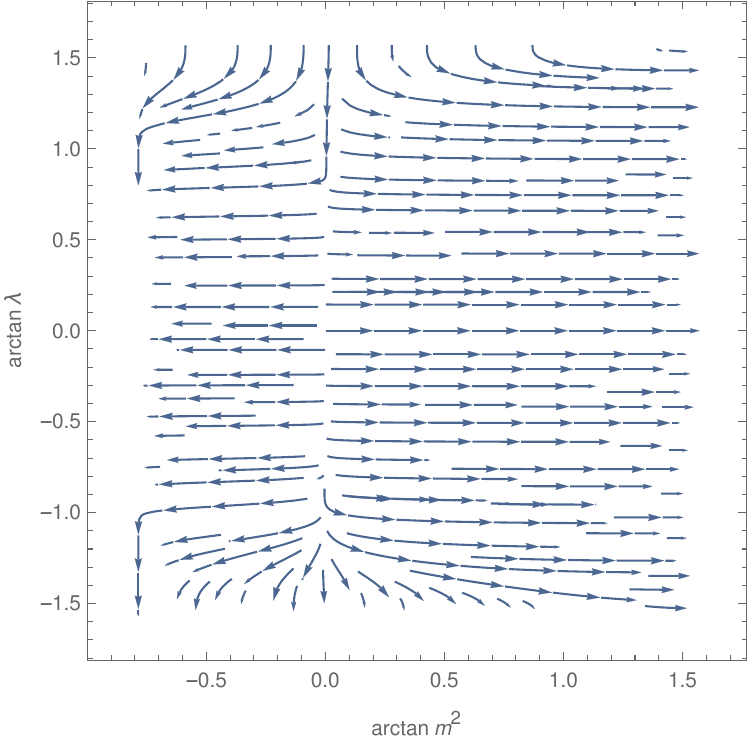}
		\caption{Scaling of the coupling constants on the whole Minkowski spacetime $(\nmink^4,\eta)$. The arrows point in the infrared limit.}
		\label{fig:scaling_mink}
	\end{minipage}
	\hspace{0.5cm}
	\begin{minipage}[b]{0.45\linewidth}
		\centering
		\includegraphics[width=\textwidth]{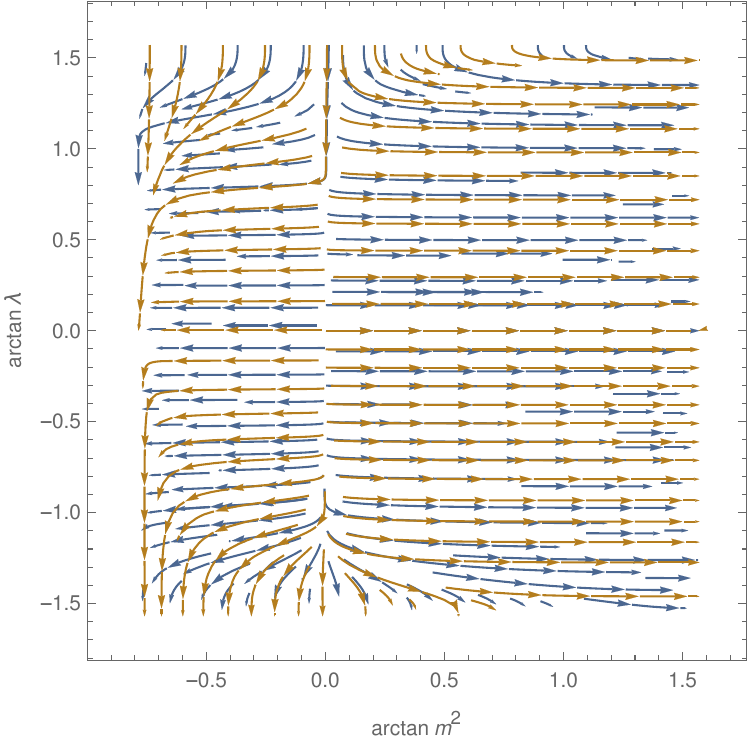}
		\caption{Comparison between the results obtained with a non-local regulator (orange) and the local regulator (blue).}
		\label{fig:scaling_nonlocal}
	\end{minipage}
\end{figure}

Since an analytical solution of Equation \eqref{eq:scaling_couplconst_dimensionless_choosemu} seems unpractical, it is fruitful to visualize the stream lines in the plane $m^2,\lambda$ using the method $\mathsf{StreamPlot}[\;]$ of Wolfram Mathematica. In Figure \ref{fig:scaling_mink}, as a reference, it is represented the scaling of the coupling constants on the whole Minkowski spacetime following Equation \eqref{eq:scaling_mink}, a result that will be presented in \cite{phdthesis_dan}. We have chosen to plot $\tan m, \tan\lambda$ in place of $m^2,\lambda$ to better visualize the asymptotic behaviour of the system. The choice of the direction of the arrows is such that they point towards the infrared region $k=0$. The range includes the non-physical region of negative masses and negative interaction constants. In Figure \ref{fig:scaling_nonlocal} we compare the results obtained with the developed formalism with those coming from using a non-local regulator, known also as optimized cut-off \cite{Litim:2001up}, showing that they are qualitatively the same. Moreover, it is clear from the stream lines that the mass is a relevant coupling constant for the renormalization group flow.

Considering $\mathbb{H}^4$, we have plotted in Figures \ref{fig:scal_dist1}, \ref{fig:scal_dist001} the stream lines in correspondence of different values of $\widetilde{z}$ ($\widetilde{z}=1$ and $\widetilde{z}=0.01$). The behaviour is very similar to the one obtained on Minkowski spacetime, even for small values of $\widetilde{z}$. The reason lies in the form of $\mathfrak{B}(\widetilde{z},\widetilde{M})$, plotted in Figure \ref{fig:bterm}, which for small, positive values of $\widetilde{z}$ does not give large corrections, and it rapidly vanishes for large $\widetilde{z}$.

\begin{figure}[H]
\begin{minipage}[b]{0.45\linewidth}
      \centering
    \includegraphics[width=\textwidth]{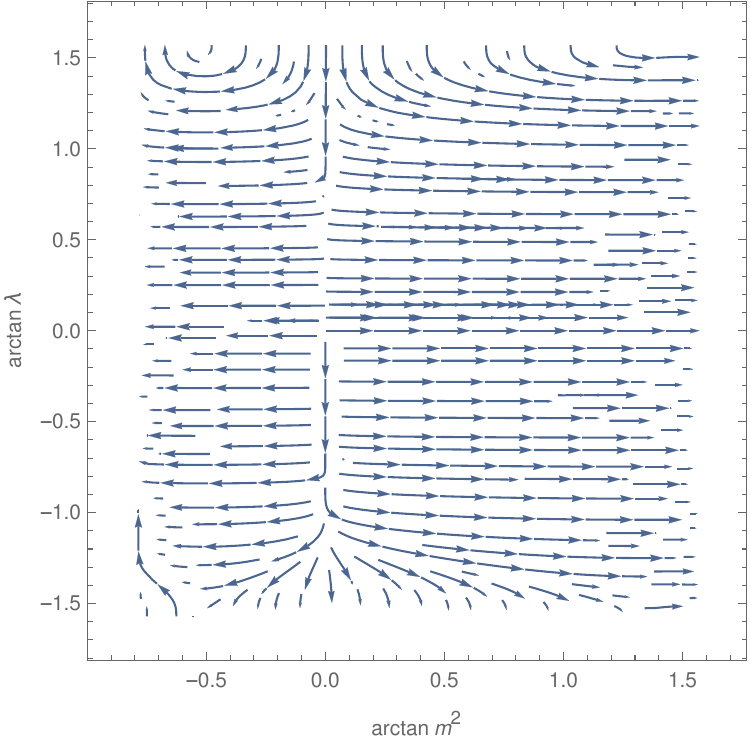}
    \caption{Scaling of the coupling constants on Minkowski upper half-space. Dimensionless distance from the boundary: $\widetilde{z}=1$}
    \label{fig:scal_dist1}
\end{minipage}
\hspace{0.5cm}
\begin{minipage}[b]{0.45\linewidth}
\centering
\includegraphics[width=\textwidth]{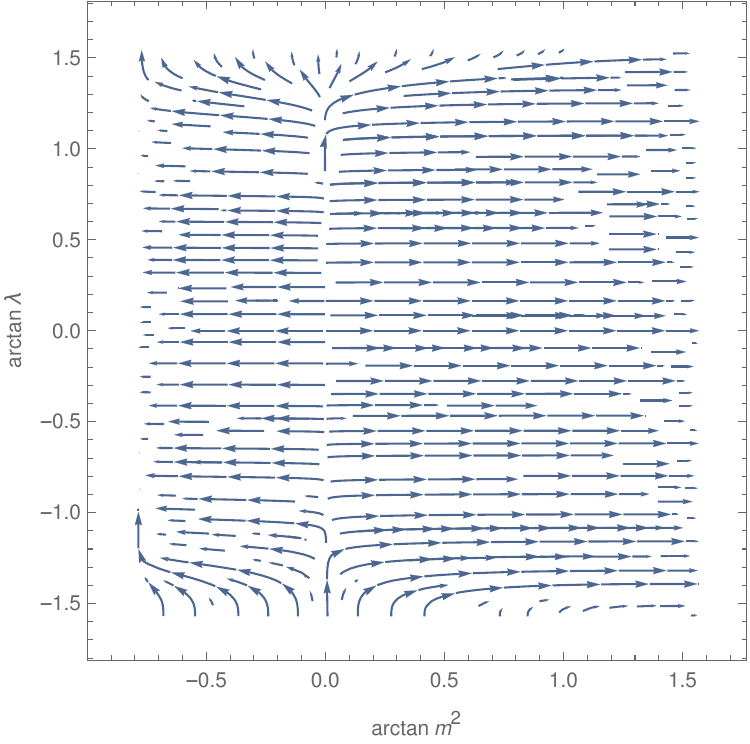}
    \caption{Scaling of the coupling constants on Minkowski upper half-space. Dimensionless distance from the boundary: $\widetilde{z}=0.01$}
    \label{fig:scal_dist001}
\end{minipage}
\end{figure}
In Figure \ref{fig:conf_beta_full} we present the results for the scaling of $\widetilde{\lambda}$ in the massless case. It is useful to study the system on the {\em critical surface}, where the relevant coupling constants are set to $0$. In other words we set $m=0$ to isolate the behaviour of $\lambda$ as the mass, being a relevant coupling, obscures its scaling. Note that for large values of $\widetilde{z}$ the scaling approaches the one on Minkowski spacetime, while if $\widetilde{z}$ is close to $0$, the $\beta$-function continues to remain positive,  approaching $0$.
\begin{figure}[H]
\begin{minipage}[b]{0.45\linewidth}
     \centering
    \includegraphics[width=\textwidth]{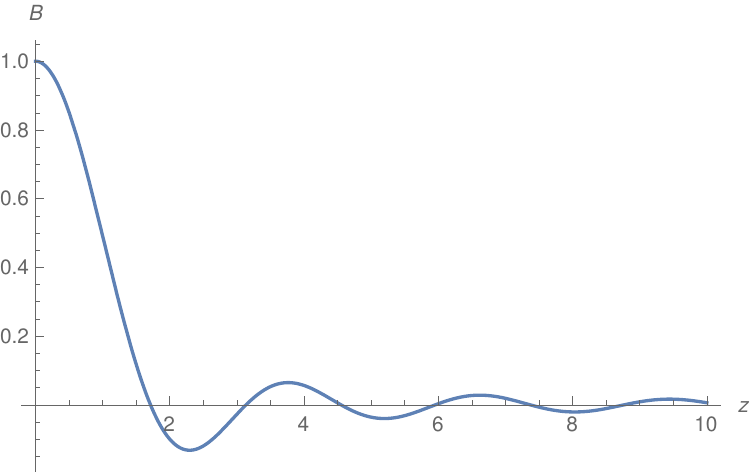}
    \caption{Plot of $\mathfrak{B}(\widetilde{z},\widetilde{M})$ at fixed $M$, with respect to $\widetilde{z}$.\\}
    \label{fig:bterm}
\end{minipage}
\hspace{0.5cm}
\begin{minipage}[b]{0.45\linewidth}
\centering
\includegraphics[width=\textwidth]{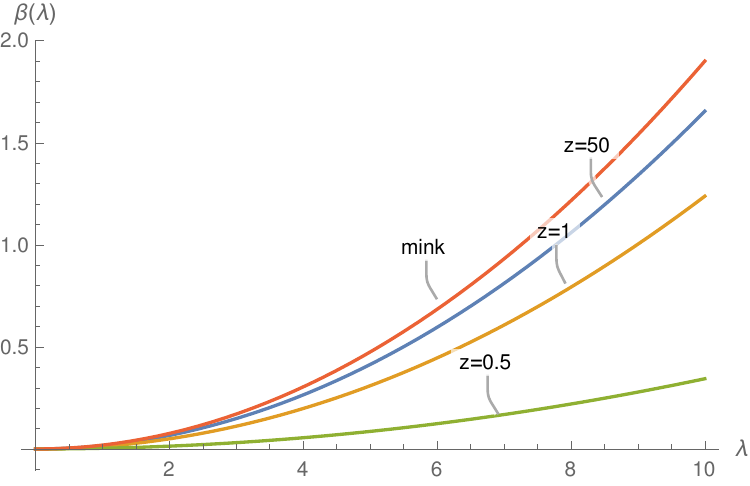}
    \caption{Comparison between the $\beta$-functions of $\lambda$, in the massless case, at different values of $\widetilde{z}$.}
    \label{fig:conf_beta_full}
\end{minipage}
\end{figure}

\subsubsection{Minimal Hadamard subtraction}
In this section we consider the second subtraction prescription outlined at the end of Section \ref{sec:flowhalf}, namely we work with Equation \eqref{eq:lpawett_halfmink}, that is
\begin{equation}
\nonumber
\partial_k U_k(\phi)=\lim_{x\rightarrow y}k(\Delta_{S,M,k}(x,y)-H^{\nmink^4}(x,y)),
\end{equation}
where $\Delta_{S,M,k}$, as discussed in Section \ref{sec:flowhalf}, is the symmetric part of the 2-point function on $(\nuppmink^4,\eta)$, constructed by means of the method of images, see Equation \eqref{eq:mollupp}. In this section we denote by $\Delta_+^{\nmink^4}(x,y)$ the 2-point correlation function on the four dimensional Minkowski spacetime associated to a real scalar field of mass $M$:
\begin{equation}
    \begin{gathered}
        \Delta_+^{\nmink}(x,y)= \frac{1}{(2\pi)^4} \int_{\re^4}  d^4 p \,\delta\left( p_\mu p^\mu + M^2 \right) \Theta( -p_0 ) e^{i p_\mu (x^\mu-y^\mu) } = \\ = \frac{1}{(2\pi)^{3}} \int_{\re^3} \frac{d^3 \mathbf{p}}{2 \omega_{\mathbf{p}}} e^{-i \omega_{\mathbf{p}}(x^0 - y^0) + i \mathbf{p} \cdot (\mathbf{x} - \mathbf{y}) },
    \end{gathered}
\end{equation}
where $\omega_{\mathbf{p}}=\sqrt{|\mathbf{p}|^2+M^2}$. The coincidence point limit for $\Delta^{\nmink^4}_+(x,\iota_z y)$ reads
\begin{equation}
    \label{eq:coincidencelimit_reflect2points}
    \lim_{x\rightarrow y}\Delta^{\nmink^4}_+(x,\iota_z y)=\frac{1}{(2\pi)^{3}} \int_{\re^3} \frac{d^3 \mathbf{p}}{2 \omega_{\mathbf{p}}} e^{2i p_z \, z}=\frac{M}{8\pi^2 z}K_1(2zM),
\end{equation}
where $K_1$ is the modified Bessel function of the second kind, see \cite[\S 10.25]{nist_bessel}. 
As a consequence the flow equation becomes
\begin{equation}
    \label{eq:scal_casim}
    \partial_k U_k(\phi)=\frac{k}{16\pi^2}M^2\log \left(\frac{M^2}{\mu^2}\right)\Big(1-\mathfrak{C}(z,M)\Big),
\end{equation}
where the \emph{\textbf{boundary contribution}} $\mathfrak{C}(z,M)$ is
\begin{equation}\label{Eq:factorC}
\mathfrak{C}(z,M)\defeq \frac{2}{Mz}K_1(2zM),
\end{equation}
which is plotted for convenience in Figure \ref{fig:cterm}.
\begin{remark}
 Comparing $\mathfrak{C}(z,M)$ with $\mathfrak{B}(z,M)$, obtained in the previous section in Equation \eqref{Eq:factorB}, and plotted in Figure \ref{fig:bterm}, we see that also $\mathfrak{C}(z,M)$ rapidly approaches $0$ far from the boundary. Yet, contrarily to the analysis in Section \ref{sec:full_sub}, this decay is more rapid, yielding a contribution comparable to $0$ already at $z\approx 1$. Moreover, differently from $\mathfrak{B}(z,M)$, the new term diverges at $z=0$. This is a sign that near the boundary, the minimal Hadamard subtraction leads to divergences, as one could expect from the microlocal behaviour of $\Delta_{+,D}^{\mathbb{H}^4}$ as per Equation \eqref{Eq: WF-half-Minks}. This codifies that singularities are reflected at $z=0$ and, therefore, $\lim_{x\rightarrow y}H^{\mathbb{M}^4}(x,\iota_z(y))$ becomes singular thereat.
\end{remark}

Following the same line of reasoning used in Section \ref{sec:full_sub}, we can derive the $\beta$-functions for the coupling constants by considering higher order functional derivatives of Equation \eqref{eq:scal_casim} with respect to $\phi$. With the same rescaling of the couplings as in Equation \eqref{eq:rescaling} and with the same choice of the renormalization scale $\mu=k$, we obtain the scaling for the dimensionless coupling constants in the minimal Hadamard subtraction scheme:
\begin{equation}
    \label{eq:betafunc_casim}
     \left\{
     \begin{gathered}
     \begin{aligned}
       &k\partial_k \widetilde{m}^2_k=-2\widetilde{m}_k^2+\frac{\widetilde{\lambda}_k}{16\pi^2}\left[\left(1-\mathfrak{C}\left(\widetilde{z},\sqrt{1+\widetilde{m}_k^2}\right)\right)\left(\log(\widetilde{m}_k^2+1)+1\right)\right.+\\&\quad\quad\quad\quad+\left.(1+\widetilde{m}_k^2)\log(1+\widetilde{m}_k^2)\mathfrak{C}_2\left(\widetilde{z},\sqrt{1+\widetilde{m}_k^2} \right)\right]\\
       \quad \\
       &k\partial_k \widetilde{\lambda}_k=\frac{\lambda_k^2}{16\pi^2}\left[\frac{3}{1+\widetilde{m}_k^2}\left(1-\mathfrak{C}\left(\widetilde{z},\sqrt{1+\widetilde{m}_k^2}\right)\right)\right.-6\left(\log(\widetilde{m}_k^2+1)+1\right)\mathfrak{C}_2\left(\widetilde{z},\sqrt{1+\widetilde{m}_k^2}\right)+ \\ &\quad \quad\quad \quad \left.+(1+\widetilde{m}_k^2)\log(\widetilde{m}_k^2+1)\mathfrak{C}_4\left(\widetilde{z},\sqrt{1+\widetilde{m}_k^2}\right)\right]
       \end{aligned} 
        \end{gathered} 
        \right.
\end{equation}
where
\begin{equation}
	\mathfrak{C}_2(z)=\frac{2K_2\left(2 \sqrt{k^2+m^2} z\right)}{k^2+m^2}\;\mathrm{and}\;\mathfrak{C}_4(z)=\frac{-6zK_3\left(2 \sqrt{k^2+m^2} z\right)}{(k^2+m^2)^{3/2}}
\end{equation}
As in the previous case, from the asymptotic behaviour of $\mathfrak{C}(z,M)$ it is manifest that, far from the boundary, already at $z\approx 1$, the scaling behaviour is the same as the one on the whole Minkowski spacetime. As before, solving Equation \eqref{eq:scal_casim} analytically seems not feasible. We present the stream lines in the plane $m^2,\lambda$ using the method $\mathsf{StreamPlot}[\;]$ of Wolfram Mathematica. In Figures \ref{fig:scal_casim_dist-1}, \ref{fig:scal_casim_dist-5} the scaling is computed at $\widetilde{z}=10^{-1},10^{-5}$.
\begin{figure}[H]
\begin{minipage}[b]{0.45\linewidth}
      \centering
    \includegraphics[width=\textwidth]{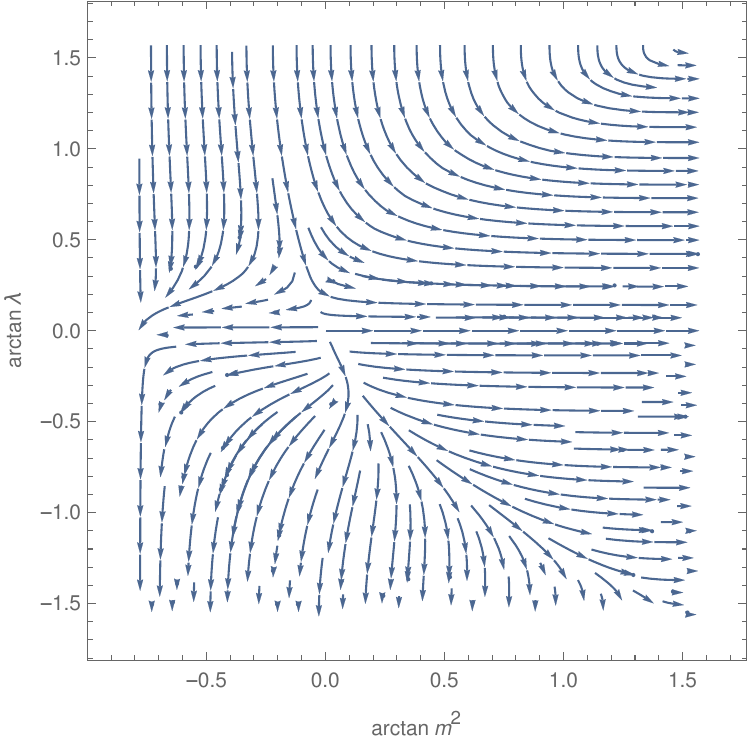}
    \caption{Scaling of the coupling constants on the Minkowski upper half-space. Dimensionless distance from the boundary: $\widetilde{z}=10^{-1}$}
    \label{fig:scal_casim_dist-1}
\end{minipage}
\hspace{0.5cm}
\begin{minipage}[b]{0.45\linewidth}
\centering
\includegraphics[width=\textwidth]{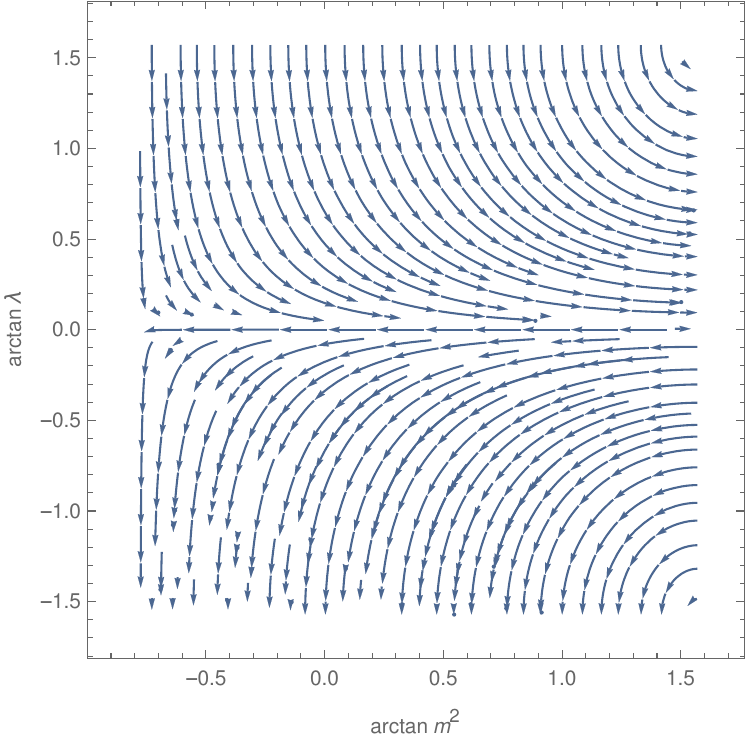}
    \caption{Scaling of the coupling constants on Minkowski upper half-space. Dimensionless distance from the boundary: $\widetilde{z}=10^{-5}$}
 \label{fig:scal_casim_dist-5}
\end{minipage}
\end{figure}
\begin{figure}[H]
\begin{minipage}[b]{0.45\linewidth}
      \centering
    \includegraphics[width=\textwidth]{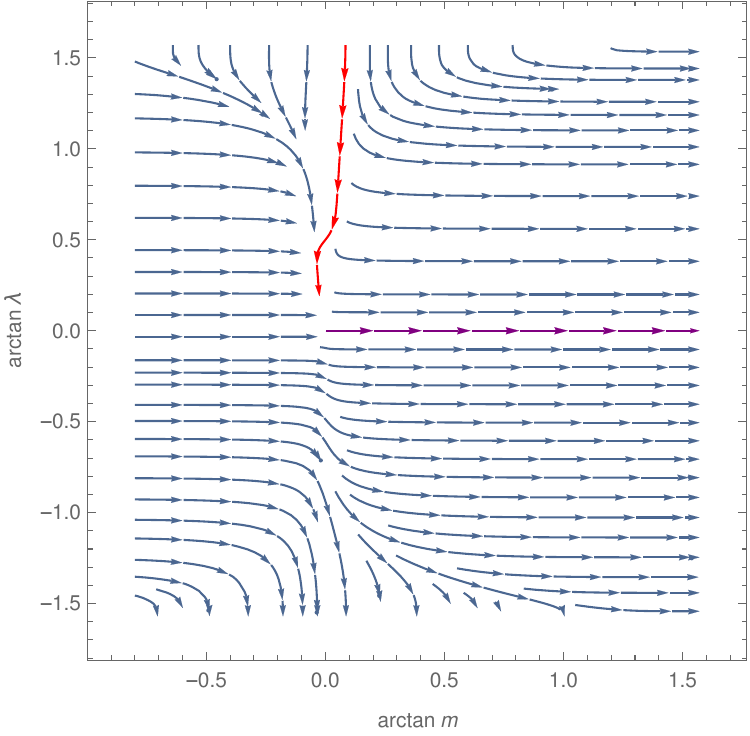}
    \caption{Flow on the whole Minkowski spacetime: (1) for the massless, interacting theory to the Gaussian, IR fixed point (red), (2) for the massive, non-interacting theory to the Gaussian, UV fixed point (violet). Observe that the plot is restricted to $m^2>0$.}
    \label{fig:fixedpoints_mink}
\end{minipage}
\hspace{0.5cm}
\begin{minipage}[b]{0.45\linewidth}
\centering
\includegraphics[width=\textwidth]{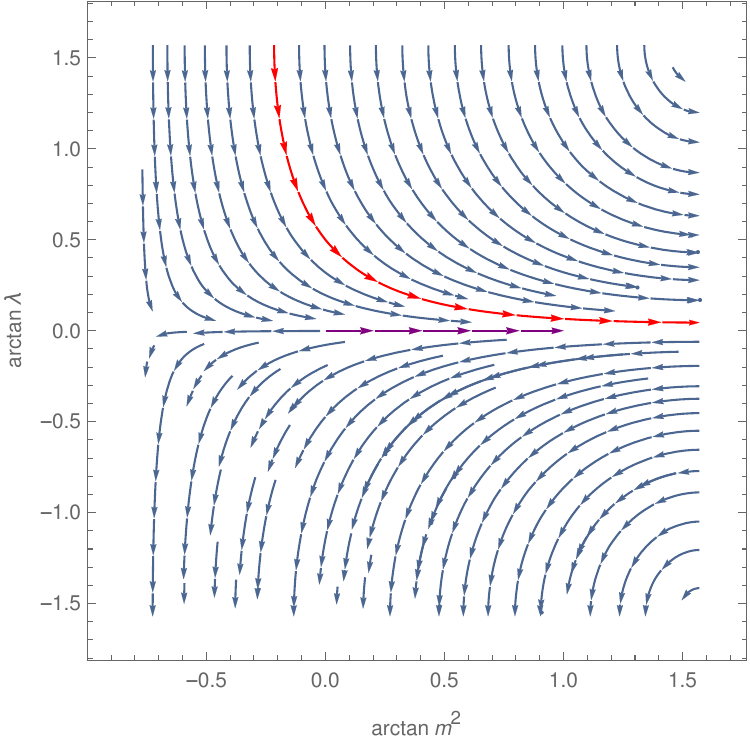}
    \caption{Flow on Minkowski upper half-space: (1) for the non-interacting massive theory to the Gaussian, UV fixed point (violet); (2) of an IR non-complete trajectory (red), that flows to $\lambda=0$ but infinite mass.}
 \label{fig:fixedpoints_casim}
\end{minipage}
\end{figure}

\begin{remark}
The comparison between the flow on $(\nmink^4,\eta)$ and the one on $(\nuppmink^4,\eta)$, at distance $\widetilde{z}=10^{-5}$, see Figures \ref{fig:fixedpoints_mink}, \ref{fig:fixedpoints_casim}, suggests that approaching the boundary does not affect the ultraviolet behaviour of the theory, as it can be still observed the Gaussian UV, fixed point, for the non-interacting theory. Yet, the infrared behaviour is significantly different, since the theory, under the renormalization group flow, seems to spontaneously acquire mass also on the critical surface, that it the locus where the relevant coupling constants are set to $0$. In other words it appears as if there in no trajectory for the massless theory to the Gaussian, infrared fixed point. Yet, this conclusion might be premature since, as shown in Figure \ref{fig:betafunc_casim}, near the boundary the $\beta$-function as a function of $\lambda$ becomes very large, eventually exploding at the boundary. Consequently, the system acquires a large interaction coupling constant near the boundary and the perturbative expansion on which our analysis is based does no longer apply. Once more we stress that this is a byproduct of the fact that, in the minimal Hadamard subtraction, we are not removing all global singularities of the underlying Hadamard two-point correlation function and these become manifest as we approach $z=0$.
\end{remark}

\begin{figure}[H]
\begin{minipage}[b]{0.45\linewidth}
     \centering
    \includegraphics[width=\textwidth]{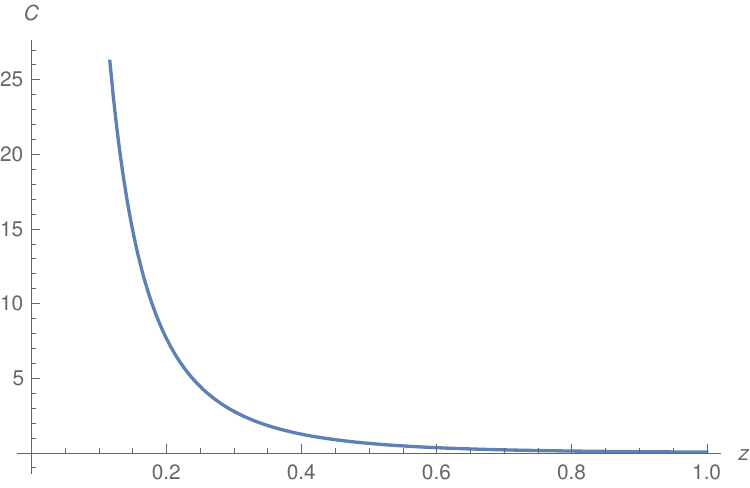}
    \caption{Plot of $\mathfrak{C}(\widetilde{z},\widetilde{M})$ at fixed value of $M$, with respect to $\widetilde{z}$.\\\quad\qquad\\ }
    \label{fig:cterm}
\end{minipage}
\hspace{0.5cm}
\begin{minipage}[b]{0.45\linewidth}
\centering
\includegraphics[width=\textwidth]{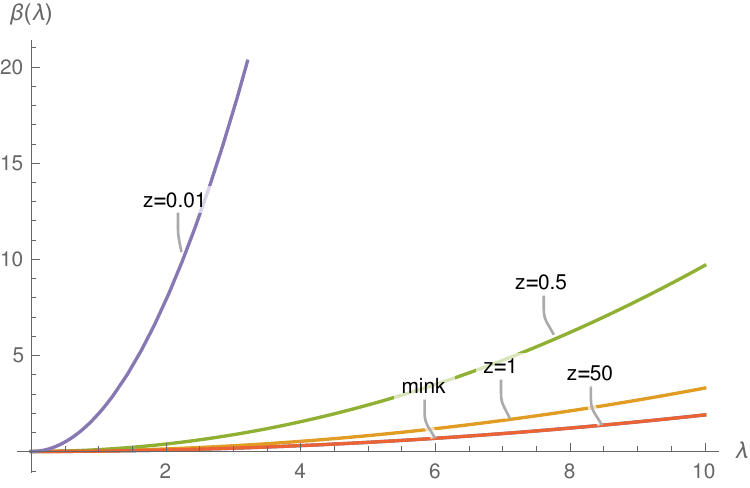}
    \caption{Comparison between the $\beta$-functions of $\lambda$, in the massless case, at different $\widetilde{z}$. Note that for small $\widetilde{z}$ the $\beta$-function explodes.}\label{fig:betafunc_casim}
\end{minipage}
\end{figure}
\subsection{Neumann's boundary conditions}\label{Sec: Neumann flow}
We shall now study the Lorentzian Wetterich equation on $\nuppmink^4$ when the boundary conditions at $z=0$ are of Neumann type. We recall that the relevant 2-point correlation function individuated in Equation \eqref{eq:2pointfunct_uppmink_methodimage} is
$$\Delta^{\uppmink}_{+,N}(x,y)\defeq \Delta^{\mink}_+(x,y)+ \Delta^{\mink}_+(\iota_z(x), y),$$
that amounts to a change of sign with respect to the Dirichlet case. The flow equations become
\begin{subequations}
\begin{equation}
	\label{scaling_in4Neumann}
	\partial_k U_k(\phi)=\frac{k}{16\pi^2}M^2\log \left(\frac{M^2}{\mu^2}\right)\Big(1+\mathfrak{B}(z,M)\Big),
\end{equation}
\begin{equation}
	\label{scaling_in4CasimirNeumann}
	\partial_k U_k(\phi)=\frac{k}{16\pi^2}M^2\log \left(\frac{M^2}{\mu^2}\right)\Big(1+\mathfrak{C}(z,M)\Big),
\end{equation}
\end{subequations}
where Equation \eqref{scaling_in4Neumann} corresponds to the full Hadamard subtraction, while Equation \eqref{scaling_in4CasimirNeumann} to the minimal one. The functions $\mathfrak{B}(z,M)$ and $\mathfrak{C}(z,M)$ are the boundary contributions respectively as per Equation \eqref{Eq:factorB} and \eqref{Eq:factorC}.

From these, we obtain the following scaling for the dimensionless coupling constants with the \emph{full Hadamard subtraction}:
\begin{equation}
	\label{eq:scaling_couplconst_dimensionless_choosemu_neumann}
	\left\{
	\begin{gathered}
		\begin{aligned}
			&k\partial_k \widetilde{m}^2_k=-2\widetilde{m}_k^2+\frac{\widetilde{\lambda}_k}{16\pi^2}\left[\left(1+\mathfrak{B}\left(\widetilde{z},\sqrt{1+\widetilde{m}_k^2}\right)\right)\left(\log(\widetilde{m}_k^2+1)+1\right)\right.+\\&\quad\quad\quad\quad-\left.(1+\widetilde{m}_k^2)\log(1+\widetilde{m}_k^2)\mathfrak{B}_2\left(\widetilde{z},\sqrt{1+\widetilde{m}_k^2} \right)\right]\\
			\quad \\
			&k\partial_k \widetilde{\lambda}_k=\frac{\lambda_k^2}{16\pi^2}\left[\frac{3}{1+\widetilde{m}_k^2}\left(1+\mathfrak{B}\left(\widetilde{z},\sqrt{1+\widetilde{m}_k^2}\right)\right)+\right.6\left(\log(\widetilde{m}_k^2+1)+1\right)\mathfrak{B}_2\left(\widetilde{z},\sqrt{1+\widetilde{m}_k^2}\right)+ \\ &\quad \quad\quad \quad -\left.(1+\widetilde{m}_k^2)\log(\widetilde{m}_k^2+1)\mathfrak{B}_4\left(\widetilde{z},\sqrt{1+\widetilde{m}_k^2}\right)\right],
		\end{aligned} 
	\end{gathered} 
	\right.
\end{equation}
and with respect to the \emph{minimal Hadamard subtraction}:
\begin{equation}
	\label{eq:scaling_couplconst_dimensionless_choosemu_casimir_neumann}
	\left\{
	\begin{gathered}
		\begin{aligned}
			&k\partial_k \widetilde{m}^2_k=-2\widetilde{m}_k^2+\frac{\widetilde{\lambda}_k}{16\pi^2}\left[\left(1+\mathfrak{C}\left(\widetilde{z},\sqrt{1+\widetilde{m}_k^2}\right)\right)\left(\log(\widetilde{m}_k^2+1)+1\right)\right.+\\&\quad\quad\quad\quad-\left.(1+\widetilde{m}_k^2)\log(1+\widetilde{m}_k^2)\mathfrak{C}_2\left(\widetilde{z},\sqrt{1+\widetilde{m}_k^2} \right)\right]\\
			\quad \\
			&k\partial_k \widetilde{\lambda}_k=\frac{\lambda_k^2}{16\pi^2}\left[\frac{3}{1+\widetilde{m}_k^2}\left(1+\mathfrak{C}\left(\widetilde{z},\sqrt{1+\widetilde{m}_k^2}\right)\right)+ \right.6\left(\log(\widetilde{m}_k^2+1)+1\right)\mathfrak{C}_2\left(\widetilde{z},\sqrt{1+\widetilde{m}_k^2}\right)+\\ &\quad \quad\quad \quad \left.-(1+\widetilde{m}_k^2)\log(\widetilde{m}_k^2+1)\mathfrak{C}_4\left(\widetilde{z},\sqrt{1+\widetilde{m}_k^2}\right)\right].
		\end{aligned} 
	\end{gathered} 
	\right.
\end{equation}
As discussed in Section \ref{sec:full_sub} the corrections provided by the term $\mathfrak{B}$ are negligible, hence we shall focus on the second case. We present in Figures \ref{fig:scal_neumann_dist-1}, \ref{fig:scal_neumann_dist-5} the scaling computed at dimensionless distance $\widetilde{z}=10^{-1}$ and $\widetilde{z}=10^{-5}$ employing the minimal Hadamard subtraction.
\begin{figure}[H]
	\begin{minipage}[b]{0.45\linewidth}
		\centering
		\includegraphics[width=\textwidth]{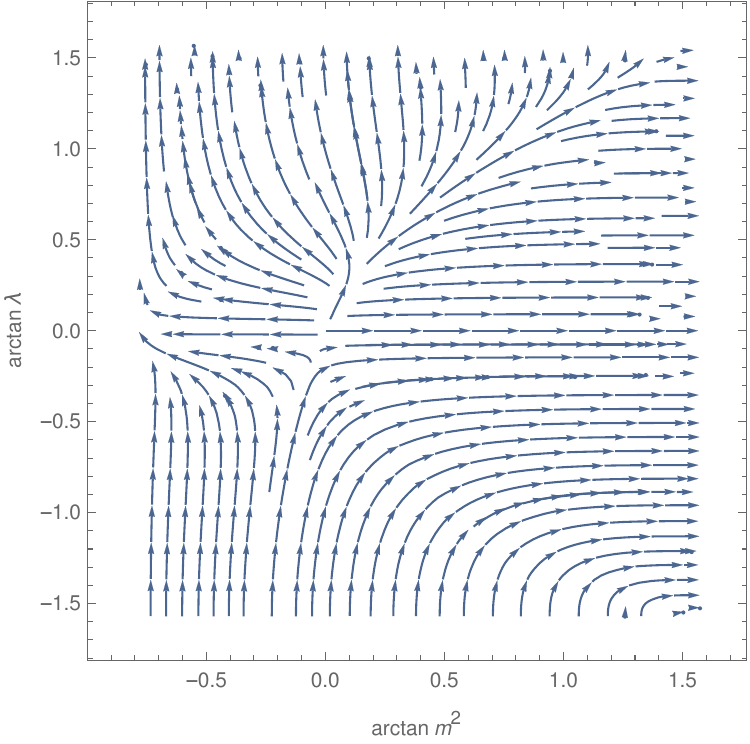}
		\caption{Scaling of the coupling constants on $\nuppmink^4$ with Neumann boundary conditions. Dimensionless distance from the boundary: $\widetilde{z}=10^{-1}$}
		\label{fig:scal_neumann_dist-1}
	\end{minipage}
	\hspace{0.5cm}
	\begin{minipage}[b]{0.45\linewidth}
		\centering
		\includegraphics[width=\textwidth]{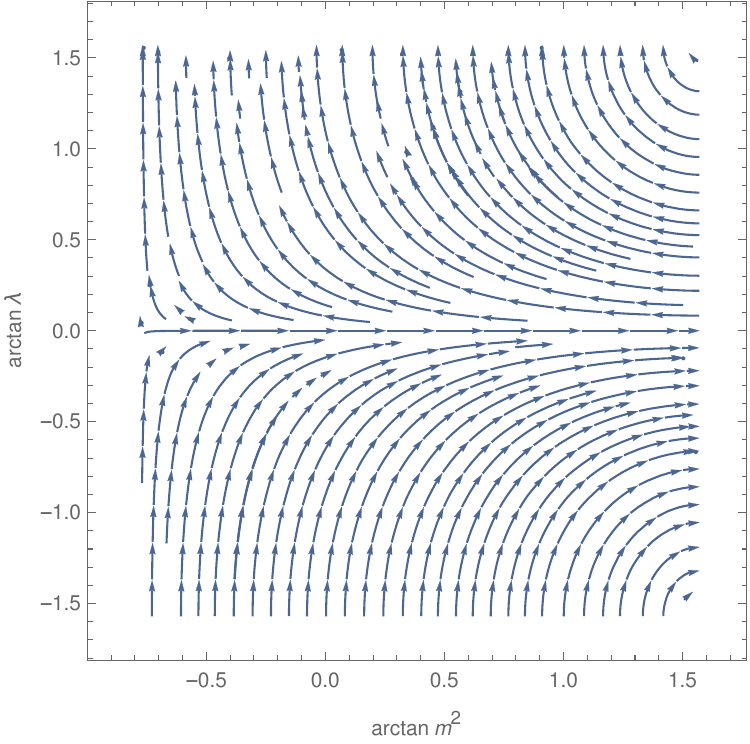}
		\caption{Scaling of the coupling constants on $\nuppmink^4$ with Neumann boundary conditions. Dimensionless distance from the boundary: $\widetilde{z}=10^{-5}$}
		\label{fig:scal_neumann_dist-5}
	\end{minipage}
\end{figure}
The flows for the coupling constants shows that there is a remarkable difference from the Dirichlet counterpart. Indeed, by direct inspection is it possible to see that the UV (IR) behaviour of the flow in Figure \ref{fig:scal_neumann_dist-5} corresponds to the IR (UV) one in Figure \ref{fig:scal_casim_dist-5}, as the direction of the arrows is reversed.
The same applies to the flow in Figure \ref{fig:scal_neumann_dist-1}, that possesses an ultraviolet, Gaussian fixed point at $(\widetilde{m}^{*\,2},\widetilde{\lambda}^*)=(0,0)$. 

As a consequence the system is \emph{asymptotically free}, as the interaction coupling constant vanishes in the high energy limit. This statement is strengthened by the analysis in Figure \ref{fig:comparisonbeta_neumann} providing a comparison of the $\beta$-functions at different values of $\widetilde{z}$. It shows a change of sign in correspondence of $\widetilde{z}\approx 0.897$. At this value the $\beta$-function is identically zero, namely at $\widetilde{z}\approx 0.9$ the theory enjoys the peculiar property of being at a fixed point for every choice of initial conditions. In Figure \ref{fig:scalinglambda_neumann} we present the scaling of $\widetilde{\lambda}$ with respect to the energy scale $k$ at different values of $\widetilde{z}$, obtained with the method $\mathsf{NDSolve}[\;]$ from Wolfram Mathematica, with the initial condition $\widetilde{\lambda}(k=1)=1$. Note that for $\widetilde{z}>0.9$ the interaction coupling constant approaches the Gaussian, infrared fixed point at $k=0$, while for $\widetilde{z}<0.9$ it diverges in the low energy regime. The system is asymptotically free for $\widetilde{z}<0.9$ as $\widetilde{\lambda}$ tends to zero logarithmically as a function of the energy.
\begin{figure}[H]
	\begin{minipage}[b]{0.45\linewidth}
		\centering
		\includegraphics[width=\textwidth]{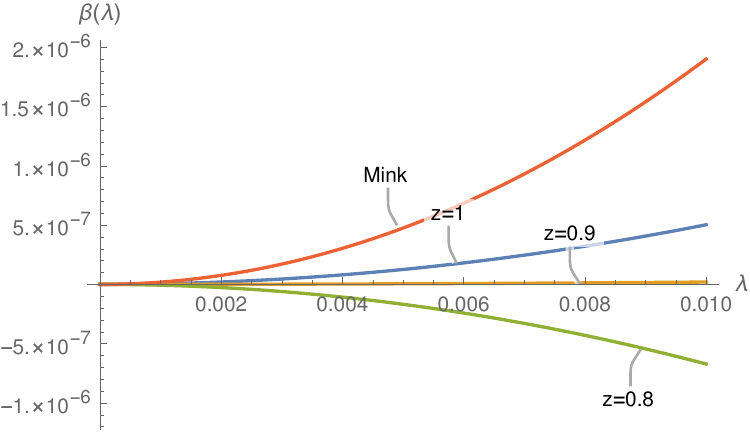}
		\caption{Comparison between the $\beta$-functions of $\lambda$, in the massless case, at different $\widetilde{z}$. Note that at $\widetilde{z}\approx 0.9$ the $\beta$-function changes sign.}
		\label{fig:comparisonbeta_neumann}
	\end{minipage}
	\hspace{0.5cm}
	\begin{minipage}[b]{0.45\linewidth}
		\centering
		\includegraphics[width=\textwidth]{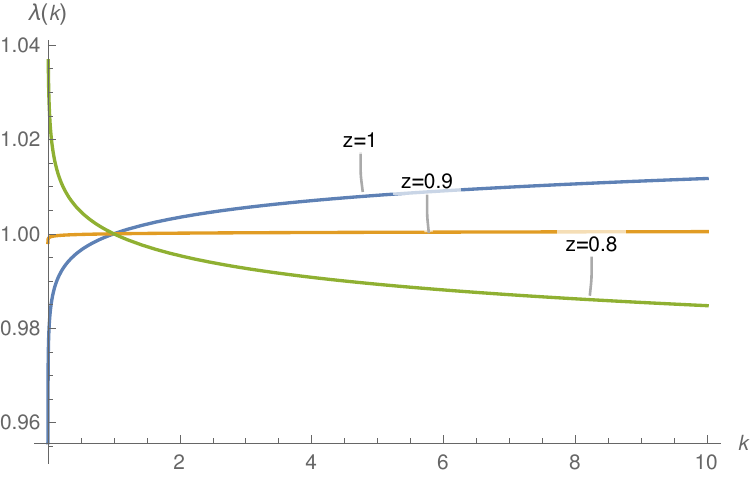}
		\caption{Scaling of the interaction coupling constant $\lambda$ with respect to the energy scale $k$, at different $\widetilde{z}$. Note that for $\widetilde{z}\lessapprox 0.9$ the system becomes asymptotically free.}
		\label{fig:scalinglambda_neumann}
	\end{minipage}
\end{figure}
We stress that for $\widetilde{z}<0.9$ the perturbative approach does no longer apply at low energies, as the system becomes strongly interacting.
\subsection{Flow on Poincaré patch of Anti-de Sitter}
\label{sec:pads_scal}
In this section we study the Lorentzian Wetterich equation in the case where the underlying Lorentzian manifold is the Poincaré patch of Anti-de Sitter $(\pads,g)$, setting also in this case for convenience and for physical relevance $d=3$. Let $I$ be the interacting Klein-Gordon action
$$I(\phi)=-\int_{\npads_4}d\mu_g\,\left(\frac{1}{2}\nabla_a\phi\nabla^a\phi +\frac{m^2}{2}\phi^2+\frac{\xi}{2}R\phi^2+\lambda\frac{\phi^4}{4!}\right)f(x),$$
where $m^2$ is the squared mass of the field, $\nabla$ is the covariant derivative built from the metric, $\xi$ is the coupling to the scalar curvature $R$ while $\lambda$ is the interaction coupling constant. In addition, we assign at $\partial\npads_4$ Dirichlet boundary conditions which are allowed for every values of $m$ and $\xi$ as sketched in Section \ref{Sec: QFT with timelike boundaries}. 

To study the behaviour of the Lorentzian Wetterich equation on the Poincaré patch of Anti-de Sitter, we apply once more the Local Potential Approximation (LPA), discussed in Section \ref{sec:wett}. The first step consists of formulating an \emph{ansatz} for the regularized effective action. Our choice, similar to the one on Minkowski upper-half space, is

\begin{equation}
    \label{lpa_pads}
    \Gamma_k(\phi)=-\int_{{\npads}_4}d\mu_g\,\frac{1}{2}\nabla_a\phi\nabla^a\phi+U_k(\phi),
\end{equation}
where $U_k(\phi)\defeq m^2_k\frac{\phi^2}{2}+\lambda_k \frac{\phi^4}{4!}$ is the local potential. 
Subsequently, following the LPA, for a choice of a Hadamard, on shell 2-point correlation function $\Delta_+\in\mathcal{D}'(\npads_4\times\npads_4)$, the Wetterich equation takes the form
\begin{equation}
    \label{eq:had_subtraction_pads}
    \partial_k \Gamma_k(\phi)=-\int_{\npads_4} d\mu_g\,\partial_kU_k(\phi)=-\int_{{\npads_4}} d\mu_g(y) \lim_{x\rightarrow y} k\left(\Delta^{\npads_4}_{S,D}(x,y)-H^{\npads_4}_{(D)}(x,y)\right),
\end{equation}
where, adapting to the case in hand the argument used in \cite[Section 6.1.3]{mother} $\Delta^{\npads_4}_{S,D}(x,y)$ is the symmetric part of the two-point correlation function for a massive, real scalar field on $\npads_4$ with Dirichlet boundary conditions as per Equation \eqref{eq:twopointfunct_pads_2sol}. Here the parameter $\nu$ is replaced by
\begin{equation*}
    \nu_k \defeq \frac{1}{2}\sqrt{1+4l^2 \overline{M}^2},\quad \overline{M}^2\defeq m_k^2+\lambda_k\frac{\phi^2_{cl}}{2}+k^2+\left(\xi-\frac{1}{6}\right)R.
\end{equation*}

In Equation \eqref{eq:had_subtraction_pads} $H^{\npads_4}_{(D)}(x,y)$ is either the Hadamard parametrix $H^{\npads_4}_{(D)}(x,y)$ associated to $\Delta^{\npads_4}_{+,D}$ as in Equation \eqref{eq:hadpar_nonrefl_pads} with the replacement $\nu\mapsto \nu_k$ or $H^{\npads_4}_{D}(x,y)$ as per Equation \eqref{eq:hadpar_nonrefl_pads_D}. These options correspond in the case in hand to the minimal and full Hadamard subtraction considered in the previous section. 

In the following,  we restrict for simplicity to the conformal coupling, \iee $\xi=1/6$, though we shall not lose in generality of the results as we will comment. The scaling equation on $(\npads_4,g)$ using $H^{\npads_4}_D(x,y)$ reads
\begin{equation}
    \label{eq:scalingpads}
    \begin{gathered}
         \partial_k U_k(\phi)=\frac{k}{4l^2 \Gamma(\nu_k-1/2)}\Big[(1+2\nu_k)\Gamma(\nu_k-1/2)+2\log(2l\mu)+\\-2\Gamma(3/2+\nu_k)[-1+2\gamma+\psi(\nu_k+1/2)+\psi(\nu_k+3/2)]\Big],
    \end{gathered}  
\end{equation}
where $\psi$ is the digamma function, $\gamma$ is the Euler-Mascheroni constant and where we used the fact that, in the coinciding point limit, $\Delta^{\npads_4}_{+,D}(x,y)\equiv \Delta^{\npads_4}_{S,D}(x,y)$.
We stress the presence, in Equation \eqref{eq:scalingpads}, of an arbitrary energy scale $\mu$ in the logarithm, that corresponds to a renormalization energy scale. Differently from the flat spacetime scenario, this ambiguity can be fixed by the geometry. As a matter of fact $(\npads_4,g)$ possesses a characteristic length scale $l$, and a natural choice is $\mu\equiv (2l)^{-1}$, which allows to discard the logarithmic term. 
\begin{remark}
    \label{rmk:fullsub_pads}
    The scaling in Equation \eqref{eq:scalingpads} has been obtained employing the minimal Hadamard subtraction. It was shown in Section \ref{sec:full_sub} that the choice of a different prescription, particularly the full Hadamard subtraction, might yield different physical results. We highlight that, differently from the case of half Minkowski spacetime, this difference seems not to occur on $(\npads_4,g)$. Indeed, if one considers the \emph{reflected component} of the Hadamard parametrix of Equation \eqref{eq:hadpar_nonrefl_pads_D}, namely $H^{\npads_4}(\iota_z(x),y)$, and if one computes its coinciding point limit, one finds that
    \begin{equation*}
         \lim_{x\rightarrow y} H^{\npads_4}(\iota_z(x),y)=\lim_{\epsilon\rightarrow 0^+}\frac{1}{\sigma_\epsilon(\iota_z(x),y)}+\left(\nu^2-\frac{1}{4}\right)\log \left(\frac{\sigma_\epsilon(\iota_z(x),y)}{2l^2}\right)=\frac{1}{2l^2},
    \end{equation*}
    where we used the expression for the geodesic distance in Equation \eqref{eq:cordaldist_pads}. As a consequence Equation \eqref{eq:scalingpads} is shifted by a constant, yielding the same $\beta$-functions as in the minimal Hadamard subtraction scenario. 

\end{remark}
From Equation \eqref{eq:scalingpads}, by considering higher order derivatives with respect to $\phi$, it is possible to extract the $\beta$-functions, as discussed in Section \ref{sec:flowhalf}. These read
\begin{equation*}
   \begin{cases}
    \beta_{m^2}\defeq\partial_k U_k^{(2)}(\phi=0)=\partial_k m_k^2,\\
    \beta_\lambda\defeq\partial_k U_k^{(4)}(\phi=0)=\partial_k \lambda_k.
\end{cases} 
\end{equation*}
As before, the system obtained is not autonomous. We proceed by rewriting it in terms of the \emph{dimensionless coupling constants}. These can be defined with the aid of the characteristic length scale $l^2$, as
\begin{equation*}
   \begin{cases}
    m_k^2\,\longmapsto\,\widetilde{m}_k^2\defeq l^2 m_k^2,\\
    \lambda_k\,\longmapsto\,\widetilde{\lambda}_k\defeq  k^2l^2\lambda_k.
\end{cases} 
\end{equation*}
Under this choice of rescaling, the $\beta$-functions become
\begin{equation}
   \begin{cases}
    \partial_k \widetilde{m}_k^2=l^2 \partial_k m_k^2,\\
    \partial_k \widetilde{\lambda}_k=l^2(2k\lambda_k+k^2\partial_k \lambda_k).
\end{cases} 
\end{equation}
Written explicitly, we obtain the scaling equations for the dimensionless coupling constants:
\begin{equation}
    \label{eq:betafunct_pads}
    \left\{
    \begin{gathered}
    \begin{aligned}
    &k\partial_k \widetilde{m}_k^2=\frac{\widetilde{\lambda}_k}{4\tilde{\nu}} \left[1-\left(\tilde{\nu}^2-\frac{1}{4}\right) \left([-1+2\gamma+\psi(1/2+\tilde{\nu})]\psi(3/2+\tilde{\nu})+\psi^2(3/2+\tilde{\nu})+\right.\right.\\ &\quad\quad\quad\quad\quad\left.-\psi(-1/2+\tilde{\nu})[-1+2\gamma+\psi(1/2+\tilde{\nu})+\psi(3/2+\tilde{\nu})]+\psi'(1/2+\tilde{\nu})+\psi'(3/2+\tilde{\nu})\right)\Big]\\
    \quad\\
    &k\partial_k \widetilde{\lambda}_k=2\widetilde{\lambda}_k+\frac{3 \widetilde{\lambda}_k^2}{16\tilde{\nu}^3}  \Bigg[-2-\left(\tilde{\nu}^2-\frac{1}{4}\right) \Big[\tilde{\nu}\; \psi^3(3/2+\tilde{\nu})+\big[2 \gamma  \tilde{\nu}-\tilde{\nu}+\tilde{\nu} \psi(1/2+\tilde{\nu})-2\big] \;\psi^2(3/2+\tilde{\nu})+\\
    &+\big[-2 \psi(1/2+\tilde{\nu})+2 \tilde{\nu} \psi'(1/2+\tilde{\nu})+3 \tilde{\nu} \psi'(3/2+\tilde{\nu})-\tilde{\nu} \psi'(-1/2+\tilde{\nu})-4 \gamma +2\big] \;\psi(3/2+\tilde{\nu})+\\
    &+\tilde{\nu} \left(\psi(1/2+\tilde{\nu})+\psi(3/2+\tilde{\nu})+2 \gamma -1\right) \psi^2(-1/2+\tilde{\nu})-2 \psi'(1/2+\tilde{\nu})+2 \gamma  \tilde{\nu} \psi'(3/2+\tilde{\nu})\\
    &-\tilde{\nu} \psi'(3/2+\tilde{\nu})+\tilde{\nu} \psi(1/2+\tilde{\nu}) \psi'(3/2+\tilde{\nu})-2 \psi'(3/2+\tilde{\nu})-2 \psi(-1/2+\tilde{\nu}) \Big[\tilde{\nu} \psi^2(3/2+\tilde{\nu})+\\
    &+(2 \gamma  \tilde{\nu}-\tilde{\nu}-1) \psi(3/2+\tilde{\nu})+\psi(1/2+\tilde{\nu}) \left(\tilde{\nu} \psi(3/2+\tilde{\nu})-1\right)+\tilde{\nu} \psi'(1/2+\tilde{\nu})+\tilde{\nu} \psi'(3/2+\tilde{\nu})+\\
    &-2 \gamma +1\Big]-2 \gamma  \tilde{\nu} \psi'(-1/2+\tilde{\nu})+\tilde{\nu} \psi'(-1/2+\tilde{\nu})-\tilde{\nu} \psi(1/2+\tilde{\nu}) \psi'(-1/2+\tilde{\nu})+\tilde{\nu} \psi''(1/2+\tilde{\nu})\\
    &+\tilde{\nu} \psi''(3/2+\tilde{\nu})\Big]\Bigg]
    \end{aligned}
    \end{gathered} \right.
\end{equation}
where $\tilde{\nu}\defeq \frac{1}{2}\sqrt{1+4k^2l^2+4\widetilde{m}_k^2}$, while $\psi,\,\psi',\,\psi''$ are the digamma function and its derivatives.

The system is not yet autonomous, as the independent variable appears explicitly in the term $\tilde{\nu}$. In the following remark we present physical considerations that allow to discard it and which are in spirit inspired by the analysis in \cite{mother} on the Wetterich equation on de Sitter spacetime.
\begin{remark}[Small $k^2l^2$ limit and the large deflation approximation]\label{Rem: Large Deflation}
Let us consider the heuristic picture where, performing an experiment at an energy scale $k$, we are probing lengths of order $\rho\defeq k^{-1}$. Consequently, we are interested in estimate the magnitude of $l^2/\rho^2$. 

If we wish to compare the scaling of the coupling constants on $(\npads_4,g)$ with the one on flat Minkowski spacetime, we expect that large differences lie in the infrared spectrum, since large values of $\rho$ allow to explore a broader volume of the underlying manifold, hence discriminating between flat and curved spacetime. Accordingly, it is desirable to work in the approximation $l^2/\rho^2\approx 0$, to which we refer as \emph{large deflation approximation}. Indeed, as the cosmological constant of Anti-de Sitter spacetime in 4-dimension is $\Lambda=-6l^{-2}$, for small values of $l$, $\Lambda$ is (negatively) large. Under the large deflation approximation, we can discard the contribution from $k^2l^2$ in $\tilde{\nu}$ and, therefore, the system of ODEs in Equation \eqref{eq:betafunct_pads} becomes autonomous.
\end{remark}

 We can start the resolution of the autonomous system, obtained from Equation \eqref{eq:betafunct_pads} recalling the approximation $\tilde{\nu}\defeq \frac{1}{2}\sqrt{1+4\widetilde{m}_k^2}$. Since looking for an analytic form of the solution seems a daunting task, we proceed with a numerical analysis. Yet, to start with, we observe that, by direct inspection, we can individuate a \emph{line of fixed points} in correspondence to $\widetilde{\lambda}=0$, since thereon the $\beta$-functions vanish. 

In Figure \ref{fig:zerobeta_pads} we show the zero-level set of the $\beta$-functions of $\widetilde{m}^2$ and $\widetilde{\lambda}$. Note that, apart from the line of fixed points that we have observed, there is another fixed point for the renormalization group flow, where both $\beta$-functions vanish. Notice that $\widetilde{m}^2>-1/4$ in agreement with the Breitenlohner-Freedman bound \cite{Breitenlohner:1982jf}.

By using the method $\mathsf{StreamPlot}[\;]$ from Wolfram Mathematica, we can study the dimensionless version of Equation \eqref{eq:betafunct_pads} by plotting the stream lines for the system of ODEs, depicted in Figure \ref{fig:fullscal_ads}. As on Minkowski upper half-space, we choose to plot with respect to the tangents of $\widetilde{m}^2,\widetilde{\lambda}$. Moreover, the arrows point in the infrared direction. In Figure \ref{fig:scal_pads} we restrict the plot to the physical region $\widetilde{\lambda}>0$. Note that the fixed point depicted in purple, previously observed in Figure \ref{fig:zerobeta_pads}, is an ultraviolet, stable fixed point for the renormalization group flow. A numerical estimate for it yields the value $(\widetilde{m}^{*},\widetilde{\lambda}^*)\approx (0.486,0.469)$. 

We emphasise that the interaction coupling constant near the UV fixed point is small, confirming \emph{a posteriori} that a perturbative analysis of the Wetterich equation near the UV fixed point is legit. The result obtained in Figure \ref{fig:scal_pads} is noteworthy, as it shows that, contrarily to the flat spacetime scenario, a class of complete renormalization group trajectories does exist.

\begin{figure}[H]
\begin{minipage}[b]{0.45\linewidth}
\centering
\includegraphics[width=\textwidth]{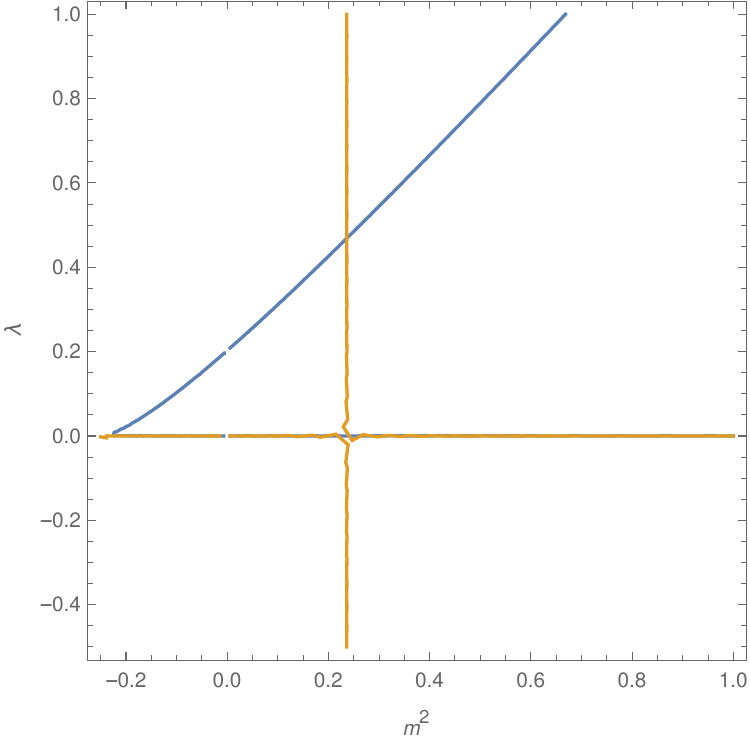}
    \caption{Zeros of the $\beta$-functions: in blue $\beta_{\tilde{m}^2}=0$, in orange $\beta_{\tilde{\lambda}}=0$. The line $\widetilde{\lambda}=0$ is a zero for both $\beta$-functions. We observe a non-trivial, fixed point at their intersection.}
 \label{fig:zerobeta_pads}
\end{minipage}
\hspace{0.5cm}
\begin{minipage}[b]{0.45\linewidth}
      \centering
    \includegraphics[width=\textwidth]{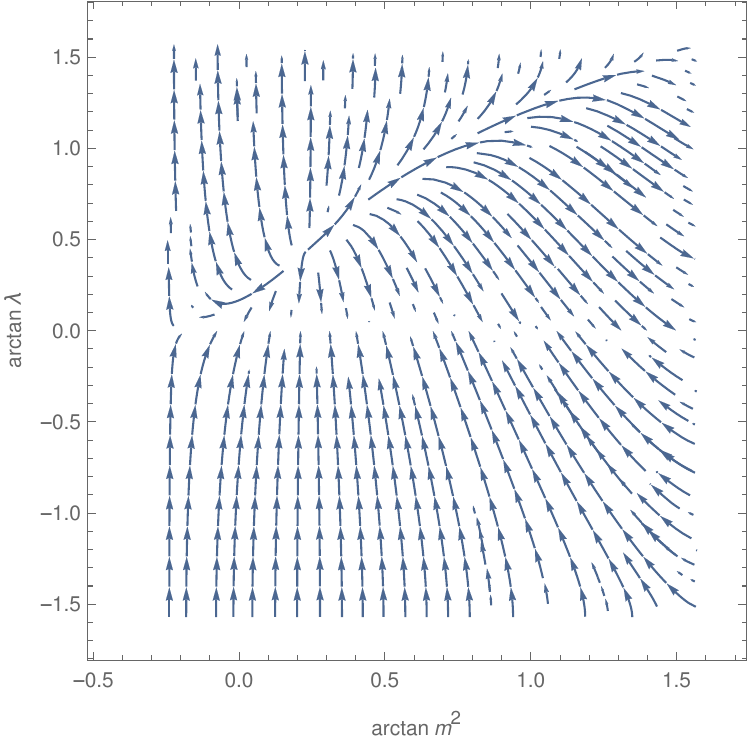}
    \caption{Scaling of the coupling constants on $(\npads_4,g)$, conformally coupled to the scalar curvature and with Dirichlet boundary conditions at the boundary.}
    \label{fig:fullscal_ads}
\end{minipage}
\end{figure}

\begin{figure}[H]
	\centering
	\includegraphics[width=0.7\textwidth]{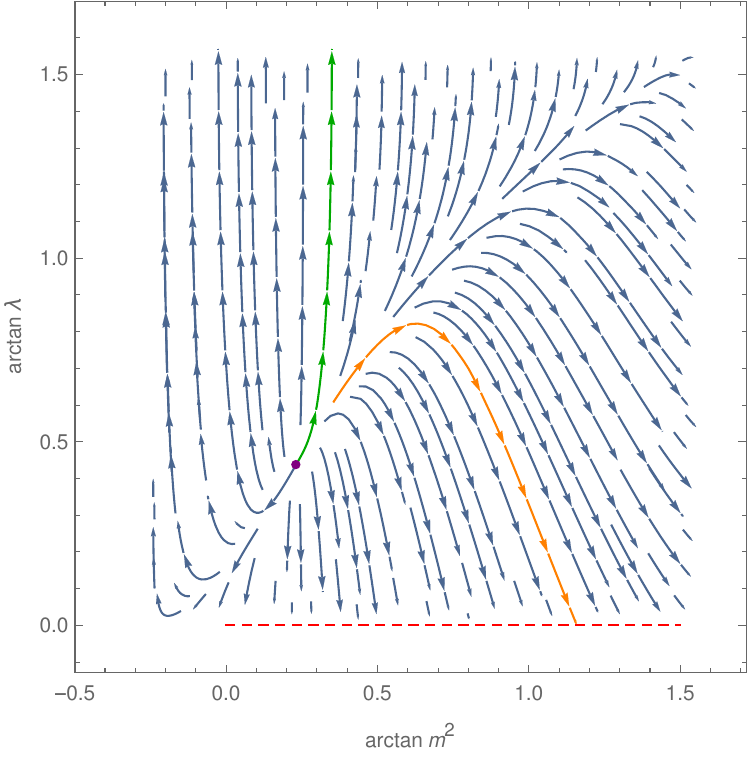}
	\caption{Scaling on $(\npads_4,g)$ with Dirichlet boundary conditions. Note the dashed line of infrared fixed points, depicted in red. The purple point at $(\widetilde{m}^*,\widetilde{\lambda}^*)\approx (0.486,0.469)$ is an ultraviolet, stable fixed point. A bifurcation-like behaviour separates the infrared complete trajectories (orange) from the infrared incomplete ones (green). Note that the orange line is complete both in the ultraviolet and in the infrared regime.}
	\label{fig:scal_pads}
\end{figure}

\begin{remark}
It is possible to observe a bifurcation which separates the infrared complete trajectories from those which are incomplete in the infrared regime. The former ones are depicted in Figure \ref{fig:ircomplete}. Note that since, at low energies, $m^2$ completely characterizes the behaviour, the classification presented in the previous section of $m^2$ and $\lambda$ as relevant and marginal couplings continues to hold.
The infrared behaviour is very different from the one obtained on Minkowski spacetime and on $\mathbb{H}^4$. It can be explained as a geometric effect of the negative curvature of $(\npads_4,g)$. As a matter of facts, this can be seen as a signature that, being negatively curved, Anti-de Sitter acts as a box, \emph{confining} the massless modes: as a consequence, the infrared behaviour is less singular. This heuristic picture explains also why small, negative values of $m^2$ are allowed, as the Breitenlohner-Freedman bound corresponds to $m^2>-1/4$. These modes, that on Minkowski spacetime are tachyonic hence unphysical, on Anti-de Sitter are regulated by the underlying geometry.  

Yet, one must always bear in mind that, in the ultraviolet regime, we are working under the approximation introduced in Remark \ref{Rem: Large Deflation} and therefore, before drawing a definitive conclusion on the behaviour of the flow equations in this instance, it is necessary to verify that the behaviour of the solutions is not qualitatively altered by the term discarded.

We observe that our results are in agreement with the literature and in particular, in \cite{confining_geo} it has been shown that, in the Euclidean signature, with Dirichlet boundary conditions and with a self-interaction $\lambda \varphi^3$, the low energy modes behave in a similar way as in the scenario considered in this paper.
\end{remark}

These results have been obtained under the assumption that the scalar theory is conformally coupled to the underlying geometry, namely $\xi=1/6$. This might raise the doubt that the UV and IR points observed are a consequence of this particular choice. As a matter of facts, the non-interacting Klein-Gordon action, conformally coupled, is by definition a fixed point for the renormalization group flow, and that could be the source of the line of infrared, fixed points at $\widetilde{\lambda}=0$. Nevertheless, we can show that this is not the case, as the choice $\xi=1/6$ merely simplifies the scaling without changing the overall behaviour. Indeed, let us include a non-conformal coupling $\xi$ in the parameter $\nu_k$, obtaining
$$\nu_k=\frac{1}{2}\sqrt{1+4l^2\left( m_k^2+\lambda_k\frac{\phi^2}{2}+k^2-\frac{12}{l^2}\left(\xi-\frac{1}{6}\right)\right)},$$
where we used that $R=-12/l^2$. Eventually this leads to the following change for $\tilde{\nu}$:
\begin{equation}
\label{eq:tildenu_nonconf_pads}
    \tilde{\nu}=\frac{1}{2}\sqrt{1+4\widetilde{m}_k^2-48\left(\xi-\frac{1}{6}\right)},
\end{equation}
where $\xi$ does not require any rescaling and where, as before, we have applied the large deflation approximation. Inserting $\tilde{\nu}$ in the system in Equation \eqref{eq:betafunct_pads}, we observe that $\lambda=0$ is still a line of fixed points. In Figure \ref{fig:scal_1-7} we show the scaling in the case where $\xi=2/15$, a value at which it is possible to see that the UV fixed point, marked in purple, lies at negative values of $m^2$. However, as it fulfills the Breitenlohner-Freedman bound, it is a physical admissible case.

In the case where the coupling $\xi$ is not the conformal one, we identify 2 special situations. The first one is the minimally coupled scenario, where $\xi=0$, that is represented in Figure \ref{fig:scaling_xi=0}. As it can be seen the UV fixed point, drawn in purple, lies in the unphysical region ruled out by the Breitenlohner-Freedman bound $m^2>-1/4$. Hence, when discussing the scalar quantum field on $(\npads_4,g)$ in the minimally coupled case, it must be considered an ultraviolet incomplete theory, while the infrared behaviour remains the same. We were unable to determine analytically the value $\overline{\xi}$ corresponding to an UV fixed point with $\widetilde{m}^{*\,2}=-1/4$: a numerical estimate for it is $\overline{\xi}\approx 0.127$. In this sense, $\overline{\xi}$ can be thought as the analogue of the Breitenlohner-Freedman bound at the level of the coupling to the scalar curvature, as the UV complete theories are those with $\xi>\overline{\xi}$. Yet, also in this instance, we have to keep in mind that we are working under the approximation introduced in Remark \ref{Rem: Large Deflation}.

The second particular case is the choice $\xi=3/16$, which corresponds to $\tilde{\nu}=\widetilde{m}_k$ in view of Equation \eqref{eq:tildenu_nonconf_pads}. As a consequence, as $\tilde{\nu}$ appears in the denominator of Equation \eqref{eq:betafunct_pads}, we must impose $\widetilde{m}\neq 0$. The scaling of the coupling constants in this scenario is depicted in Figure \ref{fig:scaling9-48}.

\begin{figure}[H]
	\begin{minipage}[b]{0.45\linewidth}
		\centering
		\includegraphics[width=\textwidth]{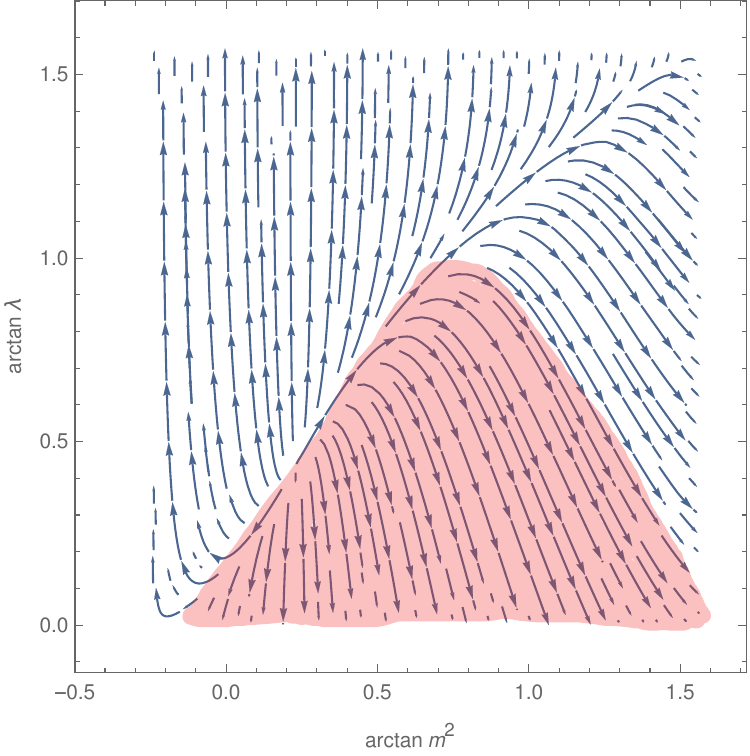}
		\caption{The class of infrared complete theories is highlighted in red. Note that they all share the same non-trivial UV fixed point.}
		\label{fig:ircomplete}
	\end{minipage}
	\hspace{0.5cm}
	\begin{minipage}[b]{0.45\linewidth}
		\centering
		\includegraphics[width=\textwidth]{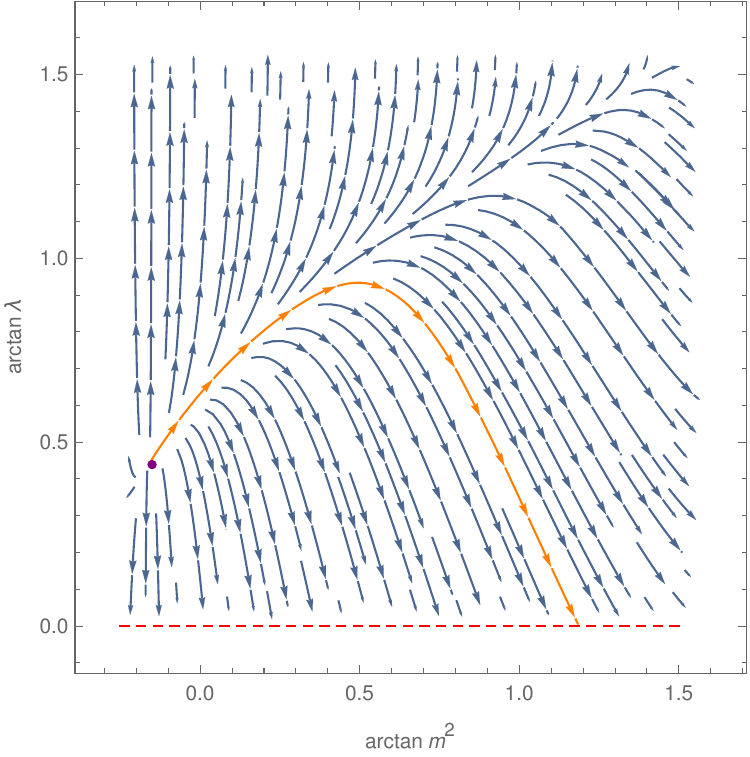}
		\caption{Scaling of the coupling constants on $\npads_4$ with $\xi=2/15$. \\}
		\label{fig:scal_1-7}
	\end{minipage}
\end{figure}

\begin{figure}[H]
\begin{minipage}[b]{0.45\linewidth}
\centering
\includegraphics[width=\textwidth]{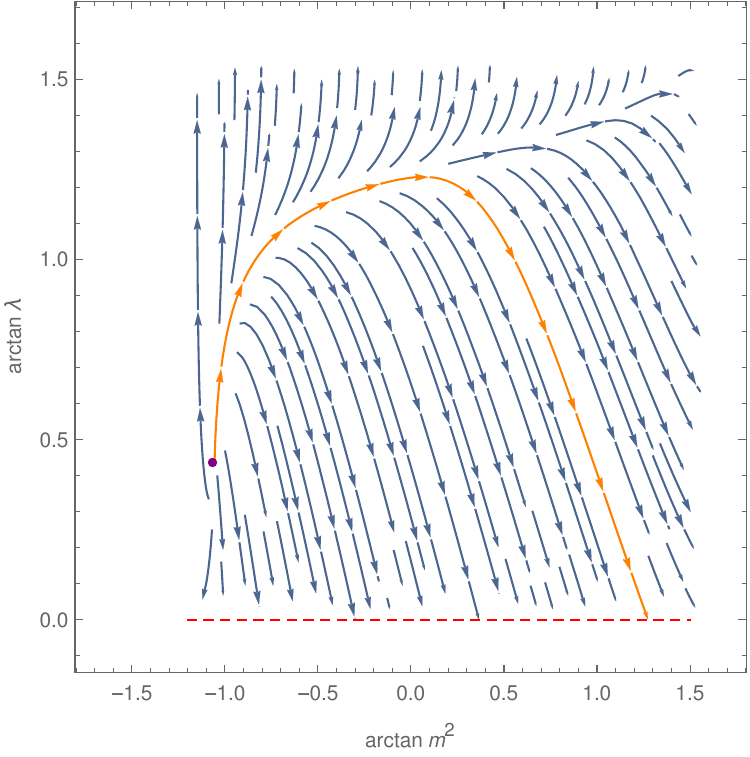}
    \caption{Scaling of the coupling constants on $\npads_4$ in the case where $\xi=0$. Note that the UV fixed point, depicted in purple, has a mass that is not allowed by the Breitenlohner-Freedman bound \cite{Breitenlohner:1982jf}.}
 \label{fig:scaling_xi=0}
\end{minipage}
\hspace{0.5cm}
\begin{minipage}[b]{0.45\linewidth}
      \centering
    \includegraphics[width=\textwidth]{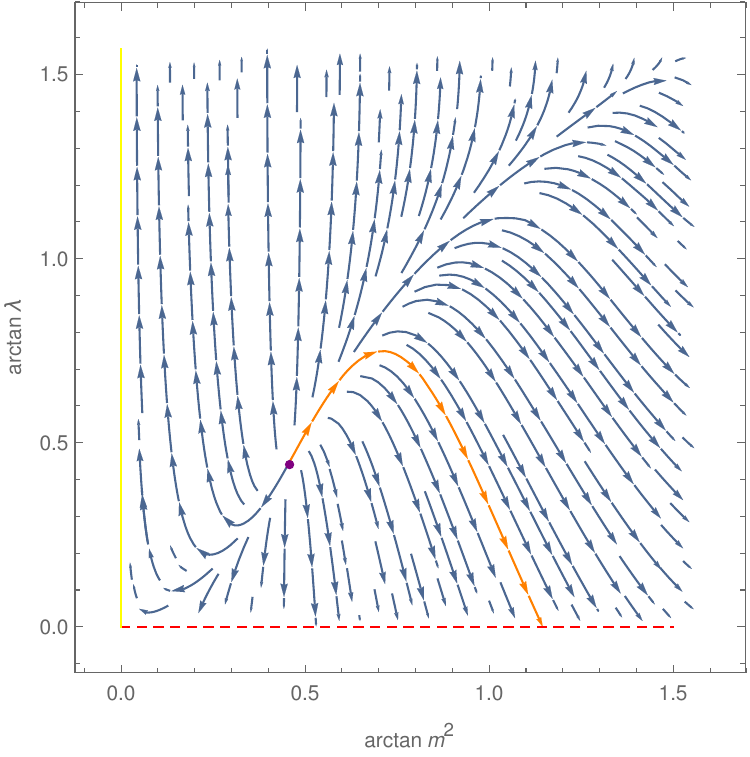}
    \caption{ Scaling of the coupling constants on $\npads_4$ in the case where $\xi=3/16$. Note that the negative values for the mass squared, to the left of the vertical yellow line, are not allowed.}
    \label{fig:scaling9-48}
\end{minipage}
\end{figure}

\begin{remark}
We briefly comment about the possibility of studying the scaling of the coupling constants with the prescriptions of boundary conditions different from those of Dirichlet type. Regarding Robin boundary conditions, a closed form for the 2-point correlation function, and consequently of the Hadamard parametrix, is not known, preventing the possibility of employing the methods that we have used in this Section. 

If Neumann boundary conditions are imposed at $\partial\npads_4$, even if there is a closed form expression for the 2-point correlation and for the Hadamard parametrix, another obstruction arises. Indeed the Neumann boundary conditions can be imposed for $\nu\in [0,1)$, see \cite{dappia_ads}. Having fixed the values of $\widetilde{m}^2,\widetilde{\lambda}$ at an energy scale $\Lambda$, we may apply the same analysis outlined in this Section, writing a system analogous to the one in Equation \eqref{eq:betafunct_pads} for the scaling of the coupling constants. Yet, we cannot expect that $\nu_k$ will continue to lie in $[0,1)$ for every $k$, causing an ambiguity on which class of boundary conditions should be applied at $\partial\npads_4$.
\end{remark}

\section*{Acknowledgments}
We are grateful to Edoardo d'Angelo and to Nicolò Drago for useful discussions especially on the content of \cite{mother}. The work of L.S. is supported by a PhD fellowship of the University of Pavia and part of this work is based on the MSc thesis of F.N. In addition C.D and L.S. acknowledge the support of the GNFM of INDAM.


\end{document}